\documentclass[floatfix,reprint,pra,aps,twocolumn,superscriptaddress,longbibliography, nobibnotes]{revtex4-2}
\usepackage[utf8]{inputenc}
\usepackage{float}
\usepackage{amsmath}

\usepackage{dcolumn}
\usepackage{bm}
\usepackage{amsmath,amssymb,amsthm,bm,amsfonts,bbm}

\newtheorem{lemma}{Lemma}

\usepackage{float}

\usepackage{graphicx}
\usepackage{dcolumn}
\usepackage{bm}
\renewcommand{\vec}[1]{\bm{#1}}
\usepackage{calrsfs}
\usepackage{booktabs}
\usepackage{wrapfig}
\usepackage{physics}
\usepackage{bbold}
\usepackage{quantikz}
\usepackage[colorlinks=true,linkcolor=blue,urlcolor=blue,citecolor=blue]{hyperref}
\usepackage{amsmath}
\usepackage{amssymb}
\usepackage[cal=cm]{mathalfa}
\usepackage{booktabs} 
\usepackage{multirow}
\usepackage{tikz}

\begin{document}
\setcounter{secnumdepth}{2}

\title{Testing and Diagnosing Parameter Miscalibration in Probabilistic Error Cancellation via Direct Fidelity Estimation}

\author{Francesc Sabater}
\email{francesc.sabater-garcia@tum.de}
\affiliation{Technical University of Munich, TUM School of Natural Sciences, Physics Department, 85748 Garching, Germany}
\affiliation{BMW Group, Munich, Germany}

\author{Tristan Kraft}
\affiliation{Technical University of Munich, TUM School of Natural Sciences, Physics Department, 85748 Garching, Germany}
\affiliation{Munich Center for Quantum Science and Technology (MCQST), Schellingstr. 4, 80799 München, Germany}

\author{Carlos A. Riofr\'io}
\affiliation{BMW Group, Munich, Germany}

\author{Barbara Kraus}
\affiliation{Technical University of Munich, TUM School of Natural Sciences, Physics Department, 85748 Garching, Germany}
\affiliation{Munich Center for Quantum Science and Technology (MCQST), Schellingstr. 4, 80799 München, Germany}

\date{\today}

\begin{abstract}
Probabilistic Error Cancellation (PEC) is a standard technique for mitigating noise in quantum computations, but its unbiasedness relies on accurate knowledge of the underlying noise parameters. In practice, these parameters are inferred from experimental data and may be inaccurate. We introduce a protocol that tests PEC parameter consistency on hardware by applying Direct Fidelity Estimation (DFE) to the effective mitigated state. The resulting DFE overlap serves not only as a witness of parameter miscalibration but, under appropriate conditions, also reveals whether the PEC parameters have been overestimated or underestimated. We establish this behavior for arbitrary circuits under global depolarizing noise and for Clifford circuits under general Pauli-stochastic noise. For arbitrary circuits subject to Pauli-stochastic noise, the underestimation result holds generally, while the corresponding overestimation result is established perturbatively for sufficiently small consistent miscalibrations under an additional nondegeneracy condition. We further show that the purity of the effective mitigated operator provides a more general diagnostic, allowing the miscalibration direction to be identified for arbitrary circuits and arbitrary parameter deviations. Finally, we describe a two-copy PEC procedure for measuring this quantity and show that, together with the DFE overlap, it determines the Hilbert--Schmidt distance from the ideal state. The sampling overhead incurred by our protocol is the same as that of PEC, and we discuss strategies to improve it.
\end{abstract}

\maketitle
\section{Introduction}
\label{sec: Introduction}

As quantum computing progresses toward the early fault-tolerant regime, noise remains a dominant challenge limiting the reliability of quantum computations~\cite{early_fault_tolerant,eisert2025mindgapsfraughtroad}. Current quantum processors remain susceptible to both coherent and incoherent errors arising from imperfect gate operations and environmental decoherence~\cite{Preskill2018quantumcomputingin,nisq_algorithms}. Accurately characterizing these errors is essential both for certifying quantum computations and for developing effective quantum error mitigation (QEM) strategies~\cite{QEM_review}. Recent works have further highlighted the importance of validating the noise models underlying error-mitigation procedures, including approaches to validate error-mitigated observable estimates~\cite{Barron2026} and cryptographically verifiable implementations of probabilistic error cancellation in delegated-computation settings~\cite{Yang2026}.

Among the most powerful noise-aware QEM techniques is Probabilistic Error Cancellation (PEC)~\cite{Temme_2017,Endo_2018,van_den_Berg_2023}, which stochastically inverts the noise channel at each gate to recover an unbiased estimate of the ideal expectation value of some observable. Unlike methods that do not require an explicit characterization of the gate-level noise, such as Zero Noise Extrapolation~\cite{Temme_2017} and data-driven methods based on Clifford or near-Clifford training circuits~\cite{Czarnik_2021,Strikis2021,Lowe2021}, the unbiasedness guarantee of PEC relies on accurate knowledge of the underlying noise model. In practice, however, the noise parameters associated with each gate are inferred from experimental data and may be biased, either through finite-sample uncertainty or through incorrect modeling assumptions such as Pauli-stochastic noise, neglected non-Markovian correlations, locality violations, or state-preparation and measurement (SPAM) errors~\cite{bounding_PEC_error}. This leads to a miscalibration of the PEC procedure. The resulting bias in the mitigated estimator has been analyzed and bounded in Refs.~\cite{bounding_PEC_error,Jin_2025}. A complementary question has so far remained largely unexplored: \textit{how can one efficiently test, directly on hardware, whether the noise parameters used for PEC are sufficiently accurate, and, if they are not, determine the direction of their miscalibration?}

Existing verification approaches address related but distinct problems. Techniques such as gate-set tomography~\cite{Nielsen2021gatesettomography} and Pauli channel learning~\cite{Harper2020Efficient,Chen2023LearnabilityPauliNoise,FlammiaWallman2020} characterize noise at the level of individual gates, but do not directly validate the end-to-end mitigation procedure. The protocol of Ref.~\cite{carrasco2024} verifies consistency between the measurement outcome distributions of the noisy and ideal circuits. Here, we pursue a complementary objective: an end-to-end test of the PEC procedure based on the overlap between the ideal state and the effective mitigated state. This approach is not restricted to a single fixed measurement distribution and can additionally provide information about the direction of parameter miscalibration.

In this work, we show that Direct Fidelity Estimation (DFE)~\cite{Flammia_2011,daSilva2011}, applied to the effective mitigated state, provides such a diagnostic. The central observation is that under the conditions specified below the DFE overlap, $\mathrm{Tr}(\rho \widetilde{\rho})$, between the ideal state $\rho$ and the effective mitigated state $\widetilde{\rho}$, which need not be positive semidefinite and is therefore, strictly speaking, an operator rather than a physical state, serves as a diagnostic for parameter miscalibration: a deviation of the DFE overlap from unity signals miscalibration of the error parameters. For simplicity, we therefore refer to $\widetilde{\rho}$ as the effective mitigated state in the following. Our key diagnostic result goes beyond mere testing: the \textit{direction} of the DFE overlap deviation from unity can reveal whether the error parameters have been overestimated or underestimated. This establishes a route from an experimentally measurable quantity to a diagnostic of PEC parameter miscalibration. Complementing the DFE-overlap result, we show that the purity of the effective mitigated state provides the full directional response for arbitrary circuits and arbitrary parameter deviations under consistent miscalibration. We also explain how this purity can be measured through pairs of physical PEC trajectories and how, together with the DFE overlap, it determines the Hilbert--Schmidt distance from the ideal state. An overview of the analytical results and their diagnostic interpretation is provided in Tables~\ref{tab:over_under_summary_theoretical}
and~\ref{tab:over_under_summary_experimental}.

The paper is structured as follows. Section~\ref{sec: overview of results} summarizes the protocol and main analytical results. Section~\ref{sec: preliminaries} introduces the PEC and DFE frameworks and establishes the notation used throughout. Section~\ref{sec: validation protocol} develops the experimental protocol: it constructs classically simulable surrogate circuits, derives the PEC-mitigated DFE-overlap estimator, introduces the two-copy purity estimator, and combines them into a  distance certificate. It also discusses strategies for reducing their PEC sampling overhead. Section~\ref{sec: calibrating} establishes the conditions under which the measured overlap and purity diagnose the direction of parameter miscalibration. Finally, Sec.~\ref{sec: applications} demonstrates the protocol numerically for Matchgate circuits with combined crosstalk and over-rotation errors and for global depolarizing noise, which are relevant noise models across different experimental platforms ~\cite{Haeffner2008,Boixo2018,Urbanek2021}.

\begin{figure}[t]
\centering
\includegraphics[width=\linewidth]{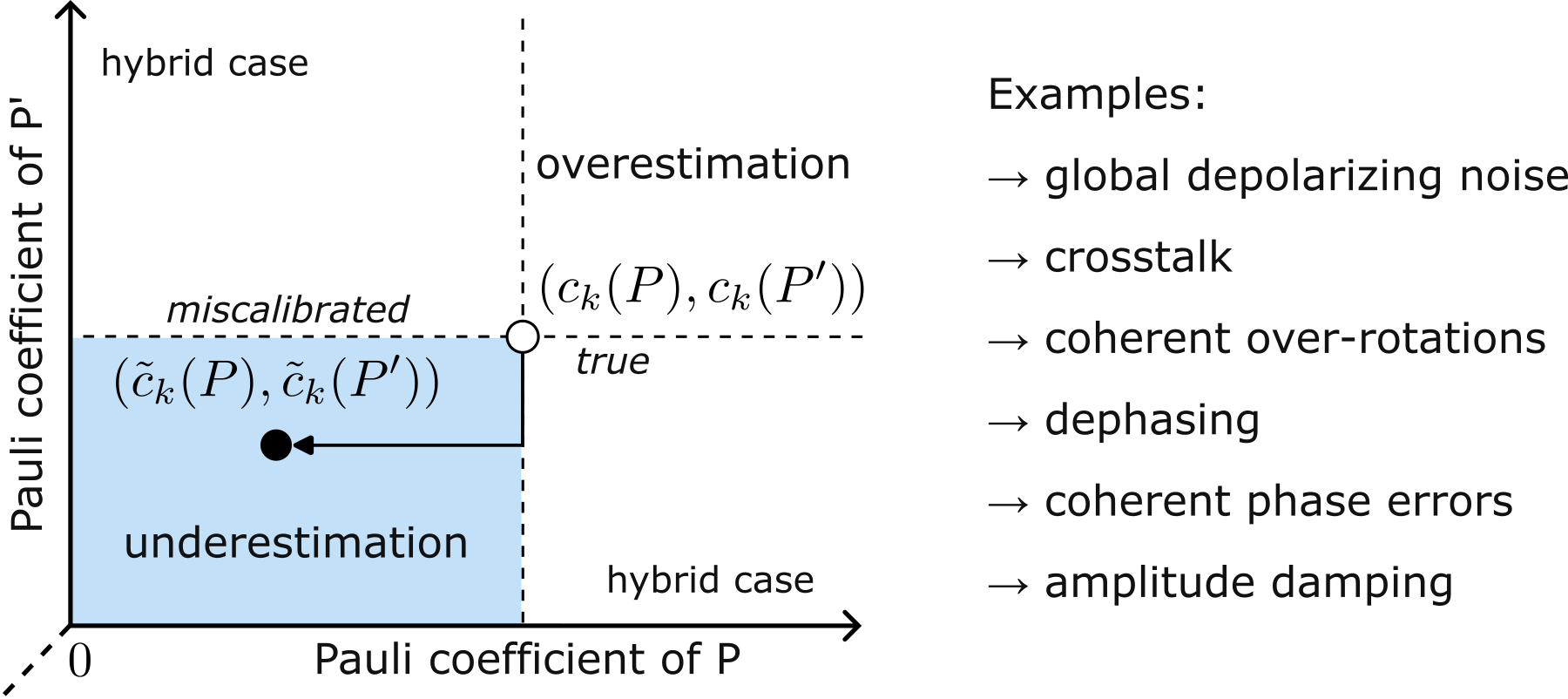}
\caption{
Illustration of over- and underestimation for a single error source. For a stochastic Pauli error channel, see Eq.~\eqref{eq: stochastic pauli channel}, underestimation is defined by $\widetilde c_k(P)\leq c_k(P)$ for all nonidentity Pauli operators $P$, with at least one strict inequality, while overestimation is defined by the corresponding reversed inequalities. For the physical error models considered in this work, the miscalibration of their corresponding physical parameters implies these coefficientwise conditions, either exactly or, for the coherent-error models, within the stated perturbative regime.
When several error sources are concatenated, we apply these definitions separately to each constituent source. We call the resulting miscalibration \emph{consistent} if all constituent sources are underestimated or all are overestimated according to the coefficientwise definition above; if different sources are miscalibrated in opposite directions, we refer to the case as \emph{hybrid} miscalibration. Importantly, coefficientwise ordering is not, in general, preserved for the Pauli coefficients of the concatenated channel. Instead, the constituent-level conditions imply the corresponding inequalities for the effective Pauli fidelities (see main text). These effective Pauli-fidelity conditions are preserved under concatenation and are the relevant quantities for the analytical results below.
}
\label{fig:over_under_definition}
\end{figure}

\section{Overview of results}
\label{sec: overview of results}

The central quantity in our analysis is the DFE overlap
\begin{equation}
Y \equiv \operatorname{Tr}(\rho\widetilde{\rho}),
\end{equation}
where $\rho=\ketbra{\psi}$ is the ideal target state and $\widetilde{\rho}$ is the effective operator produced by PEC. Despite the fact that $\widetilde{\rho}$ need not be positive semidefinite under parameter miscalibration, we sometimes also call it effective state. Note that, in case $\widetilde{\rho}$ is not a valid state, $Y$ is not necessarily a fidelity and may lie outside $[0,1]$. Experimentally, $Y$ is estimated by the DFE estimator $\widehat Y$. The role of the protocol is therefore to use the observed value of $\widehat Y$ to test whether the effective mitigated result is consistent with ideal error cancellation (Sec.~\ref{sec: validation protocol}) and, under additional assumptions, to diagnose the direction of parameter miscalibration (Sec.~\ref{sec: calibrating}).

For universal target circuits, the ideal Pauli expectation values required to estimate $Y$ are generally not classically accessible. The protocol therefore uses a classically simulable surrogate chosen to preserve as closely as possible the structure and hardware-relevant features of the target circuit. PEC is executed for this surrogate on hardware, and $Y$ is estimated for the resulting effective mitigated state. The surrogate constructions and the circuit features they retain are discussed in Sec.~\ref{sec: surrogate}.

For a single error source we define over- and underestimation at the level of the Pauli coefficients, as described below and illustrated in Fig.~\ref{fig:over_under_definition}. We formulate these definitions for stochastic Pauli channels, see Eq.~\eqref{eq: stochastic pauli channel}, since Pauli twirling provides a natural reduction of a general noise channel to its Pauli component. For many physically relevant error models, miscalibration of the underlying physical parameter---the depolarizing rate being one example---induces these coefficientwise conditions, either exactly or in a perturbative regime.

Throughout the directionality analysis, we restrict to the regime in which the Pauli fidelities of the true and estimated noise channels are strictly positive. A simple sufficient condition is that the identity coefficients of both channels are larger than $1/2$, as shown in Appendix~\ref{appendix inverse channel}. When several error sources are present, this assumption is imposed on every constituent source. We call the miscalibration \emph{consistent} if all constituent sources are underestimated or all are overestimated according to the coefficientwise definition. These constituent-level conditions then imply
\begin{equation}
0<f_{k,P,\mathrm{eff}}\leq 1
\end{equation}
for consistent underestimation, and
\begin{equation}
f_{k,P,\mathrm{eff}}\geq 1
\end{equation}
for consistent overestimation, where
\begin{equation}
f_{k,P,\mathrm{eff}}
=
\frac{f_{k,P}}{\widetilde f_{k,P}}.
\end{equation}
Here, $f_{k,P}$ and $\widetilde f_{k,P}$ are the diagonal PTM entries of the true noise channel $\mathcal D_k$ and the estimated noise channel $\widetilde{\mathcal D}_k$, respectively; the inverse map $\widetilde{\mathcal D}_k^{-1}$ is the map used in PEC. These effective Pauli-fidelity conditions, unlike the coefficientwise ordering of the constituent channels, are preserved under concatenation. Cases in which different error sources are miscalibrated in opposite directions are referred to as hybrid miscalibration.

Our main analytical results establish implications from these over- and underestimation regimes to the theoretical overlap $Y$. For arbitrary circuits under global depolarizing noise, and more generally for Clifford circuits under stochastic Pauli noise, consistent underestimation implies $Y\leq1$, while consistent overestimation implies $Y\geq1$. The former result extends to arbitrary circuits under stochastic Pauli noise, while the latter persists perturbatively beyond Clifford circuits under the conditions specified below.

Another central quantity is the DFE purity
\begin{equation}
    \mathcal{P}
    \equiv
    \operatorname{Tr}(\widetilde{\rho}^{2}),
\end{equation}
which, in contrast to the purity of a quantum state, can exceed one when $\widetilde{\rho}$ is not positive semidefinite. Analogously to the DFE overlap, we will refer to it as the ``DFE purity'' in the following. The DFE purity provides a complementary and more general directional result: for arbitrary circuits and arbitrary parameter deviations, consistent underestimation implies $\operatorname{Tr}(\widetilde{\rho}^{2})\leq1$, whereas consistent overestimation implies $\operatorname{Tr}(\widetilde{\rho}^{2})\geq1$. These results are summarized in Table~\ref{tab:over_under_summary_theoretical}.

When hybrid miscalibration is excluded, the corresponding inequalities provide experimental diagnostics through both $\widehat{Y}$ and the DFE-purity estimator $\widehat{\mathcal{P}}$. A statistically significant value of either estimator above unity rules out consistent underestimation, whereas a value below unity rules out
consistent overestimation, subject to the assumptions of the corresponding analytical result. In particular, the DFE-purity diagnostic applies to arbitrary circuits and arbitrary parameter deviations, while the DFE-overlap diagnostic in the arbitrary-circuit overestimation regime requires the perturbative conditions specified below. The finite-sample diagnostics based on $\widehat{Y}$ and $\widehat{\mathcal{P}}$ are summarized in
Table~\ref{tab:over_under_summary_experimental}.

\begin{table*}[t]
\centering
\renewcommand{\arraystretch}{1.5}
\setlength{\tabcolsep}{14pt}

\begin{tabular*}{\textwidth}{@{\extracolsep{\fill}}cccc}
\hline

\textbf{Noise model}
& \textbf{Circuit class}
& \textbf{Miscalibration scenario}
& \textbf{Response} \\

\hline

\multirow{2}{*}{Global depolarizing noise}
& \multirow{2}{*}{Arbitrary circuit}
& Consistent overestimation
& $Y\geq 1$ \\

&
& Consistent underestimation
& $Y\leq 1$ \\

\hline

\multirow{8}{*}{Pauli-stochastic noise}
& \multirow{2}{*}{\shortstack[c]{One-layer arbitrary\\circuit}}
& Consistent overestimation
& $Y\geq 1$ \\

&
& Consistent underestimation
& $Y\leq 1$ \\

\cline{2-4}

& \multirow{2}{*}{Clifford circuit}
& Consistent overestimation
& $Y\geq 1$ \\

&
& Consistent underestimation
& $Y\leq 1$ \\

\cline{2-4}
\noalign{\vskip 0.5ex}

& \multirow{2}{*}{Arbitrary circuit}
& \shortstack[c]{Consistent overestimation\\
with small deviations}
& \shortstack[c]{$Y\geq 1$\\
(perturbative result only)} \\

&
& Consistent underestimation
& $Y\leq 1$ \\

\cline{2-4}

& \multirow{2}{*}{Arbitrary circuit}
& Consistent overestimation
& $\operatorname{Tr}(\widetilde{\rho}^{2})\geq 1$ \\

&
& Consistent underestimation
& $\operatorname{Tr}(\widetilde{\rho}^{2})\leq 1$ \\

\hline
\end{tabular*}

\caption{
Summary of the DFE-overlap and DFE-purity responses. For a concatenated noise model, we call the miscalibration consistent when all constituent channels are miscalibrated in the same direction according to the single-channel definitions of Sec.~\ref{sec: directionality general}.
The DFE-overlap bound for arbitrary circuits in the overestimation regime is only valid for sufficiently small deviations, and additionally requires the first-order coefficient $A$ in Eq.~\eqref{eq: perturbative DFE coefficient} to be strictly positive. In contrast, the corresponding DFE-purity bounds and the arbitrary-circuit underestimation bound do not require this perturbative restriction. Hybrid cases, in which different constituent channels are miscalibrated in opposite directions, are not covered by the analytical results. For the precise single-channel definitions of underestimation and overestimation, see Sec.~\ref{sec: directionality general} or Fig.~\ref{fig:over_under_definition}.
}
\label{tab:over_under_summary_theoretical}
\end{table*}

\begin{table*}[t]
\centering
\renewcommand{\arraystretch}{1.45}
\setlength{\tabcolsep}{10pt}

\begin{tabular*}{\textwidth}{@{\extracolsep{\fill}}cccc}
\hline
\noalign{\vskip 0.8ex}

\textbf{Noise model}
& \textbf{Circuit class}
& \textbf{Measurement}
& \textbf{\shortstack[c]{Diagnostic\\interpretation}} \\

\noalign{\vskip 0.5ex}
\hline

\multirow{2}{*}{Global depolarizing noise}
& \multirow{2}{*}{Arbitrary circuit}
& $\widehat{Y}>1+2\epsilon$
& Consistent overestimation \\

&
& $\widehat{Y}<1-2\epsilon$
& Consistent underestimation \\

\hline

\multirow{8}{*}{Pauli-stochastic noise}
& \multirow{2}{*}{\shortstack[c]{One-layer arbitrary\\circuit}}
& $\widehat{Y}>1+2\epsilon$
& Consistent overestimation \\

&
& $\widehat{Y}<1-2\epsilon$
& Consistent underestimation \\

\cline{2-4}

& \multirow{2}{*}{Clifford circuit}
& $\widehat{Y}>1+2\epsilon$
& Consistent overestimation \\

&
& $\widehat{Y}<1-2\epsilon$
& Consistent underestimation \\

\cline{2-4}

& \multirow{2}{*}{Arbitrary circuit}
& $\widehat{Y}>1+2\epsilon$
& Consistent overestimation \\

&
& $\widehat{Y}<1-2\epsilon$
& \shortstack[c]{Consistent underestimation expected\\
in the small-deviation regime} \\

\cline{2-4}

& \multirow{2}{*}{\shortstack[c]{Arbitrary circuit\\
(purity diagnostic)}}
& $\widehat{\mathcal{P}}>1+\epsilon_{\mathrm{pur}}$
& Consistent overestimation \\

&
& $\widehat{\mathcal{P}}<1-\epsilon_{\mathrm{pur}}$
& Consistent underestimation \\

\hline
\end{tabular*}

\caption{
Summary of the experimentally accessible DFE-overlap and
DFE-purity diagnostics. The DFE-overlap estimator
$\widehat{Y}$ satisfies
$\operatorname{Pr}[|\widehat{Y}
-\operatorname{Tr}(\widetilde{\rho}\rho)|
\leq2\epsilon]\geq1-2\delta$, whereas the DFE-purity estimator
$\widehat{\mathcal{P}}$ satisfies
$\operatorname{Pr}[|\widehat{\mathcal{P}}
-\operatorname{Tr}(\widetilde{\rho}^{2})|
\leq\epsilon_{\mathrm{pur}}]
\geq1-\delta_{\mathrm{pur}}$ when the number of paired
measurements satisfies
Eq.~\eqref{eq:purity_pec_samples}. Consequently, the
thresholds displayed in the table establish, at the
corresponding confidence level, that the exact overlap or
purity lies above or below one. The table assumes that the
possible deviations are restricted to consistent
overestimation and consistent underestimation; hybrid
miscalibration is not covered. For stochastic Pauli noise
and arbitrary circuits, interpreting
$\widehat{Y}<1-2\epsilon$ as evidence of underestimation
relies on the perturbative small-deviation result for
overestimation. In contrast, the DFE-purity diagnostic applies
to arbitrary circuits and arbitrary parameter deviations
under the stated effective-Pauli-fidelity conditions.
}
\label{tab:over_under_summary_experimental}
\end{table*}

\section{Preliminaries}
\label{sec: preliminaries}
In this section we introduce the technical ingredients used throughout the paper. We first review probabilistic error cancellation (PEC) and introduce the notation for the true and estimated noise parameters, including the effective mitigated state resulting from their mismatch. We then recall direct fidelity estimation (DFE) and the sampling estimator used in the remainder of the paper.

\subsection{Imperfect Probabilistic Error Cancellation}
\label{sec: PEC}

\begin{figure*}[t]
    \centering
    \includegraphics[width=\linewidth]{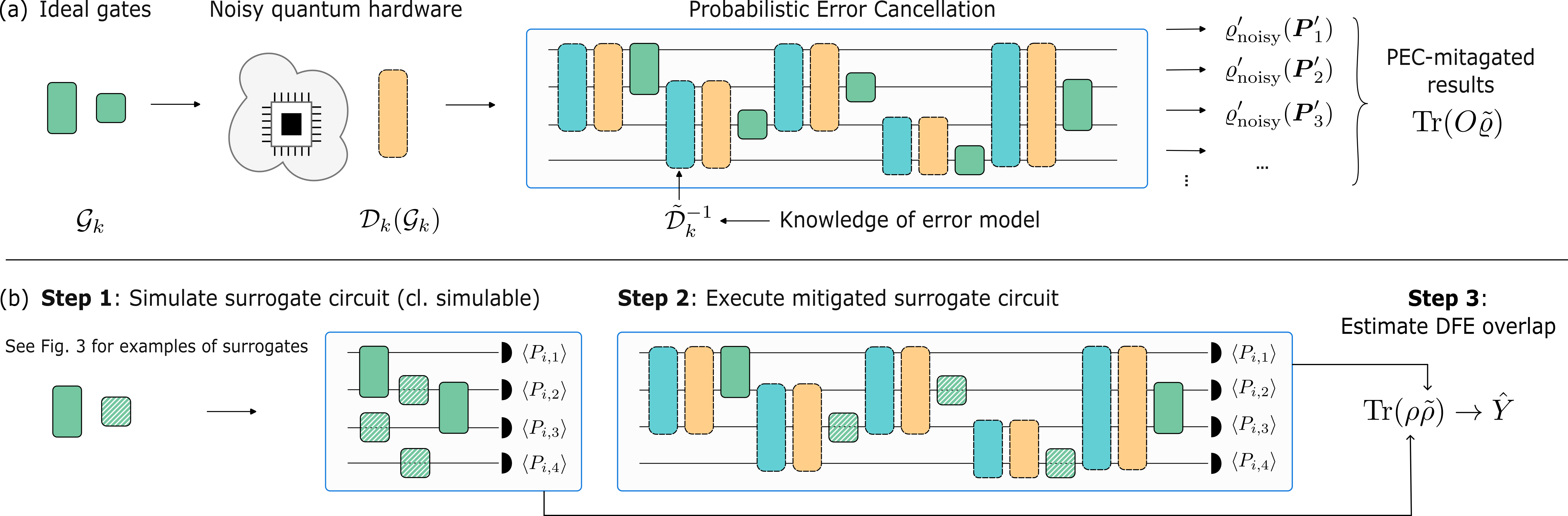}
    \caption{
    (a) The standard PEC procedure. Given a gate-level noise model $\mathcal{D}_k(\mathcal{G}_k)$ for the noisy hardware, PEC probabilistically applies a quasiprobability decomposition of the inverse noise map, such that averaging over the resulting circuits recovers the ideal expectation value when the noise model is correctly specified~\cite{Temme_2017,van_den_Berg_2023}. If the noise model parameters are miscalibrated, the resulting PEC estimator is generally biased~\cite{Jin_2025}.
    (b) Overview of the protocol for estimating the DFE overlap. Our protocol proceeds in three steps. Step 1: The target circuit is mapped to a classically simulable surrogate. Step 2: The surrogate circuit is executed on noisy hardware using PEC. Step 3: DFE is used to estimate the DFE overlap $\operatorname{Tr}(\rho\widetilde{\rho})\mapsto \widehat Y$ between the ideal surrogate state and the effective PEC-mitigated state by sampling Pauli observables. The resulting estimate can then be used to witness miscalibration of the error parameters as described in Sec.~\ref{sec: calibrating} and summarized in Table~\ref{tab:over_under_summary_experimental}.
    }
\label{fig:overview}
\end{figure*}

Probabilistic Error Cancellation is a quantum error mitigation technique to estimate the noiseless expectation value of an observable by stochastically inverting the noise channel associated with each noisy gate~\cite{Temme_2017}, see Fig.~\ref{fig:overview}(a) for an illustration. The central idea is to expand noise-free operations as linear combinations of implementable noisy operations, allowing one to statistically undo the effect of noise across multiple circuit executions. To achieve this inversion, it is essential to know both the parameters and the noise model characterizing each noisy operation---that is, one must have a detailed and accurate description of how noise acts on the system.

Throughout this section, and throughout the rest of the manuscript, we focus on gate noise and do not explicitly model state-preparation and measurement (SPAM) errors. These effects can be treated separately using dedicated calibration and mitigation methods; in particular, readout errors can be tackled using measurement-error mitigation~\cite{Bravyi2021Measurement}.

Consider an ideal circuit $U$ that we aim to implement, consisting of $s$ gates denoted by $G_k$,
\begin{equation}
    U = \prod_{k=1}^s G_k.
\end{equation}
Due to hardware noise, we do not have access to the noiseless operation $\mathcal{G}_k(\cdot) = G_k (\cdot) G_k^\dagger$, but rather to a noisy version of it. We assume that the noise affecting each gate can be described by a gate-dependent stochastic Pauli channel that acts on a finite number of qubits $n$. Such an assumption is justified, since any error channel can be converted into a Pauli stochastic one through Pauli twirling~\cite{van_den_Berg_2023,Wallman_2016}. The noisy implementation of the gate is then given by
\begin{align}
\label{eq: stochastic pauli channel}
\mathcal{D}_k(\mathcal{G}_k)(\cdot)
&= \sum_{P \in \mathcal{P}_n} c_k(P)\, P G_k (\cdot) G_k^\dagger P, 
\end{align}
where the sum runs over all Pauli operators of the Pauli group acting on $n$ qubits, $\mathcal{P}_n = \{I, X, Y, Z\}^{\otimes n}$. The $4^n$ coefficients $c_k(P) \ge 0$ satisfy $\sum_{P \in \mathcal{P}_n} c_k(P) = 1$, and they fully characterize the noise channel associated with gate $k$.

By inserting the inverse of the noise channel $\mathcal{D}_k^{-1}$, one can recover the ideal operation $\mathcal{G}_k$, 
\begin{align}\label{eq:idelG}
\mathcal{G}_k(\cdot) 
&= \mathcal{D}_k^{-1} \mathcal{D}_k(\mathcal{G}_k)(\cdot) =\sum_{P' \in \mathcal{P}_n}a_k(P')\mathcal{D}_k(\mathcal{P'}\mathcal{G}_k)(\cdot) \notag\\
&=\!\!\!\sum_{P' \in \mathcal{P}_n}\! \sum_{P \in \mathcal{P}_n} 
\!a_k(P')\, c_k(P)\!\, P' P G_k(\cdot) G_k^\dagger P P', 
\end{align}
where we assume that we can implement Pauli gates without modifying the original noise channel coefficients $c_k(P)$. This can be interpreted as a linear combination weighted by the coefficients $a_k(P')$ of noisy operations $\mathcal{D}_k(\mathcal{P'}\mathcal{G}_k)$ that one can implement on the noisy hardware through $\mathcal{P'}\mathcal{G}_k$.  Here, the inverse noise map is itself a Pauli map with coefficients $a_k(P')$. These coefficients depend on the original parameters $c_k(P)$ and can be determined by solving the system of equations defined by
\begin{equation}
    \label{system equations inverse coefficient}
    \mathcal{I} = \mathcal{D}_k^{-1} \mathcal{D}_k=\sum_{P' \in \mathcal{P}_n} \sum_{P \in \mathcal{P}_n} a_k(P')\, c_k(P)\, \mathcal{P'}\mathcal{P},
\end{equation}
which yields $4^n$ equations that determine the inverse coefficients $a_k(P')$. More generally, the inverse map exists if and only if none of the Pauli fidelities $f_{k,P}$ vanishes. The inverse map $\mathcal{D}_k^{-1}$ exists if and only if
\begin{equation}
    f_{k,P}\neq 0
    \qquad
    \text{for every } P\in\mathcal{P}_n.
\end{equation}
Appendix~\ref{appendix inverse channel} shows that the stronger condition $c_k(I^{\otimes n})>1/2$ is sufficient, since it guarantees $f_{k,P}>0$ for every $P$ and therefore also places the channel in the positive-fidelity regime used in our directionality analysis. The coefficients of the inverse map satisfy $\sum_{P'\in\mathcal{P}_n}a_k(P')=1$, but need not all be nonnegative. Consequently, the inverse map is generally not completely positive and does not represent a physical quantum channel. In PEC, the implementable operations are instead sampled with probabilities proportional to $|a_k(P')|$, while the corresponding signs are incorporated into the estimator, as detailed below.

In practice, the noise parameters $c_k(P)$ are unknown and need to be reconstructed as finite-sample estimates from experimental data, for instance through process tomography. Therefore, the noise parameters used for the PEC procedure $\widetilde{c}_k(P)$, can be \emph{miscalibrated}: they carry statistical fluctuations and, as mentioned in the introduction, can additionally be biased by several factors. 

When PEC is applied using the miscalibrated parameters $\widetilde{c}_k(P)$, one effectively introduces an inverse map $\widetilde{\mathcal{D}}_k^{-1}$ characterized by coefficients $\widetilde{a}_k(P')$. Consequently, PEC performed under miscalibrated coefficients does not recover the ideal operation as in Eq.~\eqref{eq:idelG}, but instead produces an effective residual map
\begin{align}
\widetilde{\mathcal{G}}_k(\cdot) 
&= \widetilde{\mathcal{D}}_k^{-1} \mathcal{D}_k(\mathcal{G}_k)(\cdot)
= \mathcal{D}_{k,\mathrm{eff}}(\mathcal{G}_k)(\cdot) \notag\\
&= \! \! \!\sum_{P' \in \mathcal{P}_n} \! \sum_{P \in \mathcal{P}_n} 
\widetilde{a}_k(P')\, c_k(P) \!\, P' P G_k(\cdot) G_k^\dagger P P'.
\end{align}
Importantly, this map, resulting from an incorrect inversion, is not guaranteed to be completely positive.

When applying PEC, one effectively inverts the noise affecting each gate $k$. After inverting the estimated noise channel at each gate, the resulting effective operation is given by
\begin{equation}
\label{eq: final unitary}
\widetilde{\mathcal{U}}(\cdot) = \prod_{k=1}^s \widetilde{\mathcal{G}}_k(\cdot)
= \sum_{\vec{P}'} \widetilde{a}(\vec{P}')\, \mathcal{U}'_{\text{noisy}}(\vec{P}')(\cdot),
\end{equation}
where $\widetilde{a}(\vec{P}') = \widetilde{a}_1(P'_1)\widetilde{a}_2(P'_2)\dots\widetilde{a}_s(P'_s)$, and the sum runs over all possible combinations of Pauli operators $\vec{P}'$ across all gates. Consequently, the summation contains exponentially many terms in the number of gates $s$. Here, $\mathcal{U}'_{\text{noisy}}(\vec{P}')$ denotes the noisy implementation of the modified circuit $U$ with the inserted Paulis $\vec{P}'$ between layers, i.e.,
\begin{equation}
    \mathcal{U}'_{\text{noisy}}(\vec{P}')(\cdot) = \prod_{k=1}^s \mathcal{D}_k(\mathcal{P}'_k \mathcal{G}_k)(\cdot).
\end{equation}
To work with the resulting operator, or state, we use Eq.~(\ref{eq: final unitary}) to write the resulting quantum state after applying each operation as
\begin{equation}
    \widetilde{\rho} = \sum_{\vec{P}'} \widetilde{a}(\vec{P}')\, \rho'_{\text{noisy}}(\vec{P}').
\end{equation}
Thus, PEC defines a (not necessarily preparable) effective operator $\widetilde{\rho}$ as a linear combination of experimentally preparable noisy states, $\rho'_{\text{noisy}}(\vec{P}')$. If the noise parameters are correct, $\widetilde{c}_k(P)=c_k(P)$ for all $k,P$, then $\widetilde{\rho}=\rho$, where $\rho$ is the ideal output state. If the noise parameters are miscalibrated, then generally $\widetilde{\rho}\neq\rho$, and $\widetilde{\rho}$ may even fail to be positive semidefinite because the linear combination may contain negative coefficients.

The effective state $\widetilde{\rho}$ is not physically accessible, since it is in general not possible to experimentally prepare a linear combination of quantum states. Nevertheless, one can always access the expectation value of any observable of interest $O$ through the noisy expectation values of the modified circuits,
\begin{equation}
\label{eq: O PEC}
    \mathrm{Tr}[O\widetilde{\rho}] = \sum_{\vec{P}'} \widetilde{a}(\vec{P}')\, \mathrm{Tr}[O\rho'_{\text{noisy}}(\vec{P}')].
\end{equation}
The central idea of PEC is to recast this sum, which contains exponentially many terms, so that it can be evaluated through sampling~\cite{Temme_2017}. To this end, the magnitudes of the coefficients are turned into a probability distribution, and their signs are taken care of separately. Concretely, we introduce probabilities $\widetilde{p}(\vec{P}') = |\widetilde{a}(\vec{P}')|/\widetilde{Q}$, where
\begin{equation}
\label{eq: Q}
\widetilde{Q} = \sum_{\vec{P}'} |\widetilde{a}(\vec{P}')| 
= \sum_{P_1'} |\widetilde{a}_1(P_1')| \dots \sum_{P_s'} |\widetilde{a}_s(P_s')|.
\end{equation}
With this, Eq.~(\ref{eq: O PEC}) can be rewritten as
\begin{equation}
\label{eq: O PEC prob}
    \mathrm{Tr}[O\widetilde{\rho}] 
    = \widetilde{Q} \sum_{\vec{P}'} \widetilde{p}(\vec{P}')\, 
    \mathrm{sign}[\widetilde{a}(\vec{P}')]\,
    \mathrm{Tr}[O\rho'_{\text{noisy}}(\vec{P}')],
\end{equation}
where $\widetilde{p}(\vec{P}')$ is now a probability distribution, whose probabilities are proportional to the magnitudes $|\widetilde{a}(\vec{P}')|$, and the sign is taken care of independently.
Therefore, one obtains the PEC estimator by independently sampling one modified circuit $\vec{P}'_j\sim\widetilde{p}$ on each shot, executing it once, and measuring the observable $O$~\cite{Temme_2017,Endo_2018}. Denoting the resulting single-shot measurement outcome by $\mu_j(O)$, the PEC estimator is
\begin{equation}
    \label{eq: OPEC}
    \widetilde{O}_{\mathrm{PEC}}
    =
    \frac{\widetilde{Q}}{N_{\mathrm{shots}}}
    \sum_{j=1}^{N_{\mathrm{shots}}}
    \operatorname{sign}
    \!\left[
        \widetilde{a}(\vec{P}'_j)
    \right]
    \mu_j(O),
    \qquad
    \vec{P}'_j\sim\widetilde{p}.
\end{equation}
Here, each sampled modified circuit $\vec{P}'_j$ is executed only once, and $\mu_j(O)$ denotes the outcome of the corresponding single-shot measurement of $O$. In the case of correct error parameters, we recover the same expression without tildes, and the mitigated estimator is unbiased~\cite{Temme_2017}:
\begin{equation}
    \lim_{N_{\text{shots}} \to \infty} O_{\mathrm{PEC}} = \mathrm{Tr}[O\rho].
\end{equation}
However, when using miscalibrated error parameters, the estimator exhibits a finite bias~\cite{bounding_PEC_error}. For an observable with bounded outcomes, the PEC estimator has variance of order $\mathcal{O}(\widetilde{Q}^{\,2})$. Therefore, achieving a root-mean-square statistical error of at most $\xi$ requires
\begin{equation}
    N_{\mathrm{shots}}
    =
    \mathcal{O}\!\left(
        \frac{\widetilde{Q}^{\,2}}{\xi^2}
    \right).
\end{equation}
Hence, $\widetilde{Q}^{\,2}$ represents the sampling overhead relative to the unmitigated estimator.

From Eq.~\eqref{eq: Q}, it follows that the sampling overhead scales exponentially with the number of gates. This result was already established for the PEC method~\cite{Temme_2017,Endo_2018} and is consistent with more general findings showing that any QEM method incurs an exponential overhead with circuit size in the worst case~\cite{boundEM,Quek_2024}. Nevertheless, the gate error rate appears as a prefactor in the exponent, making QEM methods as well as PEC feasible for circuit sizes on the order of the inverse gate error~\cite{zimboras2025mythsquantumcomputationfault,aharonov2025importanceerrormitigationquantum}.

\subsection{Direct Fidelity Estimation}
\label{sec: DFE}
DFE is a protocol designed to estimate the fidelity between a pure, known quantum state---in our case, $\rho=\ketbra{\psi}$, obtained by applying the ideal quantum circuit $U$ to a given pure input state---and an unknown, possibly mixed quantum state~\cite{Flammia_2011,daSilva2011}. In this work, we want to apply the DFE protocol to the effective mitigated state $\widetilde{\rho}$, which may not correspond to a valid physical quantum state. The DFE procedure remains applicable since it only requires access to the expectation values of observables with respect to $\widetilde{\rho}$, rather than direct access to $\widetilde{\rho}$ itself. It does, however, require the ideal Pauli expectation values $\mathrm{Tr}[P\rho]=\bra{\psi}P\ket{\psi}$ as classical side information, since these define both the sampling distribution and the estimator below. Later, we will therefore utilize surrogate circuits for which $\rho$ remains classically tractable, such that the required Pauli expectation values can be computed or estimated classically efficiently (e.g., a Clifford, or Matchgate circuit).

The DFE overlap between $\rho$ and $\widetilde{\rho}$ can be written as  $\mathrm{Tr}(\rho\widetilde{\rho})$, since $\rho$ is a pure state. The protocol is then based on expanding both states in the Pauli basis, e.g.,
\begin{equation}
    \rho = \sum_{P \in \mathcal{P}_N} \chi_{\rho}(P)\frac{P}{\sqrt{2^N}},
\end{equation}
where $N$ is the total number of qubits, and $\chi_{\rho}(P) = \mathrm{Tr}[P\rho]/\sqrt{2^N}$, and similarly for $\widetilde\rho$. The DFE overlap can then be expressed as
\begin{equation}
    \mathrm{Tr}(\rho\widetilde{\rho}) = \sum_{P \in \mathcal{P}_N} \chi_{\widetilde{\rho}}(P)\chi_{\rho}(P)
    = \sum_{P \in \mathcal{P}_N} \frac{\chi_{\widetilde{\rho}}(P)}{\chi_{\rho}(P)}\, \chi_{\rho}^2(P).
\end{equation}
A key element of the DFE method is the observation that $\chi_{\rho}^2(P)$ defines a discrete probability distribution over the Pauli group, since $\chi_{\rho}^2(P) \geq 0$ for all $P$, and $\mathrm{Tr}(\rho^2)=\sum_{P \in \mathcal{P}_N}\chi_{\rho}^2(P) = 1$, since $\rho$ is a pure state. Pauli operators for which $\chi_\rho(P)=0$ have zero sampling probability and are therefore omitted from the support of the distribution. One can then define the estimator
\begin{equation}
    \label{eq: Xp}
    X_P = \frac{\chi_{\widetilde{\rho}}(P)}{\chi_{\rho}(P)} 
    = \frac{\mathrm{Tr}[\widetilde{\rho} P]}{\mathrm{Tr}[\rho P]},
\end{equation}
for which it holds that $\mathbb{E}_{P \sim \chi_{\rho}^2(P)}[X_P]=\mathrm{Tr}(\widetilde{\rho}\rho)$~\cite{Flammia_2011,Kliesch_2021}. The DFE overlap can therefore be estimated by sampling Pauli operators according to $\chi_{\rho}^2(P)$ and estimating the corresponding mitigated expectation values, while the ideal expectation values are provided as classical side information. For generic states, the resulting measurement budget can scale as $2^N$ for fixed precision and confidence. Although this is exponentially better than the $4^N$ scaling of full state tomography, it may still be prohibitive for large systems. Precise bounds for the PEC-mitigated setting are derived in Sec.~\ref{sec: DFE_mitigated}.

This exponential dependence can be avoided for \emph{well-conditioned} states. A state $\rho$ is said to be well-conditioned with parameter $\alpha$ if, for every Pauli operator $P$, either $\mathrm{Tr}[\rho P] = 0$ or $|\mathrm{Tr}[\rho P]| \geq \alpha$~\cite{Flammia_2011}. Prominent examples include stabilizer states ($\alpha=1$) and Dicke states~\cite{Flammia_2011}. For such states, a measurement budget of $O(\log(\delta^{-1})/(\alpha^2\epsilon^2))$ suffices to estimate the DFE overlap to additive precision $2\epsilon$ with probability at least $1-2\delta$.

\section{Testing PEC parameter consistency}
\label{sec: validation protocol}

We present a protocol for testing the consistency of the error parameters used to implement PEC. The protocol consists of three main steps, summarized schematically in Fig.~\ref{fig:overview}(b).

\begin{itemize}
    \item[--] \textbf{Step 1}: Surrogate construction. The universal target circuit is replaced by a structurally similar, classically simulable surrogate circuit for which the ideal Pauli expectation values required by DFE can be efficiently obtained.

    \item[--] \textbf{Step 2}: Probabilistic error cancellation. PEC is applied to the surrogate circuit on noisy hardware, leading to the effective mitigated state $\widetilde{\rho}$.

    \item[--] \textbf{Step 3}: DFE-overlap estimation. DFE is used to estimate the overlap $Y=\operatorname{Tr}(\rho\widetilde{\rho})$, providing a test of the consistency of the error parameters used in PEC for the surrogate circuit executed on hardware.
\end{itemize}

We first describe the construction of classically simulable surrogate circuits in Sec.~\ref{sec: surrogate}. We then construct the PEC-mitigated DFE-overlap estimator, derive its sampling bounds, and formulate the corresponding statistical consistency test in Sec.~\ref{sec: DFE_mitigated}. Because unit overlap alone does not certify recovery of the ideal state, Sec.~\ref{sec: purity certification} introduces an unbiased estimator of $\operatorname{Tr}(\widetilde{\rho}^{2})$ and combines it with the DFE overlap to obtain a finite-confidence estimate of the Hilbert--Schmidt distance from the ideal state. Finally, Sec.~\ref{sec: reducing PEC overhead} discusses strategies for reducing the PEC normalization and hence the sampling costs of both estimators.

\subsection{Classically simulable surrogate circuits}
\label{sec: surrogate}
Both standard DFE and its application to the effective mitigated state require access to the ideal Pauli expectation values $\operatorname{Tr}[P\rho]$, which are generally inefficient to compute for universal circuits. We therefore replace the target circuit $U$ by a classically simulable surrogate designed to retain as closely as possible its circuit structure and hardware implementation while making the quantities required by DFE efficiently accessible. The protocol then tests the PEC procedure on this closely related surrogate circuit.

We consider three surrogate constructions, summarized in Fig.~\ref{fig:circuit_transforms}. First, a Clifford surrogate can be obtained by compiling the circuit such that non-Clifford gates are single-qubit $R_Z$ rotations and replacing their angles by Clifford angles~\cite{Czarnik_2021}. Second, the circuit can be mapped to a Matchgate (MG) circuit by doubling the number of qubits and replacing non-MG two-qubit gates, such as CZ gates, by MG gates such as fermionic SWAPs~\cite{carrasco2024}. Both constructions allow strong classical simulation and hence direct computation of the ideal Pauli expectation values required by DFE. Third, following Ref.~\cite{carrasco2024}, one can retain the same gadget implementation of the encoded circuit while replacing the magic input states by modified states $|M'\rangle$, yielding an efficiently weakly simulable circuit. Here, the ideal Pauli expectation values may need to be estimated from samples, introducing additional statistical uncertainty. Using multiple surrogate constructions that preserve different features of the target computation can provide additional evidence that the observed behavior is not specific to a particular simplification.

For Clifford surrogates generated from stabilizer inputs, the DFE relevance distribution $\chi_\rho(P)^2$ can be sampled efficiently by sampling uniformly from the stabilizer group. For MG surrogates initialized in pure fermionic Gaussian states, including computational-basis states, the output remains a pure fermionic Gaussian state. The Majorana-sampling algorithm of Ref.~\cite{Collura_2026} can then sample the corresponding Pauli strings in polynomial time. Thus, for the Clifford and MG surrogates considered here, both the required Pauli expectation values and the relevance distribution are efficiently accessible.

Classically simulable Clifford and near-Clifford circuits have also been used as training data in learning-based QEM and Clifford data regression-type methods~\cite{Strikis2021,Lowe2021}. There, they are used to learn mitigation parameters or mappings between noisy and ideal expectation values. In contrast, our surrogates are not used to construct the mitigation; they provide classically tractable test cases for evaluating an independently specified PEC procedure through DFE.

\begin{figure}[t]
    \centering
    \includegraphics[width=0.95\linewidth]{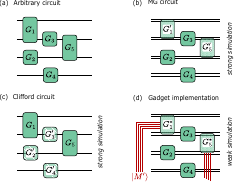}
    \caption{
    Schematic overview of the three structurally similar but classically efficiently simulable surrogate circuits for an arbitrary circuit of interest ({\bf a}). 
    ({\bf b}) The original circuit is transformed into an MG circuit by doubling the number of qubits and replacing the non-MG two-qubit gates by MGs, such as fermionic SWAPs~\cite{carrasco2024}. ({\bf c}) The single-qubit rotations are modified so that the resulting circuit is a Clifford circuit, while preserving the original circuit structure and connectivity~\cite{Czarnik_2021}. Both the Clifford and MG constructions allow \emph{strong} classical simulation, i.e. efficient classical computation of outcome-probabilities~\cite{carrasco2024}, and therefore provide efficient access to the ideal Pauli expectation values required by DFE. ({\bf d}) Gadget implementation of the original encoded circuit, denoted by $*$. To obtain a weakly simulable circuit---one from which the output can be sampled classically and efficiently---the magic input states are modified. In this case, the ideal Pauli expectation values required by DFE may have to be estimated from samples, introducing an additional statistical uncertainty in the DFE-overlap estimate.  For more details on both the gadget implementation of the original two-qubit gates and the modified input state, we refer the reader to Ref.~\cite{carrasco2024}. These transformations guarantee classical tractability, but not equality of the hardware noise affecting the surrogate and target circuits; transferring the test result to the target circuit relies on the noise-preservation assumption discussed in the main text.}
    \label{fig:circuit_transforms}
\end{figure}

\subsection{Estimating the DFE overlap of the mitigated state}
\label{sec: DFE_mitigated}
Unlike standard DFE, applying DFE to the effective mitigated state $\widetilde{\rho}$ requires the Pauli expectation values to be estimated using PEC. This introduces an additional sampling overhead, which we quantify below. To estimate $\operatorname{Tr}[\widetilde{\rho}P]$ in Eq.~\eqref{eq: Xp}, we use the PEC estimator $\widetilde{P}_{\mathrm{PEC}}$ [cf.~Eq.~\eqref{eq: OPEC}] to construct an estimator $\widehat{X}$ and estimate the DFE overlap from $\ell$ sampled Pauli operators,
\begin{equation}
    \widehat{Y} = \frac{1}{\ell}\sum_{i=1}^{\ell}\widehat{X}_{P_i}.
\end{equation}
The resulting estimator has two sources of statistical uncertainty: the finite number $m_P$ of measurement samples used to estimate each $\widetilde{P}_{\mathrm{PEC}}$ and the finite number $\ell$ of sampled Pauli operators. We closely follow the DFE analysis of Ref.~\cite{Kliesch_2021} and derive the corresponding sampling bounds for the mitigated case in Appendix~\ref{appendix DFE bounds}.

In particular, to guarantee
\begin{equation}
\label{main bound precision Y}
    \Pr\!\left[
    \left|\widehat{Y}-\operatorname{Tr}(\widetilde{\rho}\rho)\right|
    \leq 2\epsilon
    \right]
    \geq 1-2\delta,
\end{equation}
it is sufficient to choose
\begin{equation}
    \ell =
    \left\lceil
    \frac{\widetilde{Q}^2}{\epsilon^2\delta}
    \right\rceil,
    \hspace{25pt}
    m_P =
    \left\lceil
    \frac{2\widetilde{Q}^2}
    {2^N\chi_\rho(P)^2\ell\epsilon^2}
    \ln\!\left(\frac{2}{\delta}\right)
    \right\rceil.
\end{equation}
The resulting expected measurement budget is
\begin{equation}
\label{eq: sampling complexity DFE mitigated}
    N_t =\mathbb{E}\!\left[\sum_{i=1}^{\ell}m_{P_i}\right]
    \leq
    1+
    \frac{\widetilde{Q}^2}{\epsilon^2\delta}
    +
    \frac{2\cdot2^N\widetilde{Q}^2}{\epsilon^2}
    \ln\!\left(\frac{2}{\delta}\right).
\end{equation}
For comparison, standard DFE between valid quantum states has expected measurement budget
\begin{equation}
\label{eq: sampling complexity DFE}
    N_t
    \leq
    1+\frac{1}{\epsilon^2\delta}
    +\frac{2\cdot2^N}{\epsilon^2}
    \ln\!\left(\frac{2}{\delta}\right).
\end{equation}
Thus, applying DFE to the effective mitigated state introduces an additional sampling overhead of approximately $\widetilde{Q}^2$, matching the scaling of the PEC overhead incurred when estimating an observable~\cite{Temme_2017}.

For non-well-conditioned states, the DFE sampling complexity in Eqs.~\eqref{eq: sampling complexity DFE mitigated} and \eqref{eq: sampling complexity DFE} scales as $2^N$, even when the underlying circuit is Clifford or MG. This is nevertheless exponentially smaller than the $4^N$ scaling required for full quantum state tomography of $\widetilde{\rho}$~\cite{Cramer_2010}. The exponential DFE scaling can be avoided by using a well-conditioned test state (see Sec.~\ref{sec: DFE}). As shown in the well-conditioned-state analysis of
Appendix~\ref{appendix DFE bounds}, to guarantee Eq.~\eqref{main bound precision Y} it is sufficient to choose
\begin{equation}
    \ell =
    \left\lceil
    \frac{2\widetilde{Q}^2\ln(2/\delta)}
    {\epsilon^2\alpha^2}
    \right\rceil,
    \hspace{25pt}
    m_P=1,
\end{equation}
which gives the expected measurement budget
\begin{equation}
    N_t =
    \left\lceil
    \frac{2\widetilde{Q}^2\ln(2/\delta)}
    {\epsilon^2\alpha^2}
    \right\rceil
    \leq
    1+
    \frac{2\widetilde{Q}^2}
    {\alpha^2\epsilon^2}
    \ln\!\left(\frac{2}{\delta}\right).
\end{equation}
Thus, for a well-conditioned test state with constant $\alpha$, the sampling complexity is independent of the number of qubits.

A simple choice of such a test is to apply the protocol to the circuit $UU^\dagger$ acting on the all-zero input state, rather than to $U$ alone. The ideal output is then the all-zero state, for which $\alpha=1$ and the required Pauli expectation values can be computed efficiently. This construction therefore allows the protocol to be applied even when $U$ is neither Clifford nor MG. A limitation is that errors in the forward and inverse circuits may partially cancel, which can reduce the sensitivity of the test to some noise processes.

An alternative is to use an ensemble of randomized mirror circuits, which can probe different circuit depths, qubit subsets, and noise features~\cite{Proctor_2022,mayer2023theorymirrorbenchmarkingdemonstration}. Their ideal outputs are computational-basis states and hence stabilizer states with $\alpha=1$, so the well-conditioned DFE bounds remain applicable. Such circuits can complement the $UU^\dagger$ construction by reducing the possibility that the observed behavior is specific to a single test circuit.

With this, we can estimate the DFE overlap between the effective mitigated state $\widetilde{\rho}$ and the ideal state $\rho$ for Clifford and MG surrogate circuits, as well as for the simple mirror construction $UU^\dagger$ associated with an arbitrary circuit $U$. The estimated overlap $\widehat{Y}$ provides a statistical consistency test for the PEC procedure. From Eq.~\eqref{main bound precision Y},
\begin{equation}
\label{eq: overlap concentration}
    \Pr\!\left[
    \left|\widehat{Y}-\operatorname{Tr}(\widetilde{\rho}\rho)\right|
    \leq 2\epsilon
    \right]
    \geq 1-2\delta.
\end{equation}
We therefore define
\begin{equation}
\label{eq: protocol}
\begin{cases}
\text{Pass the DFE-overlap test},
& \text{if } |\widehat{Y}-1|\leq 2\epsilon,\\[2pt]
\text{Fail the DFE-overlap test},
& \text{if } |\widehat{Y}-1|>2\epsilon.
\end{cases}
\end{equation}
The test is two-sided because $\widetilde{\rho}$ need not be positive semidefinite, and hence the DFE overlap may lie either above or below one. If the mitigation exactly recovers the ideal state, $\widetilde{\rho}=\rho$, then $\operatorname{Tr}(\widetilde{\rho}\rho)=1$ and Eq.~\eqref{eq: overlap concentration} guarantees that the overlap test is passed with probability at least $1-2\delta$. Conversely, if
\begin{equation}
    \left|1-\operatorname{Tr}(\widetilde{\rho}\rho)\right|>4\epsilon,
\end{equation}
then Eq.~\eqref{eq: overlap concentration} implies that the overlap test is failed with probability at least $1-2\delta$. The interval $|1-\operatorname{Tr}(\widetilde{\rho}\rho)|\leq4\epsilon$ therefore constitutes an indifference region in which no such guarantee is claimed.

\subsection{Estimating the DFE purity and state-distance
certification}
\label{sec: purity certification}

The DFE-overlap test developed above is a consistency test, but it does not, by itself, certify recovery of the ideal state. Exact recovery, $\widetilde{\rho}=\rho$, implies $\operatorname{Tr}(\widetilde{\rho}\rho)=1$, but the converse need not hold because the effective mitigated operator $\widetilde{\rho}$ is not necessarily positive semidefinite. For a pure ideal state $\rho$, the squared Hilbert--Schmidt distance is
\begin{equation}
\label{eq:hs_distance}
    D_{\mathrm{HS}}^{2}
    \equiv
    \left\|
        \widetilde{\rho}-\rho
    \right\|_{2}^{2}
    =
    \operatorname{Tr}(\widetilde{\rho}^{2})
    +1
    -2\operatorname{Tr}(\widetilde{\rho}\rho).
\end{equation}
Consequently, passing the overlap test establishes only compatibility with unit overlap. A complete estimate of $D_{\mathrm{HS}}^{2}$ additionally requires access to
\begin{equation}
    \mathcal{P}
    \equiv
    \operatorname{Tr}(\widetilde{\rho}^{2}).
\end{equation}
When $\widetilde{\rho}$ is not positive semidefinite, $\mathcal{P}$ is its squared Hilbert--Schmidt norm rather than the purity of a physical state.

Although $\widetilde{\rho}$ cannot, in general, be prepared as a physical state, its DFE purity can be estimated directly through the PEC quasiprobability decomposition. Let $I$ label a complete PEC trajectory, including all sampled Pauli corrections, and define
\begin{equation}
    \rho_I
    =
    \rho'_{\mathrm{noisy}}(\vec{P}'_I),
    \qquad
    \widetilde{p}_I
    =
    \frac{
        \left|
            \widetilde{a}(\vec{P}'_I)
        \right|
    }{
        \widetilde{Q}
    },
    \qquad
    s_I
    =
    \operatorname{sign}
    \!\left[
        \widetilde{a}(\vec{P}'_I)
    \right].
\end{equation}
The effective mitigated operator can then be written as
\begin{equation}
\label{eq:purity_pec_decomposition}
    \widetilde{\rho}
    =
    \widetilde{Q}
    \mathbb{E}_{I\sim\widetilde{p}}
    \!\left[
        s_I\rho_I
    \right].
\end{equation}
For two independently sampled trajectories $I,J\sim\widetilde{p}$, bilinearity of the Hilbert--Schmidt inner product gives
\begin{equation}
\label{eq:purity_pec_overlap}
    \mathcal{P}
    =
    \widetilde{Q}^{\,2}
    \mathbb{E}_{I,J\sim\widetilde{p}}
    \!\left[
        s_I s_J
        \operatorname{Tr}(\rho_I\rho_J)
    \right].
\end{equation}
Importantly, $\rho_I$ and $\rho_J$ are physical noisy output states even when their signed average $\widetilde{\rho}$ is not positive semidefinite.

The overlap in Eq.~\eqref{eq:purity_pec_overlap} can be measured using the global SWAP operator $S_{AB}$ acting between two $N$-qubit registers, since
\begin{equation}
    \operatorname{Tr}
    \!\left[
        S_{AB}(\rho_I\otimes\rho_J)
    \right]
    =
    \operatorname{Tr}(\rho_I\rho_J).
\end{equation}
In each experimental repetition, one independently samples two PEC trajectories, prepares the corresponding states on registers $A$ and $B$, and measures $S_{AB}$. Let $X\in\{-1,+1\}$ denote the resulting single-shot outcome, which satisfies
\begin{equation}
    \mathbb{E}[X\mid I,J]
    =
    \operatorname{Tr}(\rho_I\rho_J).
\end{equation}
Recording
\begin{equation}
    Z
    =
    \widetilde{Q}^{\,2}s_I s_J X
\end{equation}
gives the unbiased estimator
\begin{equation}
\label{eq:purity_pec_estimator}
    \widehat{\mathcal{P}}
    =
    \frac{1}{M}
    \sum_{m=1}^{M} Z_m,
    \qquad
    \mathbb{E}
    \!\left[
        \widehat{\mathcal{P}}
    \right]
    =
    \mathcal{P}.
\end{equation}
Unlike the DFE overlap, the DFE purity estimator does not require knowledge or classical simulation of the ideal target state.

Since $|Z|\leq\widetilde{Q}^{\,2}$, Hoeffding's inequality implies
\begin{equation}
\label{eq:purity_pec_concentration}
    \Pr\!\left[
        \left|
            \widehat{\mathcal{P}}-\mathcal{P}
        \right|
        \geq
        \epsilon_{\mathrm{pur}}
    \right]
    \leq
    2
    \exp\!\left[
        -
        \frac{
            M\epsilon_{\mathrm{pur}}^{2}
        }{
            2\widetilde{Q}^{\,4}
        }
    \right].
\end{equation}
It is therefore sufficient to choose
\begin{equation}
\label{eq:purity_pec_samples}
    M
    \geq
    \frac{
        2\widetilde{Q}^{\,4}
    }{
        \epsilon_{\mathrm{pur}}^{2}
    }
    \ln\!\left(
        \frac{2}{\delta_{\mathrm{pur}}}
    \right).
\end{equation}
to obtain additive precision $\epsilon_{\mathrm{pur}}$ with failure probability at most $\delta_{\mathrm{pur}}$. 

At fixed $\widetilde{Q}$, precision, and confidence, the number of paired measurements in Eq.~\eqref{eq:purity_pec_samples} has no explicit exponential dependence on the number of qubits. Each repetition instead requires two $N$-qubit registers and $\mathcal{O}(N)$ pairwise overlap operations. The overlap can be measured either using an ancilla-controlled SWAP test or through destructive Bell-basis measurements between corresponding qubit pairs ~\cite{Nakazato_2012,Tanaka_2013}.

The two-register implementation also introduces substantive experimental requirements. The trajectories must be sampled independently and, conditioned on their labels, the joint preparation must be well approximated by $\rho_I\otimes\rho_J$. Correlated noise or crosstalk between the registers can violate this assumption. Noise in the controlled-SWAP or Bell-measurement circuit can also introduce a systematic bias that is not controlled by Eq.~\eqref{eq:purity_pec_concentration}. These errors must therefore be calibrated, bounded, or mitigated separately.

If simultaneous access to two copies is unavailable, single-copy alternatives remain possible. The protocol of Ref.~\cite{Enk2012}, for example, can be adapted by replacing its required probabilities with mitigated estimators, but its measurement cost scales as $2^N$ and it requires global random measurements. Classical-shadow and local randomized-measurement approaches similarly avoid the second register but retain an exponential full-system cost~\cite{classical_shadows,Brydges_2019, error_mitigated_classical_shadows}. The two-copy PEC estimator therefore exchanges this additional exponential system-size dependence for doubled quantum memory and a joint overlap measurement.

Beyond its role in state-distance certification, the measured DFE purity also provides a directional diagnostic of consistent parameter miscalibration. The conditions under which deviations of $\widehat{Y}$ and $\widehat{\mathcal{P}}$ above or below one identify the direction of miscalibration are established in Sec.~\ref{sec: calibrating}.

\subsection{Reducing the PEC sampling overhead}
\label{sec: reducing PEC overhead} 
The DFE sampling cost for an effective mitigated state scales as $\widetilde{Q}^{\,2}$, while the DFE purity estimator of Sec.~\ref{sec: purity certification} scales as $\widetilde{Q}^{\,4}$. Reducing $\widetilde{Q}$ is therefore central to making both the overlap test and the purity estimation practical for deeper circuits. Separately, the measurement cost of standard DFE can be reduced by grouping commuting Pauli observables and sampling over groups rather than individual operators~\cite{BarberaRodriguez2025}; this does not, however, reduce the PEC normalization itself. Here we focus on two complementary strategies for reducing $\widetilde{Q}$.

The first approach can reduce $\widetilde{Q}$ at fixed circuit depth. Ref.~\cite{Scheiber2025reducedsampling} introduced error-propagated PEC (pPEC), which propagates the artificially inserted Pauli operators $P'$ to the beginning of the circuit and groups equivalent corrections. For Clifford circuits, this grouping was shown to reduce the sampling overhead. In Appendix~\ref{appendix ppec MG}, we extend the same construction to MG circuits under a sign-invariant noise assumption and prove
\begin{equation}
    \widetilde{Q}_{\mathrm{pPEC}}
    \leq
    \widetilde{Q}.
\end{equation}

The second strategy reduces $\widetilde{Q}$ by reducing the circuit depth. Dynamic circuits, which use mid-circuit measurements and feed-forward, can replace parts of a unitary circuit with measurements and classically conditioned operations, substantially reducing the depth of structured computations~\cite{Buhrman_2024,smith2024MPS,Yeo_2025,BaumerQFT}. For example, depth reductions for Matrix Product State preparation and transformation can be exponential in some settings ~\cite{smith2024MPS,Gunn_2025,Gunn_2026,Piroli_2021}. Since PEC also extends to dynamic circuits~\cite{gupta2023probabilisticerrorcancellationdynamic}, such depth reductions can directly reduce the associated sampling overhead.

The benefit of this approach is hardware dependent. Mid-circuit measurements and feed-forward can introduce additional crosstalk, spectator-qubit errors, and idling errors during classical-control latency~\cite{gupta2023probabilisticerrorcancellationdynamic}, with dynamical decoupling offering a possible mitigation of the latter ~\cite{shirgure2026errormitigationdynamiccircuits}. Dynamic-circuit rewritings are therefore advantageous only when the reduction in PEC overhead outweighs the additional errors they introduce; quantifying this trade-off is left for future work.

\section{Diagnosing the direction of noise-parameter miscalibration}
\label{sec: calibrating}

A central question for any error-mitigation protocol is how it responds when the assumed noise model deviates from the true one. In this section, we investigate when the DFE overlap can reveal not only the presence but also the direction or sign of such a miscalibration. In Sec.~\ref{sec: directionality general}, we define overestimation and underestimation for individual stochastic Pauli error sources and relate these definitions to several experimentally relevant noise models. We also distinguish these coefficientwise definitions from consistent miscalibration across several concatenated error sources. In Sec.~\ref{sec: ptm general}, we use the Pauli transfer matrix formalism to derive the corresponding directional responses. For the DFE overlap, the full response holds for Clifford circuits, while its extension to arbitrary circuits in the overestimation regime is perturbative. In contrast, the DFE purity exhibits the full directional response for arbitrary circuits and arbitrary parameter deviations under the stated effective-Pauli-fidelity conditions.

\subsection{Over- and underestimation of noise parameters}
\label{sec: directionality general}

For a single stochastic Pauli error source, we define the \emph{underestimation case} by
\begin{equation}
    \widetilde{c}_k(P)\leq c_k(P)
\end{equation}
for all layers $k$ and all nonidentity Pauli operators $P$, with at least one strict inequality. By normalization, this implies $\widetilde{c}_k(I)\geq c_k(I)$. Conversely, the \emph{overestimation case} is defined by
\begin{equation}
    \widetilde{c}_k(P)\geq c_k(P)
\end{equation}
for all $k$ and all $P\neq I$, again with at least one strict inequality, implying $\widetilde{c}_k(I)\leq c_k(I)$. These definitions exclude hybrid cases in which different Pauli coefficients of the same error source are simultaneously overestimated and underestimated.

Figure~\ref{fig:over_under_definition} illustrates these coefficientwise definitions for a single stochastic Pauli error source. Many experimentally relevant noise models can be described by one or a few physical error parameters whose miscalibration implies these coefficientwise conditions. This includes the global depolarizing, crosstalk, coherent over-rotation, dephasing, coherent phase-error, and amplitude-damping models considered in this work, either exactly or, for coherent-error models, within the corresponding perturbative regime.

In Appendix~\ref{appendix pauli coefficients crosstalk plus overotaion}, we derive the Pauli-channel coefficients and the corresponding Pauli-fidelity conditions for the crosstalk and coherent over-rotation models introduced in Sec.~\ref{sec: numerical protocol}. Here, $p_c$ denotes the crosstalk error probability defined in Eq.~\eqref{eq:crosstalk-channel}, while $\gamma$ denotes the over-rotation strength defined in Eq.~\eqref{eq:overrotation-generator}; $\widetilde p_c$ and $\widetilde\gamma$ denote the corresponding values assumed in the PEC procedure. For the crosstalk source, $\widetilde p_c<p_c$ corresponds to underestimation and $\widetilde p_c>p_c$ to overestimation. For the coherent over-rotation source, the analogous correspondence between $\widetilde\gamma$ and $\gamma$ holds within the perturbative small-error regime. The same correspondence also holds for dephasing, coherent phase errors, and amplitude damping after Pauli twirling~\cite{Piedrafita_2017}.

When several error sources are concatenated, we apply the above definitions separately to each constituent source. The coefficientwise ordering is generally not preserved by the concatenated Pauli channel. This, however, is not the quantity entering our analysis. Instead, Sec.~\ref{sec: ptm general} shows that the effective Pauli-fidelity inequalities defined in Eqs.~\eqref{eq: fidelities underestimation} and \eqref{eq: fidelities overestimation} are preserved under concatenation whenever all constituent error sources are miscalibrated in the same direction. Since these effective Pauli fidelities determine the asymmetric DFE-overlap and DFE-purity responses, we refer to such situations as \emph{consistent underestimation} and \emph{consistent overestimation}. Hybrid cases, in which some constituent error sources are overestimated while others are underestimated, are not covered by our analytical results, although we include such cases in the numerical simulations of Fig.~\ref{fig: protocol MG figure bigger than one}.

The analytical directionality results below apply in the
positive-fidelity regime, in which the true and estimated Pauli fidelities of every constituent error source are strictly positive for every layer $k$ and Pauli operator $P$. As shown in Appendix~\ref{appendix inverse channel}, a sufficient condition for this regime is that the identity coefficient of every true and estimated constituent Pauli channel is larger than $1/2$.

\subsection{Detecting miscalibration through the DFE overlap and DFE purity}
\label{sec: ptm general}

The Pauli transfer matrix (PTM) formalism provides a convenient description of the residual map $\mathcal{D}_{k,\mathrm{eff}}=\widetilde{\mathcal D}_k^{-1}\mathcal D_k$. Since both $\mathcal D_k$ and $\widetilde{\mathcal D}_k^{-1}$ are Pauli diagonal, so is $\mathcal D_{k,\mathrm{eff}}$~\cite{Roncallo_2023,Hantzko_2025}. As defined in Appendix~\ref{appendix inverse channel}, the diagonal PTM entries are the corresponding Pauli fidelities. It follows directly that the effective Pauli fidelities are given by
\begin{equation}
    f_{k,P,\mathrm{eff}}=\frac{f_{k,P}}{\widetilde f_{k,P}}.
\end{equation}
These determine the corresponding Pauli coefficients of the residual map through Eq.~(\ref{eq: coefficients from fidelities}) and are the central quantities in the analysis below. For a single error source, underestimation implies
\begin{equation}
\label{eq: fidelities underestimation}
0<f_{k,P,\mathrm{eff}}\le1
\qquad
\forall\,k,P,
\end{equation}
whereas overestimation implies
\begin{equation}
\label{eq: fidelities overestimation}
f_{k,P,\mathrm{eff}}\ge1
\qquad
\forall\,k,P,
\end{equation}
with at least one strict inequality in each case.

These effective-fidelity inequalities are preserved under consistent concatenation. Indeed, the PTM of a concatenation of Pauli channels is the product of the individual PTMs, so the effective Pauli fidelities multiply. Consequently, if every constituent error source is consistently underestimated, every factor satisfies $f_{k,P,\mathrm{eff}}\le1$, and so does their product. Likewise, consistent overestimation implies $f_{k,P,\mathrm{eff}}\ge1$ for the concatenated residual map. This preservation of the effective Pauli-fidelity inequalities---rather than the coefficientwise ordering of the concatenated Pauli channel---is the key property underlying the DFE-overlap and DFE-purity results derived below. No corresponding conclusion holds for hybrid concatenations.

Under consistent underestimation, including concatenated noise models for which all constituent error sources are consistently underestimated, the effective Pauli fidelities satisfy Eq.~\eqref{eq: fidelities underestimation}. Although this condition does not imply complete positivity of the residual map, it nevertheless leads to simple bounds on both the DFE purity and the DFE overlap. In Appendix~\ref{app:underestimation_overlap} we prove
\begin{equation}
\operatorname{Tr}(\widetilde{\rho}^{2})\le1.
\end{equation}
Since the ideal state is pure, $\operatorname{Tr}(\rho^2)=1$, the Cauchy--Schwarz inequality immediately yields
\begin{equation}
\operatorname{Tr}(\widetilde{\rho}\rho)\le1.
\end{equation}
Thus, consistent underestimation upper-bounds both the DFE purity and the DFE overlap by one for arbitrary circuits under stochastic Pauli noise.

The overestimation case is fundamentally different. Although consistent overestimation implies Eq.~\eqref{eq: fidelities overestimation}, this condition alone does not determine the sign of $\operatorname{Tr}(\widetilde{\rho}\rho)-1$. Consequently, unlike in the underestimation case, no general lower bound on the DFE overlap can be established for arbitrary circuits and arbitrary parameter deviations. Moreover, Appendix~\ref{appendix inverse channel} shows that any effective Pauli fidelity strictly larger than one necessarily implies that the corresponding residual map is not completely positive, since at least one effective Pauli coefficient becomes negative.

Nevertheless, two important positive results can be established. First, for Clifford circuits, Appendix~\ref{appendix: clifford proof} proves that
\begin{equation}
\operatorname{Tr}(\widetilde{\rho}\rho)\ge1
\end{equation}
under consistent overestimation. Together with the arbitrary-circuit underestimation bound above, this yields the full asymmetric DFE-overlap response for Clifford circuits and allows the protocol to diagnose the direction of the parameter miscalibration.

Second, Appendix~\ref{app:beyond-clifford} establishes a local perturbative extension of the overestimation response for arbitrary circuits. Introducing a parameter $\lambda$ that scales the deviation from perfect calibration, the effective Pauli fidelities admit the expansion
\begin{equation}
    f_{k,P,\mathrm{eff}}(\lambda)
    =
    1+\lambda r_{k,P}+o(\lambda),
\end{equation}
and the DFE overlap becomes
\begin{equation}
    \operatorname{Tr}\!\left[
        \widetilde{\rho}(\lambda)\rho
    \right]
    =
    1+\lambda A+o(\lambda),
\end{equation}
where
\begin{equation}
    A
    =
    \frac{1}{2^N}
    \sum_{k=1}^{s}
    \sum_{P\in\mathcal{P}_N}
    r_{k,P}\operatorname{Tr}(P\rho_k)^2.
    \label{eq: perturbative DFE coefficient}
\end{equation}
Consistent overestimation implies $r_{k,P}\ge0$ for every $k$ and $P$, and therefore $A\ge0$. Whenever $A>0$, the DFE overlap satisfies $\operatorname{Tr}(\widetilde{\rho}(\lambda)\rho)>1$ for all sufficiently small positive $\lambda$. If $A=0$, the first-order expansion is inconclusive and higher-order terms determine the response. Thus, the asymmetric overestimation response extends locally beyond Clifford circuits under the additional nonvanishing first-order response condition. 

The measurable DFE purity provides a complementary nonperturbative directional result for arbitrary circuits. Under consistent
underestimation, the argument above and Appendix~\ref{app:underestimation_overlap} establish
\begin{equation}
    \operatorname{Tr}(\widetilde{\rho}^{2})
    \leq
    1,
\end{equation}
whereas Appendix~\ref{appendix: purity proof} proves
\begin{equation}
    \operatorname{Tr}(\widetilde{\rho}^{2})
    \geq
    1
\end{equation}
under consistent overestimation. These inequalities hold for arbitrary circuits and arbitrary parameter deviations
whenever the corresponding effective Pauli fidelities
satisfy Eqs.~\eqref{eq: fidelities underestimation} and
\eqref{eq: fidelities overestimation}. Their validity under
consistent concatenation follows from the multiplicativity
of the effective Pauli fidelities.

Table~\ref{tab:over_under_summary_theoretical} summarizes
the analytical results established in this section, while
Table~\ref{tab:over_under_summary_experimental} summarizes
their finite-sample diagnostic interpretation.

\section{Applications}
In this section, we illustrate the testing and diagnosis protocol in two complementary settings. Sec.~\ref{sec: numerical protocol} applies the protocol numerically to Matchgate circuits subject to a concatenation of crosstalk and coherent over-rotation errors, demonstrating both the DFE overlap-based consistency test and the miscalibration diagnostic for consistent miscalibration. Sec.~\ref{sec: global depolarsing} then considers the global depolarizing model, where the effective mitigated state and the resulting DFE overlap can be computed exactly, providing a fully analytical illustration of the directional response derived in Sec.~\ref{sec: calibrating}.

\label{sec: applications}

\begin{figure}[t]
    \centering
    \includegraphics[width=\linewidth]{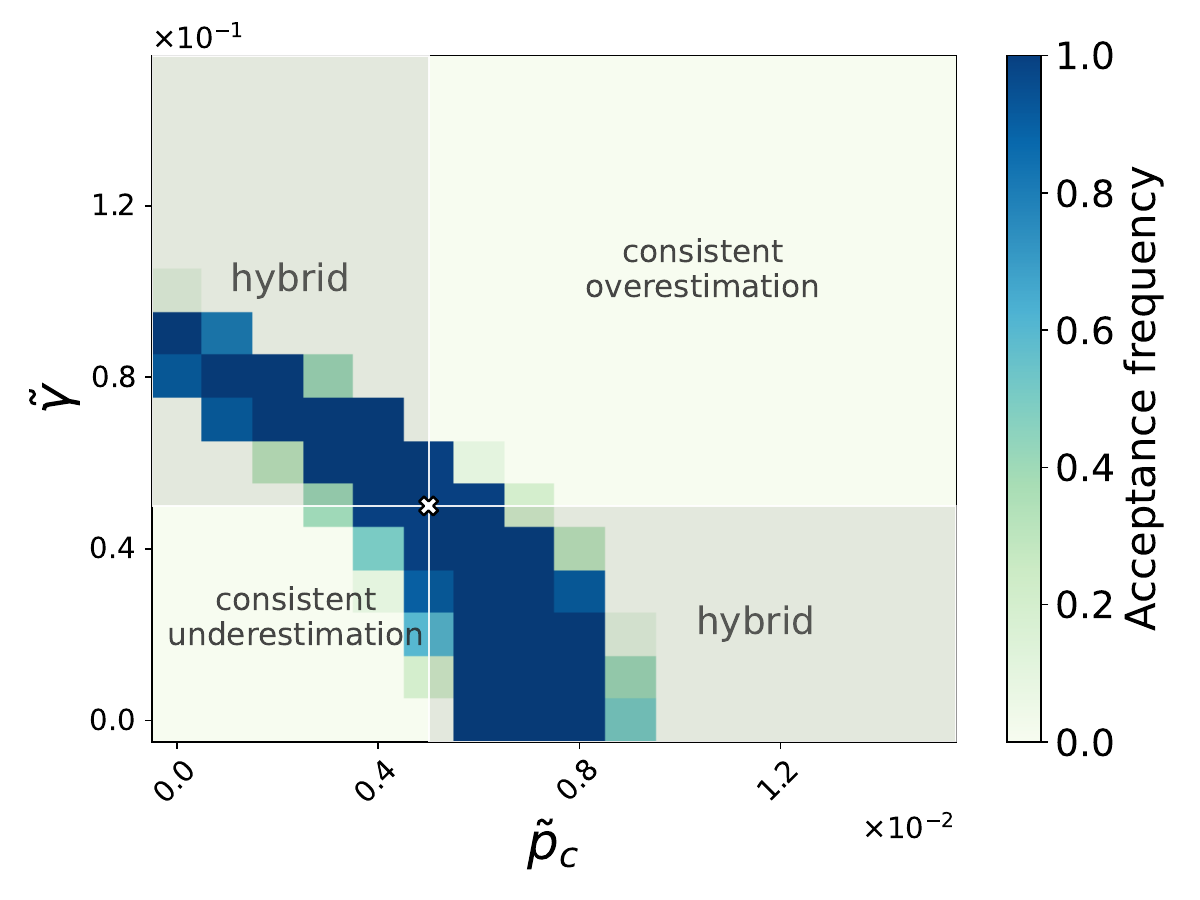}
    \caption{Acceptance frequency of the DFE protocol as a function of the deviated error parameters $\widetilde{p}_c$ and $\widetilde{\gamma}$. The white cross marks the true error parameters, $p_c = 0.005$ and $\gamma = 0.05$. We consider a random $N = 12$ qubit circuit $UU^\dagger$, where $U$ is an eight-gate MG circuit. The DFE overlap estimator $\widehat{Y}$ is obtained by setting $\epsilon = 0.05$ and $\delta = 0.1$. The acceptance frequency is evaluated by repeating the DFE protocol ten times for the same circuit and applying the acceptance criterion given in Eq.~(\ref{eq: protocol}).}
    \label{fig: protocol MG figure}
\end{figure}

\subsection{The testing and diagnosis protocol under crosstalk plus over-rotation error channels}
\label{sec: numerical protocol}
In this section, we illustrate the proposed protocol using a physically relevant error model consisting of a concatenation of coherent over-rotation and crosstalk errors. The over-rotation is parameterized by $\gamma$, while the crosstalk error is parameterized by $p_c$. We introduce controlled deviations of these parameters to test the consistency of the assumed error model and the resulting PEC procedure. The same combination of crosstalk and over-rotation errors was considered in Ref.~\cite{carrasco2024} in the context of a related protocol. We use a gate-dependent over-rotation model to make the test more representative, while the crosstalk model is motivated by noise characteristics observed in ion-trap quantum processors~\cite{Heu_en_2023_ion_trap}.

We consider MG circuits composed of two-qubit MGs of the form
\begin{equation}
    G_k = M(\beta_k,\tau_k)
    = e^{i\beta_k X_iX_j}e^{i\tau_k Y_iY_j},
\end{equation}
which represent general MGs up to local phase gates~\cite{carrasco2024}. After each MG, we apply an error channel consisting of the concatenation of crosstalk and over-rotation noise. Both error channels are converted into Pauli-stochastic channels of the form in Eq.~(\ref{eq: stochastic pauli channel}) by Pauli twirling~\cite{van_den_Berg_2023,Wallman_2016}.

All numerical experiments in this section use the standard gate-local PEC construction and the estimator defined in Eq.~\eqref{eq: OPEC}. In particular, we do not apply the pPEC grouping or the dynamic-circuit reductions discussed in Sec.~\ref{sec: reducing PEC overhead}. 

We model crosstalk between neighboring qubits by
\begin{align}
    \mathcal{C}_{ij}(\rho)
    &= (1-p_c)\rho
    +\frac{p_c}{4}\Big(
    X_iX_j\rho X_iX_j
    +X_iY_j\rho X_iY_j \notag\\
    &\qquad\qquad
    +Y_iX_j\rho Y_iX_j
    +Y_iY_j\rho Y_iY_j
    \Big).
    \label{eq:crosstalk-channel}
\end{align}
The crosstalk channel acts between each targeted qubit and each of its neighboring qubits whenever a two-qubit gate is applied. Thus, for a gate acting on qubits $i$ and $i+1$, the total crosstalk channel is
\begin{equation}
    \mathcal{C}_{i,i-1}\circ
    \mathcal{C}_{i,i+1}\circ
    \mathcal{C}_{i+1,i}\circ
    \mathcal{C}_{i+1,i+2}.
\end{equation}
We set the true crosstalk parameter to $p_c=0.005$, following Ref.~\cite{carrasco2024}.

For each gate $G_k=e^{iH_k}$, with
\begin{equation}
    H_k=\sum_P\beta_{k,P}P,
\end{equation}
we introduce a coherent over-rotation generated by
\begin{equation}
    H_k^{(\mathrm{OR})}
    =\gamma\sum_P|\beta_{k,P}|P.
    \label{eq:overrotation-generator}
\end{equation}
The implemented gate is therefore
\begin{equation}
    O_kG_k
    =e^{iH_k^{(\mathrm{OR})}}e^{iH_k}.
\end{equation}
After Pauli twirling, this gives a stochastic Pauli channel whose coefficients depend on the individual gate $k$, while the same global parameter $\gamma$ is used for all gates. Following Ref.~\cite{carrasco2024}, we set the true over-rotation parameter to $\gamma=0.05$.

We first consider the acceptance test in Fig.~\ref{fig: protocol MG figure}. We apply the protocol to the circuit $UU^\dagger$, whose ideal output is a well-conditioned state with $\alpha=1$. The circuit $U$ contains eight MGs with the angles $\beta_k$ and $\tau_k$ and the pairs of neighboring qubits on which the gates act chosen randomly. The circuit acts on $N=12$ qubits. We set $\epsilon=0.05$ and $\delta=0.1$ and, for each pair of assumed parameters $(\widetilde p_c,\widetilde\gamma)$, repeat the DFE protocol ten times. The plotted acceptance frequency is the fraction of these repetitions that pass the overlap test in Eq.~\eqref{eq: protocol}. To verify that this behavior is not specific to the circuit instance shown in Fig.~\ref{fig: protocol MG figure}, Appendix~\ref{appendix:extended_MG_numerics} repeats the analysis for ten independently generated MG circuits with different computational-basis product input states. The extended results show the same qualitative acceptance structure across the tested ensemble. 

For the correctly calibrated parameters, $\widetilde p_c=p_c$ and $\widetilde\gamma=\gamma$, all ten repetitions pass the overlap test. For sufficiently large deviations from the true parameters, no repetition passes. Between these regimes, we observe a semi-circular acceptance region in which an overestimation of one parameter can partially compensate an underestimation of the other, keeping the DFE overlap close to one. A similar structure was observed in Ref.~\cite{carrasco2024}. This hybrid-miscalibration region illustrates that an overlap close to one does not by itself imply that the individual error parameters are correctly calibrated.

Figure~\ref{fig: protocol MG figure bigger than one} examines the direction of the DFE-overlap response for the same circuit and error model. In the parameter range considered, we find $\operatorname{Tr}(\widetilde{\rho}\rho)>1$ when both error sources are overestimated and $\operatorname{Tr}(\widetilde{\rho}\rho)<1$ when both are underestimated. The latter behavior follows from the general underestimation result, while the former is consistent with the local perturbative result of Appendix~\ref{app:beyond-clifford}, which predicts $\operatorname{Tr}(\widetilde{\rho}\rho)>1$ for sufficiently small consistent overestimation whenever the first-order response coefficient $A$ is strictly positive. Thus, the figure provides a numerical illustration of the directional response in a non-Clifford circuit rather than establishing a general overestimation bound. As in Fig.~\ref{fig: protocol MG figure}, we set $\epsilon=0.05$ and $\delta=0.1$. Values within the statistical interval $1\pm2\epsilon$ are treated as inconclusive, and each displayed value is averaged over ten independent repetitions of the DFE protocol. Appendix~\ref{appendix:extended_MG_numerics} reports the corresponding directional analysis for the same ensemble of ten independently generated MG circuits and computational-basis product input states. Across the ensemble, we observe the same qualitative asymmetric response: the DFE overlap lies below one on the consistent-underestimation side and above one on the consistent-overestimation side. 

\begin{figure}[b]
    \centering
    \includegraphics[width=\linewidth]{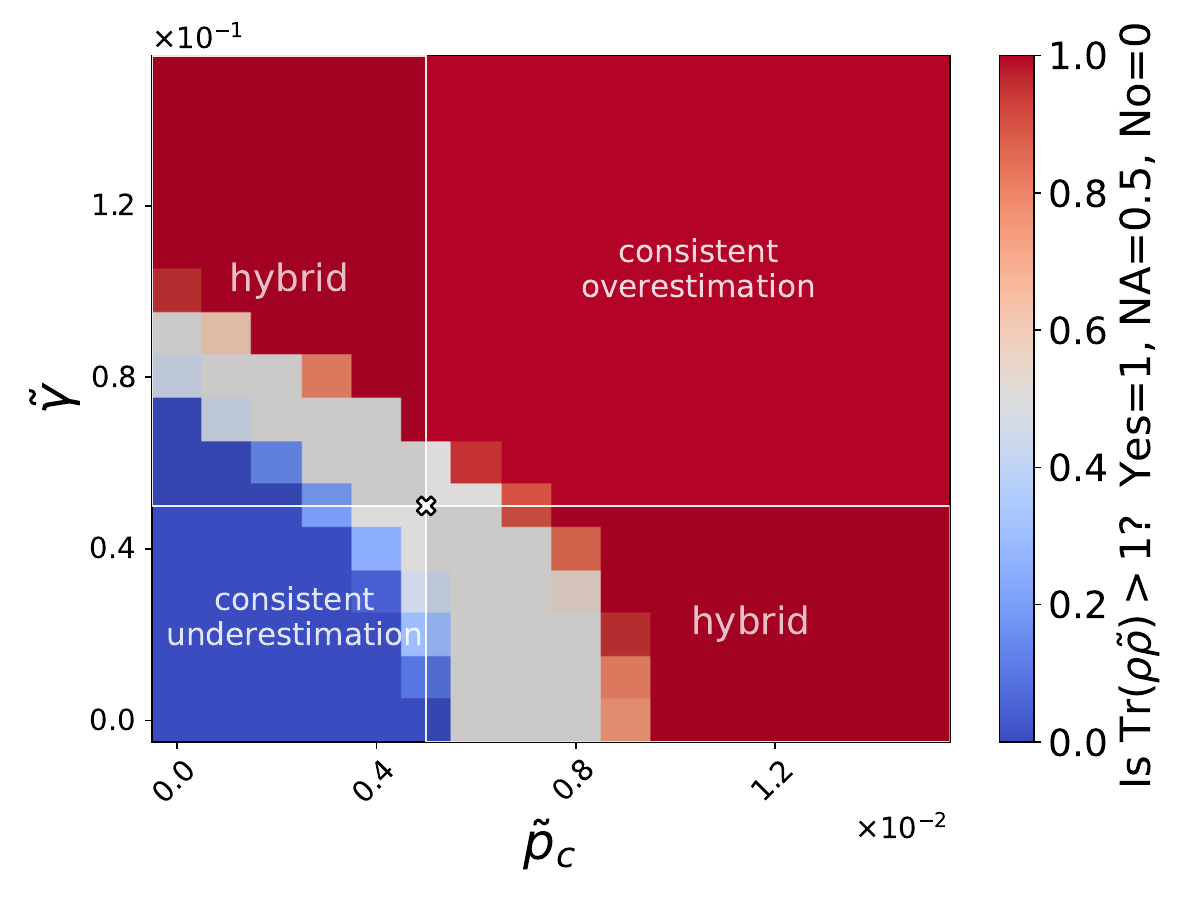}
    \caption{Variable indicating whether the measured overlap is larger (1) or smaller (0) than one within the given precision ($2\epsilon$ as indicated by Eq.~(\ref{main bound precision Y})) as a function of the deviated error parameters $\widetilde{p}_c$ and $\widetilde{\gamma}$. The grey region (0.5) corresponds to the values where one cannot say if the DFE overlap is larger or smaller than one, i.e., the measured overlap is in the range $1\pm2\epsilon$. The white cross marks the true error parameters, $p_c = 0.005$ and $\gamma = 0.05$. We consider a random $N = 12$ qubit circuit $UU^\dagger$, where $U$ is an eight-gate MG circuit. The DFE overlap estimator $\widehat{Y}$ is obtained using $\epsilon = 0.05$ and $\delta = 0.1$. The values shown are obtained by averaging over ten repetitions of the DFE protocol.}
    \label{fig: protocol MG figure bigger than one}
\end{figure}

\subsection{Parameter miscalibration direction in the global depolarizing error model}
\label{sec: global depolarsing}

The global depolarizing noise channel acting after each noisy gate is described by
\begin{equation}
\mathcal{D}(\mathcal{G}_k)(\cdot)
= (1 - p_e)\, G_k(\cdot) G_k^\dagger
+ \frac{p_e}{4^N} \sum_{P \in \mathcal{P}_N} P(\cdot) P,
\end{equation}
where $p_e$ fully characterizes the error model and is independent of the gate index $k$. In Appendix~\ref{appendix global depolarizing mitigated state}, we show that, when PEC is implemented using a deviated parameter $\widetilde{p}_e$, the effective mitigated state is
\begin{equation}
    \widetilde{\rho} = (1 - p_{\mathrm{PEC}})^s \rho
    + \big[1 - (1 - p_{\mathrm{PEC}})^s \big]
    \frac{I}{2^N},
\end{equation}
where $s$ is the number of gates and
\begin{equation}
    p_{\mathrm{PEC}} = \frac{p_e - \widetilde{p}_e}{1 - \widetilde{p}_e}.
\end{equation}
For $\widetilde{p}_e < p_e$, we have $p_{\mathrm{PEC}}>0$, so $\widetilde{\rho}$ remains a valid density matrix and has the form of a globally depolarized version of $\rho$. For $\widetilde{p}_e > p_e$, we have $p_{\mathrm{PEC}}<0$, and the coefficient of $I/2^N$ becomes negative, so $\widetilde{\rho}$ is generally not a valid quantum state.

The DFE overlap between $\rho$ and $\widetilde{\rho}$ is
\begin{equation}
    \label{eq: fidelity global depo}
    \mathrm{Tr}(\widetilde{\rho}\rho)
    = (1 - p_{\mathrm{PEC}})^s
    + \big[1 - (1 - p_{\mathrm{PEC}})^s\big]
    \frac{1}{2^N},
\end{equation}
which is independent of the particular pure ideal state $\rho$ and hence of the ideal unitary being implemented. For correctly calibrated parameters, $p_e=\widetilde{p}_e$ and therefore $p_{\mathrm{PEC}}=0$, leading to $\mathrm{Tr}(\widetilde{\rho}\rho)=1$.

Equation~\eqref{eq: fidelity global depo} also shows that the overlap reveals the direction of the parameter miscalibration. Underestimation, $\widetilde{p}_e<p_e$, gives $p_{\mathrm{PEC}}>0$ and hence $\mathrm{Tr}(\widetilde{\rho}\rho)<1$, whereas overestimation, $\widetilde{p}_e>p_e$, gives $p_{\mathrm{PEC}}<0$ and hence $\mathrm{Tr}(\widetilde{\rho}\rho)>1$. Thus, in the global depolarizing model, the sign of the deviation of the DFE overlap from one directly indicates whether the error parameter was underestimated or overestimated. Figure~\ref{fig: fidelity global} illustrates this behavior by showing the exact overlap given in Eq.~\eqref{eq: fidelity global depo} as a function of the assumed parameter $\widetilde{p}_e$ for different numbers of gates $s$. We use $N=50$ qubits and a true error parameter $p_e=0.02$.

\begin{figure}
    \centering
    \includegraphics[width=\linewidth]{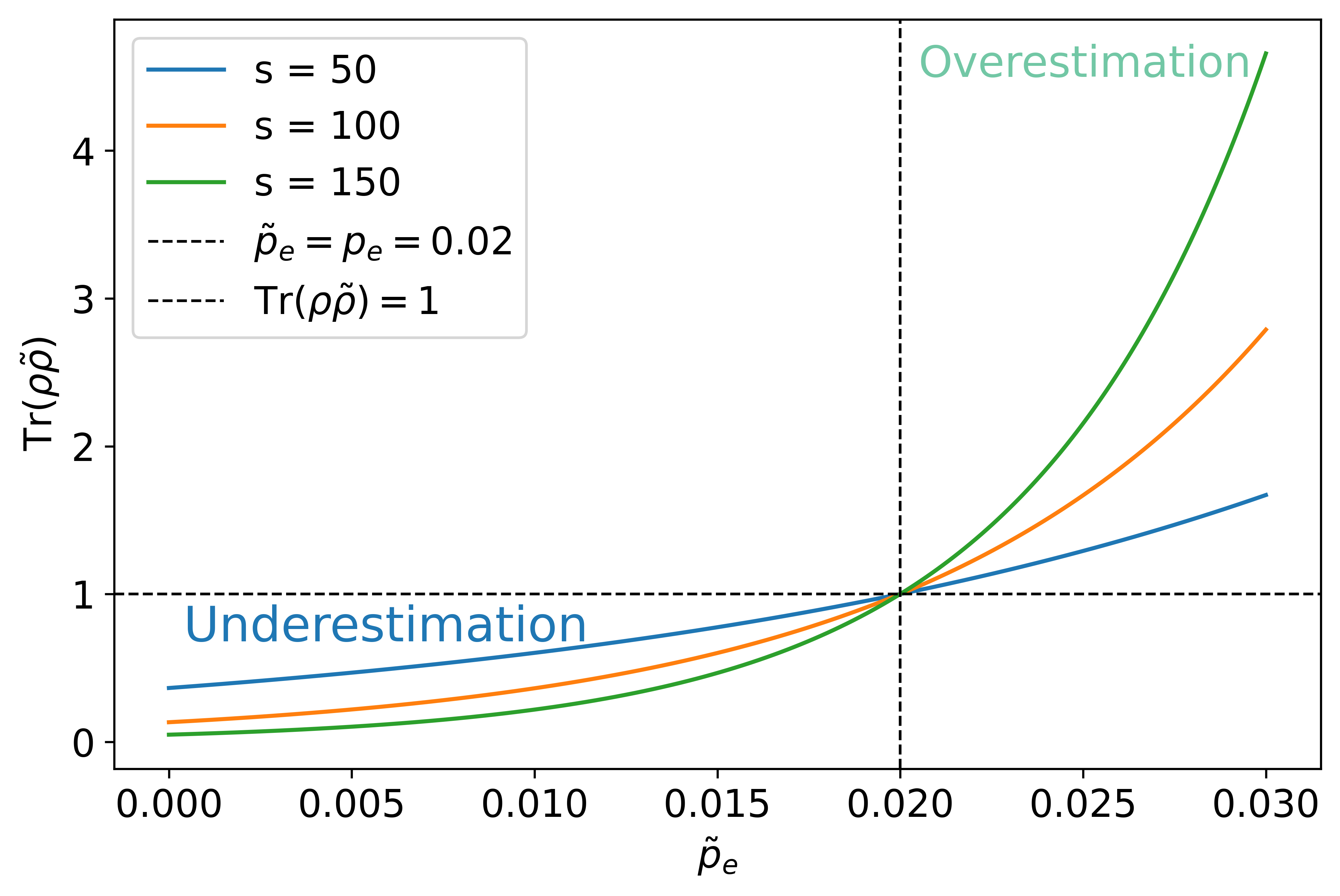}
    \caption{Exact DFE overlap between the ideal state and the effective mitigated state for the global depolarizing model as a function of the deviated error parameter $\widetilde{p}_e$. We set the true error parameter to $p_e = 0.02$ and plot the DFE overlap for $N = 50$ qubits and $s = 50$, $100$, and $150$ gates.}
    \label{fig: fidelity global}
\end{figure}

Equation~\eqref{eq: fidelity global depo} also provides a quantitative recalibration formula. Substituting the definition of \(p_{\mathrm{PEC}}\) gives
\begin{equation}
    \operatorname{Tr}(\widetilde{\rho}\rho)
    =
    2^{-N}
    +
    \left(1-2^{-N}\right)
    \left(\frac{1-p_e}{1-\widetilde{p}_e}\right)^{s}.
\end{equation}
On the physical branch \(0\leq p_e,\widetilde{p}_e<1\), this relation can be inverted as
\begin{equation}
    p_e
    =
    1-
    \left(1-\widetilde{p}_e\right)
    \left(
        \frac{\operatorname{Tr}(\widetilde{\rho}\rho)-2^{-N}}
        {1-2^{-N}}
    \right)^{1/s}.
    \label{eq:global-depolarizing-recalibration}
\end{equation}
Thus, within the global depolarizing model, the overlap determines not only the direction of the miscalibration but also the true error parameter. For finite data, replacing the exact overlap by \(\widehat{Y}\) yields a corresponding plug-in estimate, with uncertainty obtained by propagating the confidence interval in Eq.~\eqref{main bound precision Y}.

\section{Conclusions \& Outlook}
\label{sec: conclusions}

We have introduced a protocol for testing the consistency of the error parameters used in probabilistic error cancellation. The protocol combines probabilistic error cancellation with direct fidelity estimation applied to the effective mitigated state and can be implemented using classically simulable surrogate circuits that preserve key features of the target computation. In this way, it provides an end-to-end test of a PEC implementation directly on hardware.

Our main analytical result is that the DFE overlap can serve not only as a witness of parameter miscalibration but, under suitable conditions, also as a diagnostic of its direction. For global depolarizing noise and for Clifford circuits subject to stochastic Pauli noise, consistent overestimation and underestimation lead to overlaps above and below one, respectively. For arbitrary circuits, the underestimation result remains general, while the corresponding overestimation result is established perturbatively for sufficiently small deviations.

We further showed that the purity of the effective mitigated operator provides a stronger and more general diagnostic. Under consistent miscalibration, its deviation from unity identifies the direction of the miscalibration for arbitrary circuits and arbitrary parameter deviations. Together, the overlap and purity determine the Hilbert--Schmidt distance between the effective mitigated operator and the ideal state, thereby extending the consistency test into a quantitative state-distance certificate.

More broadly, our results provide experimentally accessible tools for validating the assumptions underlying noise-aware quantum error mitigation and for diagnosing inaccuracies in the error parameters used by PEC. For future research several open questions remain; for instance, it would be interesting to study in more detail the role of mid-circuit measurements in probabilistic error cancellation, and the resulting trade-off between their additional errors and reduction of PEC overhead. It would be furthermore interesting to combine PEC with Hamiltonian and Lindbladian learning techniques~\cite{Olsacher2025,Kraft2026,vandenBerg2026}. Moreover, it would be interesting to study the effect of stochastic error parameters that include statistical uncertainty about the error parameters used for PEC.

\begin{acknowledgments}
We acknowledge helpful discussions with Marc Langer. We acknowledge the BMW endowment fund. This publication has received funding from the European Union's Horizon 2020 HORIZON Research and Innovation Actions Programme under the calls HORIZON-CL4-2022-QUANTUM-02-SGA via Grant Agreement No. 101113690 (PASQuanS2.1) and HORIZON-CL4-2021-DIGITAL-EMERGING-02-10 via Grant Agreement No. 101080085 (QCFD). The codes to generate the numerical simulations are available from the corresponding author upon reasonable request.

{\bf AI statement:} We have used ChatGPT 5.6 and Claude Opus 5 for text revisions, grammar checking, consistency review, and editorial suggestions. ChatGPT 5.6 was additionally used as an interactive assistant to test derivations, and explore alternative formulations. In particular, it suggested the quasiprobability two-copy estimator for the purity of the effective PEC operator. All resulting arguments and statements were independently checked, and revised where necessary. The authors retained full responsibility for all scientific content and final manuscript text.
\end{acknowledgments}

\appendix
\section{A sufficient condition for the existence of the inverse Pauli map}
\label{appendix inverse channel}
In this appendix, we provide a sufficient condition for the invertibility of the Pauli error channel
\begin{equation}
    \mathcal{D}_k(\cdot)
    =
    \sum_{P\in\mathcal{P}_n}
    c_k(P)\,P(\cdot)P.
\end{equation}
Since conjugation by a Pauli operator maps every Pauli operator $P$ to either $P$ or $-P$, the Pauli operators are eigenoperators of $\mathcal{D}_k$:
\begin{equation}
    \mathcal{D}_k(P)
    =
    f_{k,P}P,
\end{equation}
where $f_{k,P}$ is the Pauli fidelity associated with $P$. It follows immediately that $\mathcal{D}_k$ is invertible if and only if $f_{k,P}\neq0$ for every $P\in\mathcal{P}_n$, in which case
$\mathcal{D}_k^{-1}(P)=f_{k,P}^{-1}P$. Since the Pauli-conjugation maps $P(\cdot)P$ form a basis for Pauli-diagonal linear maps, the inverse is itself Pauli diagonal and can be written as
\begin{equation}
    \mathcal{D}_k^{-1}(\cdot)
    =
    \sum_{P\in\mathcal{P}_n}
    a_k(P)\,P(\cdot)P.
\end{equation}
The coefficients $a_k(P)$ are not necessarily nonnegative, so $\mathcal{D}_k^{-1}$ is not, in general, a completely positive map and therefore does not necessarily represent a physical quantum channel.

To express this invertibility condition and the inverse coefficients $a_k(P)$ directly in terms of the channel coefficients $c_k(P)$, we employ the Pauli transfer matrix (PTM) formalism, which will also be used in later derivations~\cite{Roncallo_2023,Hantzko_2025}. The PTM $T^{(k)}$ of $\mathcal{D}_k$ is diagonal in the Pauli basis, with matrix elements
\begin{align}
    T^{(k)}_{ij}
    &=
    \frac{1}{2^n}
    \operatorname{Tr}
    \!\left[
        P_i\mathcal{D}_k(P_j)
    \right] \nonumber\\
    &=
    \delta_{ij}
    \sum_{P\in\mathcal{P}_n}
    c_k(P)(-1)^{J_{P,P_j}},
\end{align}
where $P_i$ and $P_j$ denote elements of the phase-free Pauli group $\mathcal{P}_n=\{I,X,Y,Z\}^{\otimes n}$.

For Pauli strings $P_a=\bigotimes_{r=1}^n P_{a,r}$ and $P_b=\bigotimes_{r=1}^n P_{b,r}$, we define
\begin{equation}
    \label{eq: appendix J}
    J_{P_a,P_b}
    =
    \left|
    \left\{
        r:
        P_{a,r}P_{b,r}
        =
        -P_{b,r}P_{a,r}
    \right\}
    \right|.
\end{equation}
Thus, $J_{P_a,P_b}$ counts the number of qubit positions at which the corresponding single-qubit Pauli operators anticommute. In particular,
\begin{equation}
    P_aP_b
    =
    (-1)^{J_{P_a,P_b}}P_bP_a,
\end{equation}
so only the parity of $J_{P_a,P_b}$ determines whether the two Pauli strings commute or anticommute.

The diagonal elements of $T^{(k)}$ are the Pauli fidelities,
\begin{equation}
    f_{k,P}=T^{(k)}_{P,P}.
\end{equation}
They are related to the Pauli coefficients by
\begin{equation}
    \label{eq: fidelity definition appendix}
    f_{k,P}
    =
    \sum_{P'\in\mathcal{P}_n}
    c_k(P')(-1)^{J_{P,P'}},
\end{equation}
while the inverse relation is~\cite{van_den_Berg_2023}
\begin{equation}
    \label{eq: coefficients from fidelities}
    c_k(P)
    =
    \frac{1}{4^n}
    \sum_{P'\in\mathcal{P}_n}
    (-1)^{J_{P,P'}}
    f_{k,P'}.
\end{equation}

Since $T^{(k)}$ is diagonal, it is invertible if and only if none of its diagonal entries vanishes. When this condition is satisfied,
\begin{equation}
    \left[(T^{(k)})^{-1}\right]_{ij}
    =
    \delta_{ij}\frac{1}{f_{k,P_i}}.
\end{equation}
Applying the inverse transformation in Eq.~(\ref{eq: coefficients from fidelities}) therefore gives
\begin{equation}
    a_k(P)
    =
    \frac{1}{4^n}
    \sum_{P'\in\mathcal{P}_n}
    (-1)^{J_{P,P'}}
    \frac{1}{f_{k,P'}}.
\end{equation}
Hence, the inverse map exists if and only if $f_{k,P}\neq0$ for every $P\in\mathcal{P}_n$.

The same fidelity--coefficient relation also provides a simple criterion for detecting when a residual Pauli-diagonal map is not completely positive. Consider
\begin{equation}
    \mathcal{D}_{k,\mathrm{eff}}
    =
    \widetilde{\mathcal{D}}_k^{-1}\mathcal{D}_k,
\end{equation}
with Pauli coefficients $c_{k,\mathrm{eff}}(R)$ and Pauli fidelities $f_{k,P,\mathrm{eff}}$. This map is trace preserving and therefore
\begin{equation}
    \sum_{R\in\mathcal{P}_n}
    c_{k,\mathrm{eff}}(R)
    =
    1.
\end{equation}
If all its Pauli coefficients were nonnegative, then Eq.~\eqref{eq: fidelity definition appendix} would give, for every $P$,
\begin{align}
    f_{k,P,\mathrm{eff}}
    &=
    \sum_{R\in\mathcal{P}_n}
    c_{k,\mathrm{eff}}(R)
    (-1)^{J_{P,R}}
    \notag\\
    &\leq
    \sum_{R\in\mathcal{P}_n}
    c_{k,\mathrm{eff}}(R)
    =
    1.
\end{align}
Consequently, if $f_{k,P,\mathrm{eff}}>1$ for at least one Pauli operator $P$, then at least one coefficient $c_{k,\mathrm{eff}}(R)$ must be negative. Since complete positivity of a Pauli-diagonal map requires all its Pauli coefficients to be nonnegative, the residual map is then not completely positive.

We now show that $c_k(I^{\otimes n})>1/2$ is sufficient to satisfy this condition. From Eq.~(\ref{eq: fidelity definition appendix}),
\begin{equation}
    f_{k,P}
    =
    c_k(I^{\otimes n})
    +
    \sum_{P'\in
    \mathcal{P}_n\setminus\{I^{\otimes n}\}}
    c_k(P')(-1)^{J_{P,P'}}.
\end{equation}
For every $P$, this quantity obeys
\begin{align}
    f_{k,P}
    &\geq
    c_k(I^{\otimes n})
    -
    \sum_{P'\in
    \mathcal{P}_n\setminus\{I^{\otimes n}\}}
    c_k(P') \nonumber\\
    &=
    2c_k(I^{\otimes n})-1,
\end{align}
where we used the normalization condition $\sum_{P'\in\mathcal{P}_n}c_k(P')=1$. Consequently, if
\begin{equation}
    c_k(I^{\otimes n})>\frac{1}{2},
\end{equation}
then $f_{k,P}>0$ for every $P\in\mathcal{P}_n$, and the inverse map $\mathcal{D}_k^{-1}$ exists.

We emphasize that $c_k(I^{\otimes n})>1/2$ is a sufficient but not necessary condition. A Pauli channel with $c_k(I^{\otimes n})\leq1/2$ may still be invertible, provided that none of its Pauli fidelities vanishes.

Finally, although the inverse is algebraically well defined under the conditions above, its Pauli representation contains $4^n$ coefficients and therefore grows exponentially with the number $n$ of qubits on which the error channel acts. Computing and storing the inverse is consequently practical only when the noise is assumed to be local, so that each error channel acts on a bounded number of qubits.

\section{Measurement budget for DFE applied to the
effective mitigated state}
\label{appendix DFE bounds}

In this appendix, we derive the measurement budget required
to estimate the DFE overlap of the effective mitigated
operator. We first consider a general pure ideal state and
then specialize the analysis to an
$\alpha$-well-conditioned ideal state, for which the
exponential dependence on the number of qubits can be
avoided.

The finite number $m_P$ of measurement samples used to
estimate each $\widetilde{P}_{\mathrm{PEC}}$, together with
the finite number $\ell$ of sampled Pauli operators,
introduces two distinct sources of statistical uncertainty
into the estimator
\begin{equation}
    \widehat{Y}
    =
    \frac{1}{\ell}
    \sum_{i=1}^{\ell}
    \widehat{X}_{P_i}.
\end{equation}
The Pauli operators $P_i$ are sampled independently
according to the relevance distribution
$\chi_\rho(P)^2$. We denote the support of this distribution
by
\begin{equation}
    \mathcal{S}_\rho
    =
    \left\{
        P\in\mathcal{P}_N:
        \chi_\rho(P)\neq0
    \right\}.
\end{equation}
All sums and sampled Pauli operators appearing below are
implicitly restricted to $\mathcal{S}_\rho$ whenever a
division by $\chi_\rho(P)$ or
$\operatorname{Tr}[P\rho]$ occurs. We define
\begin{equation}
    X_P
    =
    \frac{
        \operatorname{Tr}[\widetilde{\rho}P]
    }{
        \operatorname{Tr}[\rho P]
    }
\end{equation}
and denote by
\begin{equation}
\label{eq: appendix Y no error in X}
    \overline{Y}
    =
    \frac{1}{\ell}
    \sum_{i=1}^{\ell}
    X_{P_i}
\end{equation}
the estimator obtained when each $X_P$ is evaluated
exactly. Thus, the difference between $\overline{Y}$ and
$\operatorname{Tr}(\widetilde{\rho}\rho)$ describes the
uncertainty arising from the finite number $\ell$ of
sampled Pauli operators, whereas the difference between
$\widehat{Y}$ and $\overline{Y}$ describes the uncertainty
arising from the finite number of measurements used to
estimate each $X_P$.

\subsection{General pure ideal states}
\label{appendix DFE bounds general}

We first bound the uncertainty arising from the finite
number $\ell$ of sampled Pauli operators. Because $\rho$ is
pure and
$\chi_\rho(P)=\operatorname{Tr}[P\rho]/\sqrt{2^N}$, the
second moment of $X_P$ satisfies~\cite{Kliesch_2021}
\begin{align}
    \mathbb{E}_{P\sim\chi_\rho^2}[X_P^2]
    &=
    \sum_{P\in\mathcal{S}_\rho}
    \chi_\rho(P)^2
    \frac{
        \operatorname{Tr}[\widetilde{\rho}P]^2
    }{
        \operatorname{Tr}[\rho P]^2
    }
    \nonumber\\
    &=
    \frac{1}{2^N}
    \sum_{P\in\mathcal{S}_\rho}
    \operatorname{Tr}[\widetilde{\rho}P]^2
    \nonumber\\
    &\leq
    \frac{1}{2^N}
    \sum_{P\in\mathcal{P}_N}
    \operatorname{Tr}[\widetilde{\rho}P]^2
    \nonumber\\
    &=
    \operatorname{Tr}[\widetilde{\rho}^{\,2}],
\end{align}
where the last equality follows from the orthogonality of
the Pauli basis. Consequently,
\begin{equation}
    \operatorname{Var}_{P\sim\chi_\rho^2}[X_P]
    \leq
    \mathbb{E}_{P\sim\chi_\rho^2}[X_P^2]
    \leq
    \operatorname{Tr}[\widetilde{\rho}^{\,2}].
\end{equation}

Unlike a physical density operator, the effective
mitigated operator need not have purity upper-bounded by
one. Nevertheless, its PEC decomposition gives
\begin{equation}
    \widetilde{\rho}
    =
    \sum_{\vec{P}'}
    \widetilde{a}(\vec{P}')
    \rho'_{\mathrm{noisy}}(\vec{P}').
\end{equation}
For any two physical density operators,
\begin{equation}
    0
    \leq
    \operatorname{Tr}
    \!\left[
        \rho'_{\mathrm{noisy}}(\vec{P}')
        \rho'_{\mathrm{noisy}}(\vec{P}'')
    \right]
    \leq
    1.
\end{equation}
It follows that
\begin{align}
    \operatorname{Tr}[\widetilde{\rho}^{\,2}]
    &=
    \sum_{\vec{P}'}
    \sum_{\vec{P}''}
    \widetilde{a}(\vec{P}')
    \widetilde{a}(\vec{P}'')
    \operatorname{Tr}
    \!\left[
        \rho'_{\mathrm{noisy}}(\vec{P}')
        \rho'_{\mathrm{noisy}}(\vec{P}'')
    \right]
    \nonumber\\
    &\leq
    \sum_{\vec{P}'}
    \sum_{\vec{P}''}
    \left|
        \widetilde{a}(\vec{P}')
        \widetilde{a}(\vec{P}'')
    \right|
    \nonumber\\
    &=
    \left(
        \sum_{\vec{P}'}
        \left|
            \widetilde{a}(\vec{P}')
        \right|
    \right)^2
    =
    \widetilde{Q}^{\,2}.
\end{align}

Since the Pauli operators $P_i$ are sampled independently,
\begin{equation}
    \operatorname{Var}[\overline{Y}]
    =
    \frac{
        \operatorname{Var}_{P\sim\chi_\rho^2}[X_P]
    }{
        \ell
    }
    \leq
    \frac{
        \widetilde{Q}^{\,2}
    }{
        \ell
    }.
\end{equation}
Chebyshev's inequality therefore gives
\begin{align}
    \Pr\!\left[
        \left|
            \overline{Y}
            -
            \operatorname{Tr}(\widetilde{\rho}\rho)
        \right|
        \geq
        \epsilon
    \right]
    &\leq
    \frac{
        \operatorname{Var}[\overline{Y}]
    }{
        \epsilon^2
    }
    \nonumber\\
    &\leq
    \frac{
        \widetilde{Q}^{\,2}
    }{
        \ell\epsilon^2
    }.
\end{align}
Hence, choosing
\begin{equation}
\label{eq:general_ell_bound}
    \ell
    \geq
    \frac{
        \widetilde{Q}^{\,2}
    }{
        \epsilon^2\delta
    }
\end{equation}
ensures
\begin{equation}
    \Pr\!\left[
        \left|
            \overline{Y}
            -
            \operatorname{Tr}(\widetilde{\rho}\rho)
        \right|
        \geq
        \epsilon
    \right]
    \leq
    \delta.
\end{equation}

We next bound the uncertainty arising from the finite
number $m_P$ of measurements used to estimate each $X_P$.
The empirical estimator is
\begin{align}
    \widehat{X}_P
    &=
    \frac{
        \widetilde{P}_{\mathrm{PEC}}
    }{
        \operatorname{Tr}[P\rho]
    }
    \nonumber\\
    &=
    \frac{
        \widetilde{Q}
    }{
        m_P\operatorname{Tr}[P\rho]
    }
    \sum_{j=1}^{m_P}
    \operatorname{sign}
    \!\left[
        \widetilde{a}(\vec{P}'_j)
    \right]
    \mu_j(P),
    \qquad
    \vec{P}'_j\sim\widetilde{p},
    \label{eq:empirical_XP}
\end{align}
where $\mu_j(P)\in\{-1,1\}$ is the outcome of the
single-shot measurement of $P$.

We condition on a fixed realization of
$P_1,\ldots,P_\ell$. Under this conditioning,
$\overline{Y}$ is fixed and
\begin{equation}
    \mathbb{E}
    \!\left[
        \widehat{Y}
        \mid
        P_1,\ldots,P_\ell
    \right]
    =
    \overline{Y}.
\end{equation}
The individual random terms entering
$\ell\widehat{Y}$ are
\begin{equation}
    Z_{i,j}
    =
    \frac{
        \widetilde{Q}
    }{
        m_{P_i}\operatorname{Tr}[P_i\rho]
    }
    \operatorname{sign}
    \!\left[
        \widetilde{a}(\vec{P}'_{i,j})
    \right]
    \mu_{i,j}(P_i),
\end{equation}
and satisfy
\begin{equation}
    -\frac{
        \widetilde{Q}
    }{
        m_{P_i}
        \left|
            \operatorname{Tr}[P_i\rho]
        \right|
    }
    \leq
    Z_{i,j}
    \leq
    \frac{
        \widetilde{Q}
    }{
        m_{P_i}
        \left|
            \operatorname{Tr}[P_i\rho]
        \right|
    }.
\end{equation}
Hoeffding's inequality therefore yields
\begin{align}
    &\Pr\!\left[
        \left|
            \widehat{Y}-\overline{Y}
        \right|
        \geq
        \epsilon
        \,\middle|\,
        P_1,\ldots,P_\ell
    \right]
    \nonumber\\
    &\quad\leq
    2\exp\!\left[
        -
        \frac{
            \ell^2\epsilon^2
        }{
            \displaystyle
            \sum_{i=1}^{\ell}
            \frac{
                2\widetilde{Q}^{\,2}
            }{
                m_{P_i}
                \operatorname{Tr}[P_i\rho]^2
            }
        }
    \right]
    \nonumber\\
    &\quad=
    2\exp\!\left[
        -
        \frac{
            \ell^2\epsilon^2
        }{
            \displaystyle
            \sum_{i=1}^{\ell}
            \frac{
                2\widetilde{Q}^{\,2}
            }{
                m_{P_i}
                2^N\chi_\rho(P_i)^2
            }
        }
    \right].
    \label{eq:conditional_measurement_bound}
\end{align}
For every $P\in\mathcal{S}_\rho$, this probability is
upper-bounded by $\delta$ if
\begin{equation}
\label{eq:general_mP_bound}
    m_P
    \geq
    \frac{
        2\widetilde{Q}^{\,2}
    }{
        2^N\chi_\rho(P)^2
        \ell\epsilon^2
    }
    \ln\!\left(
        \frac{2}{\delta}
    \right).
\end{equation}
Because this conditional bound holds for every realization
of $P_1,\ldots,P_\ell$, it also holds without conditioning:
\begin{equation}
    \Pr\!\left[
        \left|
            \widehat{Y}-\overline{Y}
        \right|
        \geq
        \epsilon
    \right]
    \leq
    \delta.
\end{equation}

Combining the two sources of uncertainty using the triangle
inequality and the union bound~\cite{Kliesch_2021} gives
\begin{equation}
\label{bound precision Y}
    \Pr\!\left[
        \left|
            \widehat{Y}
            -
            \operatorname{Tr}(\widetilde{\rho}\rho)
        \right|
        \leq
        2\epsilon
    \right]
    \geq
    1-2\delta
\end{equation}
when
\begin{equation}
    \ell
    =
    \left\lceil
        \frac{
            \widetilde{Q}^{\,2}
        }{
            \epsilon^2\delta
        }
    \right\rceil,
    \qquad
    m_P
    =
    \left\lceil
        \frac{
            2\widetilde{Q}^{\,2}
        }{
            2^N\chi_\rho(P)^2
            \ell\epsilon^2
        }
        \ln\!\left(
            \frac{2}{\delta}
        \right)
    \right\rceil.
\end{equation}

Although $m_P$ is fixed once $P$ is specified, $m_{P_i}$
is a random variable because $P_i$ is sampled according to
$\chi_\rho(P)^2$. Its expectation satisfies
\begin{align}
    \mathbb{E}[m_P]
    &=
    \sum_{P\in\mathcal{S}_\rho}
    \chi_\rho(P)^2m_P
    \nonumber\\
    &\leq
    1
    +
    \sum_{P\in\mathcal{S}_\rho}
    \frac{
        2\widetilde{Q}^{\,2}
    }{
        2^N\ell\epsilon^2
    }
    \ln\!\left(
        \frac{2}{\delta}
    \right)
    \nonumber\\
    &=
    1
    +
    \frac{
        2\widetilde{Q}^{\,2}
        |\mathcal{S}_\rho|
    }{
        2^N\ell\epsilon^2
    }
    \ln\!\left(
        \frac{2}{\delta}
    \right)
    \nonumber\\
    &\leq
    1
    +
    \frac{
        2^{N+1}\widetilde{Q}^{\,2}
    }{
        \ell\epsilon^2
    }
    \ln\!\left(
        \frac{2}{\delta}
    \right),
\end{align}
where we used
$|\mathcal{S}_\rho|\leq4^N$.

The total expected number of single-shot measurements is
therefore
\begin{align}
    N_t
    &=
    \mathbb{E}
    \!\left[
        \sum_{i=1}^{\ell}
        m_{P_i}
    \right]
    =
    \ell\mathbb{E}[m_P]
    \nonumber\\
    &\leq
    \ell
    +
    \frac{
        2^{N+1}\widetilde{Q}^{\,2}
    }{
        \epsilon^2
    }
    \ln\!\left(
        \frac{2}{\delta}
    \right)
    \nonumber\\
    &\leq
    1
    +
    \frac{
        \widetilde{Q}^{\,2}
    }{
        \epsilon^2\delta
    }
    +
    \frac{
        2\cdot2^N\widetilde{Q}^{\,2}
    }{
        \epsilon^2
    }
    \ln\!\left(
        \frac{2}{\delta}
    \right).
\end{align}
For standard DFE between physical quantum states, setting
$\widetilde{Q}=1$ recovers
~\cite{Flammia_2011,Kliesch_2021}
\begin{equation}
    N_t
    \leq
    1
    +
    \frac{1}{\epsilon^2\delta}
    +
    \frac{
        2\cdot2^N
    }{
        \epsilon^2
    }
    \ln\!\left(
        \frac{2}{\delta}
    \right).
\end{equation}

\subsection{Well-conditioned ideal states}
\label{appendix DFE bounds well-conditioned}

We now specialize the preceding analysis to an
$\alpha$-well-conditioned ideal state, with
$\alpha\in(0,1]$. By definition, for every Pauli operator
$P$, either $\operatorname{Tr}[\rho P]=0$ or
\begin{equation}
    \left|
        \operatorname{Tr}[\rho P]
    \right|
    \geq
    \alpha
\end{equation}
~\cite{Flammia_2011}. Every Pauli operator sampled by DFE
belongs to $\mathcal{S}_\rho$ and therefore satisfies the
second condition.

For every physical state
$\rho'_{\mathrm{noisy}}(\vec{P}')$ and Pauli operator $P$,
\begin{equation}
    \left|
        \operatorname{Tr}
        \!\left[
            P\rho'_{\mathrm{noisy}}(\vec{P}')
        \right]
    \right|
    \leq
    1.
\end{equation}
The PEC decomposition consequently implies
\begin{align}
    \left|
        \operatorname{Tr}[P\widetilde{\rho}]
    \right|
    &=
    \left|
        \sum_{\vec{P}'}
        \widetilde{a}(\vec{P}')
        \operatorname{Tr}
        \!\left[
            P\rho'_{\mathrm{noisy}}(\vec{P}')
        \right]
    \right|
    \nonumber\\
    &\leq
    \sum_{\vec{P}'}
    \left|
        \widetilde{a}(\vec{P}')
    \right|
    =
    \widetilde{Q}.
\end{align}
Hence, for every sampled Pauli operator,
\begin{equation}
    |X_P|
    =
    \frac{
        \left|
            \operatorname{Tr}[P\widetilde{\rho}]
        \right|
    }{
        \left|
            \operatorname{Tr}[P\rho]
        \right|
    }
    \leq
    \frac{
        \widetilde{Q}
    }{
        \alpha
    }.
\end{equation}

The random variables $X_{P_i}$ therefore lie in the
interval
\begin{equation}
    -\frac{\widetilde{Q}}{\alpha}
    \leq
    X_{P_i}
    \leq
    \frac{\widetilde{Q}}{\alpha}.
\end{equation}
Applying Hoeffding's inequality to
$\overline{Y}$ gives
\begin{equation}
    \Pr\!\left[
        \left|
            \overline{Y}
            -
            \operatorname{Tr}(\widetilde{\rho}\rho)
        \right|
        \geq
        \epsilon
    \right]
    \leq
    2\exp\!\left[
        -
        \frac{
            \ell\epsilon^2\alpha^2
        }{
            2\widetilde{Q}^{\,2}
        }
    \right].
\end{equation}
This probability is at most $\delta$ whenever
\begin{equation}
\label{eq:well_conditioned_ell_bound}
    \ell
    \geq
    \frac{
        2\widetilde{Q}^{\,2}
    }{
        \epsilon^2\alpha^2
    }
    \ln\!\left(
        \frac{2}{\delta}
    \right).
\end{equation}

For the measurement uncertainty, the conditional
Hoeffding bound in
Eq.~\eqref{eq:conditional_measurement_bound}, together with
$|\operatorname{Tr}[P\rho]|\geq\alpha$, shows that it is
sufficient to choose
\begin{equation}
    m_P
    \geq
    \frac{
        2\widetilde{Q}^{\,2}
    }{
        \alpha^2\ell\epsilon^2
    }
    \ln\!\left(
        \frac{2}{\delta}
    \right).
\end{equation}
This bound is independent of the sampled Pauli operator.
Combining the two uncertainty bounds using the triangle
inequality and the union bound gives
\begin{equation}
    \Pr\!\left[
        \left|
            \widehat{Y}
            -
            \operatorname{Tr}(\widetilde{\rho}\rho)
        \right|
        \leq
        2\epsilon
    \right]
    \geq
    1-2\delta
\end{equation}
by choosing
\begin{equation}
    \ell
    =
    \left\lceil
        \frac{
            2\widetilde{Q}^{\,2}
        }{
            \epsilon^2\alpha^2
        }
        \ln\!\left(
            \frac{2}{\delta}
        \right)
    \right\rceil
\end{equation}
and
\begin{equation}
    m_P
    =
    \left\lceil
        \frac{
            2\widetilde{Q}^{\,2}
        }{
            \alpha^2\ell\epsilon^2
        }
        \ln\!\left(
            \frac{2}{\delta}
        \right)
    \right\rceil
    =
    1.
\end{equation}
The last equality follows because the expression inside the
ceiling is strictly positive and, by the chosen value of
$\ell$, is no greater than one.

Thus, for an $\alpha$-well-conditioned ideal state, the
total number of single-shot measurements is
\begin{equation}
    N_t
    =
    \ell
    =
    \left\lceil
        \frac{
            2\widetilde{Q}^{\,2}
        }{
            \epsilon^2\alpha^2
        }
        \ln\!\left(
            \frac{2}{\delta}
        \right)
    \right\rceil,
\end{equation}
and therefore
\begin{equation}
    N_t
    \leq
    1
    +
    \frac{
        2\widetilde{Q}^{\,2}
    }{
        \epsilon^2\alpha^2
    }
    \ln\!\left(
        \frac{2}{\delta}
    \right).
\end{equation}
For constant $\alpha$, this bound has no explicit
dependence on the number of qubits.:

\section{pPEC grouping for Matchgate circuits}
\label{appendix ppec MG}

Error-propagated probabilistic error cancellation
(pPEC)~\cite{Scheiber2025reducedsampling} reduces the PEC
sampling factor by propagating Pauli corrections through a
circuit and combining correction paths that produce the
same physical circuit. The construction is particularly
natural for Clifford circuits, since Clifford conjugation
preserves the Pauli group. Here, we show that the same
grouping principle can be applied to Matchgate (MG)
circuits under a noise-invariance assumption.

Using the notation of the main text, the effective
mitigated output produced by standard PEC is
\begin{equation}
\label{eq: appendix final unitary}
    \widetilde{\rho}
    =
    \sum_{\vec{P}'}
    \widetilde{a}(\vec{P}')
    \rho'_{\mathrm{noisy}}(\vec{P}'),
\end{equation}
where
$\vec{P}'=(P'_1,\ldots,P'_s)$ specifies the Pauli
corrections inserted at the different circuit layers. The
corresponding standard PEC sampling factor is
\begin{equation}
    \widetilde{Q}
    =
    \sum_{\vec{P}'}
    \left|
        \widetilde{a}(\vec{P}')
    \right|.
\end{equation}

Consider the ideal unitary underlying one modified MG
circuit,
\begin{equation}
    U'(\vec{P}')
    =
    P'_sM_s\cdots P'_2M_2P'_1M_1,
\end{equation}
where $M_1$ acts first. A two-qubit MG connected
continuously to the identity can be written as
\begin{equation}
    M_k
    =
    e^{iH_k},
    \qquad
    H_k
    =
    \sum_{A\in\mathcal{G}_{\mathrm{MG}}}
    \theta_{k,A}A,
\end{equation}
where
\begin{equation}
    \mathcal{G}_{\mathrm{MG}}
    =
    \left\{
        Z_i,\,
        Z_j,\,
        X_iX_j,\,
        X_iY_j,\,
        Y_iX_j,\,
        Y_iY_j
    \right\}.
\end{equation}

Let $R\in\mathcal{P}_N$ be a phase-free Pauli string.
Because $R^2=I$, it can be propagated through an MG
according to
\begin{align}
\label{eq: no conmute MG P}
    RM_k
    &=
    \left(RM_kR\right)R
    \nonumber\\
    &=
    M_k^{(R)}R,
\end{align}
where
\begin{equation}
    M_k^{(R)}
    =
    \exp\!\left[
        i
        \sum_{A\in\mathcal{G}_{\mathrm{MG}}}
        s_{A,R}\theta_{k,A}A
    \right],
    \qquad
    RAR=s_{A,R}A,
\end{equation}
with $s_{A,R}\in\{-1,+1\}$. Thus, Pauli propagation
preserves the MG structure and only changes signs of its
Hamiltonian coefficients.

For an insertion string $\vec{P}'$, define the cumulative
phase-free Pauli
\begin{equation}
    R_k(\vec{P}')
    =
    P'_sP'_{s-1}\cdots P'_k.
\end{equation}
Repeated application of
Eq.~\eqref{eq: no conmute MG P} gives
\begin{equation}
    U'(\vec{P}')
    =
    M'_s(\vec{P}')
    \cdots
    M'_1(\vec{P}')
    P_T(\vec{P}'),
\end{equation}
up to an irrelevant global phase, where
\begin{equation}
    M'_k(\vec{P}')
    =
    M_k^{(R_k(\vec{P}'))},
    \qquad
    P_T(\vec{P}')
    =
    R_1(\vec{P}').
\end{equation}
Each PEC correction path therefore determines a Pauli
correction at the circuit input and a sequence of
sign-modified MGs.

For the computational-basis product inputs used in the
protocol, the propagated input Pauli can be reduced at the
level of the input density operator. On each qubit,
$I$ and $Z$ have the same action up to a phase, as do $X$
and $Y$. We may therefore apply the reduction
\begin{equation}
    I,Z\longmapsto I,
    \qquad
    X,Y\longmapsto X
\end{equation}
and denote the resulting string by $P_T^{XI}$. For any
computational-basis product state $\rho_0$,
\begin{equation}
    P_T\rho_0P_T^\dagger
    =
    P_T^{XI}\rho_0
    \left(P_T^{XI}\right)^\dagger.
\end{equation}

To extend this rewriting to the noisy circuits, we assume
that the Pauli-stochastic noise is covariant under Pauli
conjugation and that the noise associated with an MG is
invariant under the parameter-sign changes generated by
Eq.~\eqref{eq: no conmute MG P}. The crosstalk and
over-rotation models considered in this manuscript satisfy
this assumption. The crosstalk channel does not depend on
the MG angles. For the over-rotation model,
\begin{equation}
    H_k^{(\mathrm{OR})}
    =
    \gamma
    \sum_P
    |\beta_{k,P}|P,
\end{equation}
and a propagated sign change
$\beta_{k,P}\mapsto s_{P,R}\beta_{k,P}$ leaves
$|\beta_{k,P}|$ unchanged.

Under these assumptions, each insertion string determines
the equivalence-class label
\begin{equation}
    g(\vec{P}')
    =
    \left(
        P_T^{XI}(\vec{P}'),
        \left\{
            M'_k(\vec{P}')
        \right\}_{k=1}^{s}
    \right).
\end{equation}
Two insertion strings with the same label produce the same
propagated noisy output state. Defining the grouped
coefficient
\begin{equation}
    \widetilde{A}_g
    =
    \sum_{\vec{P}'\,:\,g(\vec{P}')=g}
    \widetilde{a}(\vec{P}'),
\end{equation}
the effective mitigated output can be rewritten as
\begin{equation}
\label{eq: appendix final unitary grouped MG}
    \widetilde{\rho}
    =
    \sum_g
    \widetilde{A}_g
    \rho'_{\mathrm{noisy}}(g).
\end{equation}
The sampling factor of this grouped decomposition satisfies
\begin{align}
    \widetilde{Q}_{\mathrm{pPEC}}
    &=
    \sum_g
    \left|
        \widetilde{A}_g
    \right|
    \nonumber\\
    &=
    \sum_g
    \left|
        \sum_{\vec{P}'\,:\,g(\vec{P}')=g}
        \widetilde{a}(\vec{P}')
    \right|
    \nonumber\\
    &\leq
    \sum_g
    \sum_{\vec{P}'\,:\,g(\vec{P}')=g}
    \left|
        \widetilde{a}(\vec{P}')
    \right|
    \nonumber\\
    &=
    \widetilde{Q}.
\end{align}

This inequality establishes only that the exact grouping
cannot increase the sampling factor; it does not guarantee
a strict or practically significant reduction for a given
MG circuit. Equality holds, for example, if no two
correction paths produce the same propagated circuit or if
all coefficients combined within each equivalence class
have the same sign. Moreover, constructing all grouped
coefficients may require enumerating exponentially many
correction paths. Blockwise grouping can trade classical
preprocessing for a potentially smaller sampling factor,
as discussed in
Ref.~\cite{Scheiber2025reducedsampling}, but we do not
analyze this trade-off or the achievable reduction
numerically. Accordingly, the result established here is
the algebraic grouping rule and the bound
$\widetilde{Q}_{\mathrm{pPEC}}\leq\widetilde{Q}$. All
numerical experiments reported in this manuscript use
standard PEC and do not apply this or any other
sampling-reduction strategy.

\section{Effective PEC mitigated state for the global depolarizing error model}
\label{appendix global depolarizing mitigated state}

Here we derive the effective mitigated state obtained when
PEC is applied to a circuit affected by global depolarizing
noise after every gate. Define the completely depolarizing
channel
\begin{equation}
    \mathcal{T}(\rho)
    =
    \frac{1}{4^N}
    \sum_{P\in\mathcal{P}_N}P\rho P
    =
    \operatorname{Tr}(\rho)\frac{I}{2^N}.
\end{equation}
The global depolarizing channel with error parameter $p_e$
can then be written as
\begin{equation}
    \mathcal{D}_{p_e}
    =
    (1-p_e)\mathcal{I}+p_e\mathcal{T}.
\end{equation}

PEC is implemented using an estimated error parameter
$\widetilde p_e$. For $0\leq\widetilde p_e<1$, the estimated
channel is invertible, and its inverse map is
\begin{equation}
    \mathcal{D}_{\widetilde p_e}^{-1}
    =
    \frac{1}{1-\widetilde p_e}
    \left(
        \mathcal{I}-\widetilde p_e\mathcal{T}
    \right),
\end{equation}
where we used $\mathcal{T}^2=\mathcal{T}$. Composing this
inverse map with the true noise channel gives
\begin{equation}
\begin{split}
    \mathcal{D}_{\widetilde p_e}^{-1}
    \circ\mathcal{D}_{p_e}
    &=
    (1-p_{\mathrm{PEC}})\mathcal{I}
    +p_{\mathrm{PEC}}\mathcal{T},\\
    p_{\mathrm{PEC}}
    &=
    \frac{p_e-\widetilde p_e}
         {1-\widetilde p_e}.
\end{split}
\end{equation}
Thus, imperfect PEC produces a residual map of the same
global-depolarizing form. When $\widetilde p_e=p_e$, we
have $p_{\mathrm{PEC}}=0$ and recover the ideal operation.
For $\widetilde p_e<p_e$, the residual map is a physical
global depolarizing channel. For $\widetilde p_e>p_e$,
$p_{\mathrm{PEC}}<0$, and the residual operation is a
trace-preserving but non-completely-positive linear map.

Since the completely depolarizing channel commutes with
every unitary channel and satisfies $\mathcal{T}^2=\mathcal{T}$,
after $s$ noisy gates the final effective mitigated state is
\begin{equation}
    \widetilde{\rho}
    =
    (1-p_{\mathrm{PEC}})^s\rho
    +
    \left[
        1-(1-p_{\mathrm{PEC}})^s
    \right]
    \frac{I}{2^N}.
\end{equation}

\section{Pauli channel coefficients for the crosstalk and over-rotation noise channels}
\label{appendix pauli coefficients crosstalk plus overotaion}

In this appendix, we derive the Pauli coefficients and
the corresponding Pauli-fidelity conditions for the
crosstalk and over-rotation noise models considered in
Sec.~\ref{sec: applications}. For a stochastic Pauli
channel, the Pauli fidelity associated with $P$ is related
to the Pauli coefficients by
\begin{align}
    f_{k,P}
    &=
    c_k(I)
    +
    \sum_{\substack{[P',P]=0\\P'\neq I}}
    c_k(P')
    -
    \sum_{\{P',P\}=0}
    c_k(P')
    \notag\\
    &=
    1
    -
    2
    \sum_{\{P',P\}=0}
    c_k(P').
    \label{eq:appendix-F-fidelities}
\end{align}

The crosstalk noise channel characterized by
\begin{align}
    \mathcal{C}_{ij}(\rho)
    &= (1-p_c)\rho
    +\frac{p_c}{4}\Big(
    X_iX_j\rho X_iX_j
    +X_iY_j\rho X_iY_j
    \notag\\
    &\qquad\qquad
    +Y_iX_j\rho Y_iX_j
    +Y_iY_j\rho Y_iY_j
    \Big)
\end{align}
is already a Pauli-stochastic channel. Its nonzero Pauli
coefficients are
\begin{align}
    c(I)&=1-p_c,\notag\\
    c(X_iX_j)
    &=c(X_iY_j)
    =c(Y_iX_j)
    =c(Y_iY_j)
    =\frac{p_c}{4}.
\end{align}
All other nonidentity Pauli coefficients are zero.
Therefore, $\widetilde p_c<p_c$ satisfies the coefficientwise
underestimation definition of
Sec.~\ref{sec: directionality general}, whereas
$\widetilde p_c>p_c$ satisfies the corresponding
overestimation definition.

Within the positive-fidelity regime stated in
Sec.~\ref{sec: directionality general},
Eq.~\eqref{eq:appendix-F-fidelities} then gives
\begin{equation}
    0<f_{k,P}\leq\widetilde f_{k,P}
    \qquad
    \text{for all }k,P
    \label{eq:appendix-F-crosstalk-under}
\end{equation}
when $\widetilde p_c<p_c$, with at least one strict
inequality. Conversely, when $\widetilde p_c>p_c$,
\begin{equation}
    0<\widetilde f_{k,P}\leq f_{k,P}
    \qquad
    \text{for all }k,P,
    \label{eq:appendix-F-crosstalk-over}
\end{equation}
again with at least one strict inequality.

We next consider the coherent over-rotation error defined
by
\begin{equation}
    O_kG_k
    =
    e^{iH_k^{(\mathrm{OR})}}
    e^{iH_k},
\end{equation}
where the noiseless generating Hamiltonian is
\begin{equation}
    H_k
    =
    \sum_P
    \beta_{k,P}P,
\end{equation}
and the coherent over-rotation is generated by
\begin{equation}
    H_k^{(\mathrm{OR})}
    =
    \gamma
    \sum_P
    |\beta_{k,P}|P.
\end{equation}
Any identity component of $H_k$ can be omitted because it
contributes only a global phase. We therefore define
\begin{equation}
    A_k
    =
    \sum_{P\neq I}
    |\beta_{k,P}|P,
\end{equation}
so that
\begin{equation}
    O_k
    =
    e^{i\gamma A_k}.
\end{equation}

After applying Pauli twirling~\cite{van_den_Berg_2023,Wallman_2016},
the corresponding Pauli coefficients are
\begin{equation}
    c_k(P)
    =
    \frac{1}{4^N}
    \left|
        \operatorname{Tr}(PO_k)
    \right|^2.
\end{equation}
To determine their behavior for small positive $\gamma$,
we expand
\begin{equation}
    O_k
    =
    I
    +
    i\gamma A_k
    -
    \frac{\gamma^2}{2}A_k^2
    +
    \mathcal{O}(\gamma^3).
\end{equation}

For the identity coefficient,
$\operatorname{Tr}(A_k)=0$, and the orthogonality of the
Pauli basis gives
\begin{equation}
    \operatorname{Tr}(A_k^2)
    =
    2^N
    \sum_{P\neq I}
    |\beta_{k,P}|^2.
\end{equation}
Consequently,
\begin{align}
    \operatorname{Tr}(O_k)
    &=
    2^N
    -
    \frac{\gamma^2}{2}
    \operatorname{Tr}(A_k^2)
    +
    \mathcal{O}(\gamma^3)
    \notag\\
    &=
    2^N
    \left(
        1
        -
        \frac{\gamma^2}{2}
        \sum_{P\neq I}
        |\beta_{k,P}|^2
    \right)
    +
    \mathcal{O}(\gamma^3).
\end{align}
It follows that
\begin{equation}
    c_k(I)
    =
    1
    -
    \gamma^2
    \sum_{P\neq I}
    |\beta_{k,P}|^2
    +
    \mathcal{O}(\gamma^4).
\end{equation}
Thus, for a nontrivial over-rotation generator, the
identity coefficient decreases with $\gamma$ in a
sufficiently small neighborhood of $\gamma=0$.

For a nonidentity Pauli operator $P$, we use the full
expansion
\begin{equation}
    \operatorname{Tr}(PO_k)
    =
    \sum_{r=0}^{\infty}
    \frac{(i\gamma)^r}{r!}
    \operatorname{Tr}(PA_k^r).
\end{equation}
Since $\operatorname{Tr}(P)=0$ for $P\neq I$, the
zeroth-order term vanishes. Let $t_P\geq1$ be the
smallest integer for which
\begin{equation}
    \operatorname{Tr}(PA_k^{t_P})
    \neq0.
\end{equation}
Then
\begin{equation}
    \operatorname{Tr}(PO_k)
    =
    \frac{(i\gamma)^{t_P}}{t_P!}
    \operatorname{Tr}(PA_k^{t_P})
    +
    \mathcal{O}(\gamma^{t_P+1}),
\end{equation}
and hence
\begin{equation}
    c_k(P)
    =
    \frac{\gamma^{2t_P}}{4^N}
    \left|
        \frac{
            \operatorname{Tr}(PA_k^{t_P})
        }{
            t_P!
        }
    \right|^2
    +
    \mathcal{O}(\gamma^{2t_P+1}).
\end{equation}
The leading coefficient is positive, so $c_k(P)$
increases with $\gamma$ for sufficiently small positive
$\gamma$. If no such integer $t_P$ exists, then
$\operatorname{Tr}(PO_k)=0$ for all $\gamma$, and the
corresponding Pauli coefficient remains zero.

Therefore, in a sufficiently small neighborhood of
$\gamma=0$, every nonidentity coefficient is a
nondecreasing function of $\gamma$, and every coefficient
that appears at a nonzero leading order is strictly
increasing. Thus, $\widetilde\gamma<\gamma$ satisfies the
coefficientwise underestimation definition, whereas
$\widetilde\gamma>\gamma$ satisfies the coefficientwise
overestimation definition.

Using Eq.~\eqref{eq:appendix-F-fidelities}, within the
perturbative positive-fidelity regime,
$\widetilde\gamma<\gamma$ implies
\begin{equation}
    0<f_{k,P}\leq\widetilde f_{k,P}
    \qquad
    \text{for all }k,P,
    \label{eq:appendix-F-overrotation-under}
\end{equation}
with at least one strict inequality for a nontrivial
over-rotation. Conversely, $\widetilde\gamma>\gamma$ implies
\begin{equation}
    0<\widetilde f_{k,P}\leq f_{k,P}
    \qquad
    \text{for all }k,P,
    \label{eq:appendix-F-overrotation-over}
\end{equation}
again with at least one strict inequality. For larger
values of $\gamma$, the exact coefficients are
trigonometric functions and are not generally monotonic.

The conditions derived above apply separately to the
crosstalk and over-rotation channels as constituent error
sources. When the two channels are concatenated,
$\widetilde p_c<p_c$ together with
$\widetilde\gamma<\gamma$ constitutes consistent
underestimation, while $\widetilde p_c>p_c$ together with
$\widetilde\gamma>\gamma$ constitutes consistent
overestimation. Since the Pauli fidelities multiply under
concatenation, these cases satisfy
Eqs.~\eqref{eq: fidelities underestimation} and
\eqref{eq: fidelities overestimation}, respectively.

We do not infer that the Pauli coefficients of the
concatenated channel satisfy the same coefficientwise
ordering. As explained in Sec.~\ref{sec: ptm general},
the extension of the analytical DFE-overlap and DFE-purity
results to the concatenated model instead follows from
the multiplicativity of the Pauli fidelities. No
corresponding conclusion holds for hybrid
miscalibration.

\section{Upper bounding the DFE overlap in the underestimation case}
\label{app:underestimation_overlap}

\begin{lemma}[Underestimation bound]
Consider an arbitrary unitary circuit acting on a pure
initial state $\rho_0$, and let $\rho$ and
$\widetilde{\rho}$ denote its ideal and effective mitigated
output states, respectively. Assume that every residual map
is Pauli diagonal and that its effective Pauli fidelities
satisfy
\begin{equation}
    0
    <
    f_{k,P,\mathrm{eff}}
    =
    \frac{f_{k,P}}{\widetilde{f}_{k,P}}
    \leq
    1
    \qquad
    \forall\, k,P.
    \label{eq:effective_fidelities_under_app}
\end{equation}
Then
\begin{equation}
    \operatorname{Tr}(\widetilde{\rho}^2)
    \leq
    1
\end{equation}
and
\begin{equation}
    \left|
        \operatorname{Tr}(\widetilde{\rho}\rho)
    \right|
    \leq
    1.
\end{equation}
In particular,
\begin{equation}
    \operatorname{Tr}(\widetilde{\rho}\rho)
    \leq
    1.
\end{equation}
\end{lemma}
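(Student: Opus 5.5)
The plan is to follow the Hilbert--Schmidt norm of the operator through the circuit, layer by layer. The lemma then reduces to two observations. Unitary channels preserve the norm $\|X\|_2^2=\operatorname{Tr}(X^\dagger X)$. A Pauli-diagonal map whose Pauli fidelities all lie in $(0,1]$ cannot increase it.

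The effective evolution is a product of maps of the form $\mathcal{D}_{k,\mathrm{eff}}\circ\mathcal{G}_k$, as in Eq.~\eqref{eq: final unitary}. So I write
\begin{equation}
\widetilde{\rho}_k=\mathcal{D}_{k,\mathrm{eff}}\bigl(G_k\widetilde{\rho}_{k-1}G_k^\dagger\bigr),
\qquad
\widetilde{\rho}_0=\rho_0 ,
\end{equation}
with $\widetilde{\rho}=\widetilde{\rho}_s$. Each $\widetilde{\rho}_k$ is Hermitian, because all maps involved are real linear combinations of Pauli or unitary conjugations.

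\textbf{Step 1 (unitary step).} For any Hermitian $X$,
\begin{equation}
\operatorname{Tr}\bigl[(G_kXG_k^\dagger)^2\bigr]=\operatorname{Tr}(X^2).
\end{equation}

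\textbf{Step 2 (residual map step).} Expand a Hermitian $Y$ in the Pauli basis as $Y=2^{-N}\sum_P y_P P$ with real $y_P=\operatorname{Tr}(PY)$. The residual map is Pauli diagonal, as established in Appendix~\ref{appendix inverse channel}. It acts locally on the $n$ qubits of gate $k$ and as the identity elsewhere. Its fidelity on an $N$-qubit Pauli string is therefore the local fidelity $f_{k,P,\mathrm{eff}}$ of the restriction of that string. This gives $\mathcal{D}_{k,\mathrm{eff}}(P)=f_{k,P,\mathrm{eff}}P$. By orthogonality of the Pauli basis,
\begin{equation}
\operatorname{Tr}\bigl[\mathcal{D}_{k,\mathrm{eff}}(Y)^2\bigr]=2^{-N}\sum_P f_{k,P,\mathrm{eff}}^2\,y_P^2\le 2^{-N}\sum_P y_P^2=\operatorname{Tr}(Y^2).
\end{equation}
The inequality uses Eq.~\eqref{eq:effective_fidelities_under_app}, which gives $f_{k,P,\mathrm{eff}}^2\le1$. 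Note that only $|f_{k,P,\mathrm{eff}}|\le1$ is needed here; complete positivity of the residual map is not required.

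\textbf{Step 3 (induction).} Iterating Steps 1 and 2 over $k=1,\dots,s$ gives
\begin{equation}
\operatorname{Tr}(\widetilde{\rho}^2)\le\operatorname{Tr}(\rho_0^2)=1,
\end{equation}
since $\rho_0$ is pure.

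\textbf{Step 4 (overlap bound).} Apply Cauchy--Schwarz for the Hilbert--Schmidt inner product:
\begin{equation}
|\operatorname{Tr}(\widetilde{\rho}\rho)|\le\sqrt{\operatorname{Tr}(\widetilde{\rho}^2)}\,\sqrt{\operatorname{Tr}(\rho^2)}\le1 .
\end{equation}
This yields both the absolute-value bound and $\operatorname{Tr}(\widetilde{\rho}\rho)\le1$.

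The only delicate point is Step 2, specifically the correct identification of the fidelities of the embedded local map on the full $N$-qubit Pauli basis. This follows directly from the tensor-product structure: a local Pauli map with identity elsewhere acts on $P=P_{\mathrm{loc}}\otimes P_{\mathrm{rest}}$ by the fidelity of $P_{\mathrm{loc}}$. Everything else is routine.
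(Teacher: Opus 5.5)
Your proposal is correct and follows the paper's proof essentially step for step. The paper also tracks $\operatorname{Tr}(\widetilde{\rho}_k^2)$ through the intermediate operator $\sigma_k=G_k\widetilde{\rho}_{k-1}G_k^\dagger$, uses $\operatorname{Tr}(P\widetilde{\rho}_k)=f_{k,P,\mathrm{eff}}\operatorname{Tr}(P\sigma_k)$ with $f_{k,P,\mathrm{eff}}^2\le 1$ to show the DFE purity cannot increase, and concludes with the Hilbert--Schmidt Cauchy--Schwarz inequality. Your explicit remarks on Hermiticity and on embedding the local residual map into the $N$-qubit Pauli basis are details the paper leaves implicit; they do not change the argument.
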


\begin{proof}
Let $\widetilde{\rho}_{k-1}$ denote the effective mitigated
state after layer $k-1$, and define the intermediate state
\begin{equation}
    \sigma_k
    =
    G_k\widetilde{\rho}_{k-1}G_k^\dagger,
    \label{eq:sigma_under_app}
\end{equation}
which corresponds to applying the $k$-th ideal gate before
the $k$-th effective map is applied. The effective
mitigated state after layer $k$ is
\begin{equation}
    \widetilde{\rho}_k
    =
    \mathcal{D}_{k,\mathrm{eff}}(\sigma_k).
    \label{eq:rho_tilde_under_app}
\end{equation}
Since the effective mitigated map is diagonal in the
Pauli basis,
\begin{equation}
    \operatorname{Tr}(P\widetilde{\rho}_k)
    =
    f_{k,P,\mathrm{eff}}
    \operatorname{Tr}(P\sigma_k).
    \label{eq:pauli_components_under_app}
\end{equation}
It follows that
\begin{align}
    \operatorname{Tr}(\widetilde{\rho}_k^2)
    &=
    \frac{1}{2^N}
    \sum_{P\in\mathcal{P}_N}
    f_{k,P,\mathrm{eff}}^2
    \operatorname{Tr}(P\sigma_k)^2
    \nonumber\\
    &\leq
    \frac{1}{2^N}
    \sum_{P\in\mathcal{P}_N}
    \operatorname{Tr}(P\sigma_k)^2
    \nonumber\\
    &=
    \operatorname{Tr}(\sigma_k^2)
    =
    \operatorname{Tr}(\widetilde{\rho}_{k-1}^2),
    \label{eq:purity_contraction_under_app}
\end{align}
where the inequality follows from
Eq.~\eqref{eq:effective_fidelities_under_app}, and the last
equality follows from the cyclic property of the trace.

Thus, the DFE purity of the effective mitigated state cannot
increase after any layer satisfying
Eq.~\eqref{eq:effective_fidelities_under_app}. For a pure
initial state, $\widetilde{\rho}_0=\rho_0$ and
$\operatorname{Tr}(\rho_0^2)=1$, and hence
\begin{equation}
    \operatorname{Tr}(\widetilde{\rho}^2)
    \leq
    1.
    \label{eq:purity_upper_under_app}
\end{equation}

The ideal output state $\rho$ is pure and therefore
satisfies $\operatorname{Tr}(\rho^2)=1$. Applying the
Hilbert--Schmidt Cauchy--Schwarz inequality gives
\begin{align}
    \left|
        \operatorname{Tr}(\widetilde{\rho}\rho)
    \right|
    &\leq
    \sqrt{
        \operatorname{Tr}(\rho^2)
        \operatorname{Tr}(\widetilde{\rho}^2)
    }
    \nonumber\\
    &\leq
    1.
    \label{eq:overlap_upper_under_app}
\end{align}
We therefore conclude that
\begin{equation}
    \operatorname{Tr}(\widetilde{\rho}\rho)
    \leq
    1
\end{equation}
for arbitrary unitary circuits whenever the effective
Pauli fidelities satisfy
Eq.~\eqref{eq:effective_fidelities_under_app}.
\end{proof}

\section{Lower bounding the DFE overlap in the overestimation case for Clifford circuits}
\label{appendix: clifford proof}

As discussed in the main manuscript and in
Appendix~\ref{appendix pauli coefficients crosstalk plus overotaion},
the effective-fidelity condition below holds for an
individually overestimated error source and is preserved
when several constituent error sources are consistently
overestimated.

\begin{lemma}[Overestimation bound for Clifford circuits]
Let the ideal circuit act on a pure initial state, and
assume that every residual map is Pauli diagonal with
effective Pauli fidelities satisfying
\begin{equation}
    f_{k,P,\mathrm{eff}}
    =
    \frac{f_{k,P}}{\widetilde{f}_{k,P}}
    \geq
    1
    \qquad
    \forall\,k,P.
\end{equation}
For a one-layer circuit,
\begin{equation}
    \operatorname{Tr}
    (\widetilde{\rho}_1\rho_1)
    \geq
    1
\end{equation}
without any assumption that the ideal gate is Clifford.
For a circuit with an arbitrary number of layers, if every
ideal layer is Clifford, then
\begin{equation}
    \operatorname{Tr}(\widetilde{\rho}\rho)
    \geq
    1.
\end{equation}
\end{lemma}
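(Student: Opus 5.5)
The plan is to work entirely in the Pauli basis, using that each residual map $\mathcal{D}_{k,\mathrm{eff}}$ is Pauli diagonal with eigenvalues $f_{k,P,\mathrm{eff}}\geq 1$. We keep the layer notation of the proof of the underestimation bound: $\sigma_k=G_k\widetilde{\rho}_{k-1}G_k^\dagger$ and $\widetilde{\rho}_k=\mathcal{D}_{k,\mathrm{eff}}(\sigma_k)$, with ideal states $\rho_k=G_k\rho_{k-1}G_k^\dagger$ and $\widetilde{\rho}_0=\rho_0$.

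\textbf{One-layer case.} Here $\sigma_1=\rho_1$ exactly, so Eq.~\eqref{eq:pauli_components_under_app} gives
\begin{equation}
    \operatorname{Tr}(\widetilde{\rho}_1\rho_1)
    =
    \frac{1}{2^N}\sum_{P\in\mathcal{P}_N}
    f_{1,P,\mathrm{eff}}\operatorname{Tr}(P\rho_1)^2
    \geq
    \frac{1}{2^N}\sum_{P}\operatorname{Tr}(P\rho_1)^2
    =\operatorname{Tr}(\rho_1^2)=1 .
\end{equation}
The inequality holds because each squared coefficient is nonnegative and multiplied by a factor at least one. The last step uses purity of $\rho_1$. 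No Clifford property of $G_1$ is used.

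\textbf{Multi-layer Clifford case.} The key observation is that a stabilizer-type pure state has a Pauli spectrum supported on a set $\mathcal{S}_k$ of Paulis with $\operatorname{Tr}(P\rho_k)=\pm1$. The Clifford layers permute Paulis up to sign. I will first note that Clifford gates acting on the pure initial state need not give stabilizer states in general, since $\rho_0$ is only assumed pure. So the argument should not rely on the ideal states being stabilizer states. Instead, I track the operator $\widetilde{\rho}_k$ in the Heisenberg-like form: by induction, I show
\begin{equation}
    \widetilde{\rho}_k=\frac{1}{2^N}\sum_{P} F_k(P)\operatorname{Tr}(P\rho_k)P,
\end{equation}
with $F_k(P)\geq 1$. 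The induction step conjugates by the Clifford $G_{k+1}$. This maps $P\mapsto \pm Q$ bijectively and carries $\operatorname{Tr}(P\rho_k)P$ to $\operatorname{Tr}(Q\rho_{k+1})Q$, because the sign appears in both factors. Thus the coefficient $F_k(P)$ is relabeled as a coefficient attached to $Q$. Applying $\mathcal{D}_{k+1,\mathrm{eff}}$ then multiplies it by $f_{k+1,Q,\mathrm{eff}}\geq1$. Hence $F_{k+1}(Q)=F_k(G_{k+1}^\dagger Q G_{k+1})\,f_{k+1,Q,\mathrm{eff}}\geq1$. Finally, $\operatorname{Tr}(\widetilde{\rho}\rho)=2^{-N}\sum_P F_s(P)\operatorname{Tr}(P\rho)^2\geq\operatorname{Tr}(\rho^2)=1$, exactly as in the one-layer case.

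\textbf{Main obstacle.} The crux is the induction step. It must ensure that Pauli components of the effective state never mix across different Paulis. This holds precisely because Clifford conjugation maps each Pauli to a single signed Pauli, so the PTM of each layer is a signed permutation matrix and commutes with the diagonal rescaling structure up to relabeling. For a non-Clifford gate this fails: a Pauli spreads into a linear combination of Paulis, and cross terms of indefinite sign appear. This is why only the one-layer statement survives without the Clifford assumption. I would carefully check the sign bookkeeping, namely that the sign $\pm$ from $G P G^\dagger=\pm Q$ appears once in the operator and once in $\operatorname{Tr}(P\rho_k)=\pm\operatorname{Tr}(Q\rho_{k+1})$, so that it cancels.
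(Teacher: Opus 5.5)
Your proof is correct and rests on the same mechanism as the paper's: the residual maps are Pauli diagonal with entries $f_{k,P,\mathrm{eff}}\geq1$, and Clifford conjugation acts as a signed permutation of Pauli strings. The difference is the invariant you carry through the induction.

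The paper keeps only sign information. It proves inductively that $\operatorname{Tr}(P\sigma_k)\operatorname{Tr}(P\rho_k)\geq0$ for every $P$. It then runs a separate induction on the overlap itself, using $\operatorname{Tr}(\sigma_{k+1}\rho_{k+1})=\operatorname{Tr}(\widetilde{\rho}_k\rho_k)$ and dropping the factors $f_{k+1,P,\mathrm{eff}}\geq1$ term by term.

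You instead keep the exact relation $\operatorname{Tr}(P\widetilde{\rho}_k)=F_k(P)\operatorname{Tr}(P\rho_k)$. Here $F_k(P)$ is the product of effective fidelities along the Clifford-propagated path of $P$. Your sign bookkeeping, $\operatorname{Tr}(P\rho_k)\,G_{k+1}PG_{k+1}^\dagger=\operatorname{Tr}(Q\rho_{k+1})\,Q$, is right. This single induction implies the paper's sign condition and gives the closed form $\operatorname{Tr}(\widetilde{\rho}\rho)=2^{-N}\sum_P F_s(P)\operatorname{Tr}(P\rho)^2$. It also gives, at no extra cost, the layerwise monotonicity of the overlap and the Clifford-circuit purity $\operatorname{Tr}(\widetilde{\rho}^2)=2^{-N}\sum_P F_s(P)^2\operatorname{Tr}(P\rho)^2$. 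The paper's sign invariant is weaker but suffices for the lower bound.

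Two small remarks. Both arguments use the Clifford property only for layers $2,\dots,s$, since $\sigma_1=\rho_1$ for any $G_1$. Your aside about stabilizer states can simply be deleted.
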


\begin{proof}
Using the PTM formalism, any state $\rho$ can be
expressed in terms of its Pauli components as
\begin{equation}
    |\rho\rangle\rangle_P
    =
    \frac{1}{2^N}
    \operatorname{Tr}(P\rho),
\end{equation}
where $|\rho\rangle\rangle$ denotes the vectorized form
of the density matrix. The action of any linear map on
$\rho$ is then represented as a matrix--vector
multiplication between the PTM of the map and the
vectorized state~\cite{Roncallo_2023,Hantzko_2025}.

Using this notation, the effective mitigated state after
the first noisy layer, $k=1$, can be written as
\begin{equation}
    |\widetilde{\rho}_1\rangle\rangle_P
    =
    f_{1,P,\mathrm{eff}}
    \frac{1}{2^N}
    \operatorname{Tr}(P\rho_1)
    =
    f_{1,P,\mathrm{eff}}
    |\rho_1\rangle\rangle_P,
\end{equation}
where $\rho_1$ is the ideal state obtained after applying
the first layer of noiseless gates. We can now compute the
DFE overlap between the effective mitigated and ideal
states after the first layer:
\begin{align}
    \operatorname{Tr}(\widetilde{\rho}_1\rho_1)
    &=
    2^N
    \langle\langle\rho_1
    |
    \widetilde{\rho}_1
    \rangle\rangle
    \nonumber\\
    &=
    \frac{2^N}{4^N}
    \sum_{P\in\mathcal{P}_N}
    f_{1,P,\mathrm{eff}}
    \operatorname{Tr}(P\rho_1)^2
    \nonumber\\
    &\geq
    \frac{2^N}{4^N}
    \sum_{P\in\mathcal{P}_N}
    \operatorname{Tr}(P\rho_1)^2
    \nonumber\\
    &=
    \operatorname{Tr}(\rho_1^2)
    =
    1,
\end{align}
where we used $f_{1,P,\mathrm{eff}}\geq1$ for every $P$
and the fact that the ideal state $\rho_1$ is pure.

We have not assumed that the ideal state $\rho_1$ is
generated by a Clifford circuit. Therefore, for one-layer
circuits, the DFE overlap is lower-bounded by one whenever
the effective-fidelity condition above holds. We now
extend the proof to multiple layers under the assumption
that the circuit is composed of Clifford gates. The
argument proceeds by induction on the number of layers.

Assume that, for some layer $k$,
\begin{equation}
    \operatorname{Tr}(\rho_k\widetilde{\rho}_k)
    \geq
    1.
\end{equation}
We define
\begin{align}
    \rho_{k+1}
    &=
    C_{k+1}\rho_kC_{k+1}^\dagger,
    \\
    \widetilde{\rho}_{k+1}
    &=
    \mathcal{D}_{k+1,\mathrm{eff}}
    \left(
        C_{k+1}\widetilde{\rho}_kC_{k+1}^\dagger
    \right).
\end{align}
where $C_{k+1}$ is the Clifford gate in the
$(k+1)$-th layer and
$\mathcal{D}_{k+1,\mathrm{eff}}$ is the corresponding
effective Pauli-diagonal map. We also introduce the
intermediate state
\begin{equation}
    \sigma_{k+1}
    =
    C_{k+1}
    \widetilde{\rho}_k
    C_{k+1}^\dagger.
\end{equation}
It follows that
\begin{align}
    \operatorname{Tr}(\sigma_{k+1}\rho_{k+1})
    &=
    \operatorname{Tr}
    \!\left(
        C_{k+1}\widetilde{\rho}_kC_{k+1}^\dagger
        C_{k+1}\rho_kC_{k+1}^\dagger
    \right)
    \nonumber\\
    &=
    \operatorname{Tr}(\widetilde{\rho}_k\rho_k)
    \geq
    1,
\end{align}
where we used cyclicity of the trace and the induction
hypothesis.

In PTM notation, this becomes
\begin{equation}
    \operatorname{Tr}(\sigma_{k+1}\rho_{k+1})
    =
    2^N
    \sum_{P\in\mathcal{P}_N}
    |\sigma_{k+1}\rangle\rangle_P
    |\rho_{k+1}\rangle\rangle_P
    \geq
    1.
\end{equation}
Since the effective map at layer $k+1$ is diagonal in the
Pauli basis,
\begin{equation}
    |\widetilde{\rho}_{k+1}\rangle\rangle_P
    =
    f_{k+1,P,\mathrm{eff}}
    |\sigma_{k+1}\rangle\rangle_P.
\end{equation}
Therefore,
\begin{equation}
    \operatorname{Tr}
    (\widetilde{\rho}_{k+1}\rho_{k+1})
    =
    2^N
    \sum_{P\in\mathcal{P}_N}
    f_{k+1,P,\mathrm{eff}}
    |\sigma_{k+1}\rangle\rangle_P
    |\rho_{k+1}\rangle\rangle_P.
\end{equation}

In contrast to the one-layer case, we cannot immediately
lower-bound this expression by removing the factors
$f_{k+1,P,\mathrm{eff}}$, since the summands can, in
general, have different signs. The key step is to show
that, for Clifford circuits,
\begin{equation}
    |\sigma_{k+1}\rangle\rangle_P
    |\rho_{k+1}\rangle\rangle_P
    \geq
    0
    \qquad
    \forall\,P.
\end{equation}
It then follows that
\begin{align}
    \operatorname{Tr}
    (\widetilde{\rho}_{k+1}\rho_{k+1})
    &=
    2^N
    \sum_{P\in\mathcal{P}_N}
    f_{k+1,P,\mathrm{eff}}
    |\sigma_{k+1}\rangle\rangle_P
    |\rho_{k+1}\rangle\rangle_P
    \nonumber\\
    &\geq
    2^N
    \sum_{P\in\mathcal{P}_N}
    |\sigma_{k+1}\rangle\rangle_P
    |\rho_{k+1}\rangle\rangle_P
    \nonumber\\
    &\geq
    1.
\end{align}

It remains to prove the componentwise sign condition. We
show this by induction. For the base case,
$\widetilde{\rho}_0=\rho_0$ implies
$\sigma_1=\rho_1$, and therefore
\begin{equation}
    |\sigma_1\rangle\rangle_P
    |\rho_1\rangle\rangle_P
    \geq
    0
    \qquad
    \forall\,P.
\end{equation}
For the induction step, assume that, for some $k$,
\begin{equation}
    |\sigma_k\rangle\rangle_P
    |\rho_k\rangle\rangle_P
    \geq
    0
    \qquad
    \forall\,P,
\end{equation}
or equivalently,
\begin{equation}
    \operatorname{Tr}(P\sigma_k)
    \operatorname{Tr}(P\rho_k)
    \geq
    0
    \qquad
    \forall\,P.
\end{equation}
Since
\begin{equation}
    |\widetilde{\rho}_k\rangle\rangle_P
    =
    f_{k,P,\mathrm{eff}}
    |\sigma_k\rangle\rangle_P
\end{equation}
and $f_{k,P,\mathrm{eff}}\geq1$, it follows that
\begin{equation}
    \operatorname{Tr}(P\widetilde{\rho}_k)
    \operatorname{Tr}(P\rho_k)
    \geq
    0
    \qquad
    \forall\,P.
\end{equation}

We now examine the states after applying the Clifford gate
$C_{k+1}$:
\begin{align}
    &\operatorname{Tr}
    \!\left(
        C_{k+1}
        \widetilde{\rho}_k
        C_{k+1}^\dagger
        P
    \right)
    \operatorname{Tr}
    \!\left(
        C_{k+1}
        \rho_k
        C_{k+1}^\dagger
        P
    \right)
    \nonumber\\
    &=
    \operatorname{Tr}
    \!\left(
        \widetilde{\rho}_k
        C_{k+1}^\dagger
        PC_{k+1}
    \right)
    \operatorname{Tr}
    \!\left(
        \rho_k
        C_{k+1}^\dagger
        PC_{k+1}
    \right).
\end{align}
Because $C_{k+1}$ is a Clifford gate, there exist a
phase-free Pauli operator $P'\in\mathcal{P}_N$ and a sign
$s_P\in\{-1,1\}$ such that
\begin{equation}
    C_{k+1}^\dagger
    PC_{k+1}
    =
    s_PP'.
\end{equation}
Consequently,
\begin{align}
    &\operatorname{Tr}
    \!\left(
        \widetilde{\rho}_k
        C_{k+1}^\dagger
        PC_{k+1}
    \right)
    \operatorname{Tr}
    \!\left(
        \rho_k
        C_{k+1}^\dagger
        PC_{k+1}
    \right)
    \nonumber\\
    &=
    s_P^2
    \operatorname{Tr}(\widetilde{\rho}_kP')
    \operatorname{Tr}(\rho_kP')
    \nonumber\\
    &=
    \operatorname{Tr}(\widetilde{\rho}_kP')
    \operatorname{Tr}(\rho_kP')
    \geq
    0,
\end{align}
where the final inequality follows from the induction
hypothesis. Thus,
\begin{equation}
    |\sigma_{k+1}\rangle\rangle_P
    |\rho_{k+1}\rangle\rangle_P
    \geq
    0
    \qquad
    \forall\,P,
\end{equation}
which completes the induction.

We therefore conclude that, for Clifford circuits,
\begin{equation}
    \operatorname{Tr}(\widetilde{\rho}\rho)
    \geq
    1
\end{equation}
whenever the effective Pauli fidelities satisfy
$f_{k,P,\mathrm{eff}}\geq1$ for every $k$ and $P$. In
particular, this includes concatenated noise models for
which all constituent error sources are consistently
overestimated.
\end{proof}

\section{The DFE overlap asymmetric response beyond Clifford circuits}
\label{app:beyond-clifford}

For non-Clifford circuits, it is useful to express the
effective mitigated state in terms of the coefficients of
the effective Pauli-diagonal maps, following the notation
of Ref.~\cite{carrasco2024}. For notational simplicity, we
regard each layer map as an $N$-qubit map by extending it
with the identity outside its support. The effective map
at layer $k$ can then be written as
\begin{equation}
    \mathcal{D}_{k,\mathrm{eff}}(\cdot)
    =
    \sum_{P_k\in\mathcal{P}_N}
    c_{k,\mathrm{eff}}(P_k)\,
    P_k(\cdot)P_k.
\end{equation}
Let
\begin{equation}
    \rho_0
    =
    \ketbra{\psi_0}{\psi_0}
\end{equation}
be an arbitrary pure initial state. The effective mitigated
state admits the trajectory expansion
\begin{equation}
    \widetilde{\rho}
    =
    \sum_{\vec{P}}
    c_{\mathrm{eff}}(\vec{P})\,
    \ketbra*{\psi(\vec{P})}{\psi(\vec{P})},
\end{equation}
where
\begin{equation}
    c_{\mathrm{eff}}(\vec{P})
    =
    \prod_{k=1}^{s}c_{k,\mathrm{eff}}(P_k),
\end{equation}
and
\begin{equation}
    \ket*{\psi(\vec{P})}
    =
    P_sG_s\cdots P_2G_2P_1G_1
    \ket{\psi_0}.
\end{equation}
For $\vec{I}=(I,\ldots,I)$, the corresponding trajectory
is the ideal output state,
\begin{equation}
    \rho
    =
    \ketbra*{\psi(\vec{I})}{\psi(\vec{I})}.
\end{equation}
The DFE overlap is therefore
\begin{equation}
    \label{eq: fidelity pauli coefficients}
    \operatorname{Tr}(\widetilde{\rho}\rho)
    =
    \sum_{\vec{P}}
    c_{\mathrm{eff}}(\vec{P})\,
    \left|
        \braket*{\psi(\vec{I})}{\psi(\vec{P})}
    \right|^2.
\end{equation}

The coefficient associated with the identity trajectory
satisfies
\begin{equation}
    c_{\mathrm{eff}}(\vec{I})
    =
    \prod_{k=1}^{s}c_{k,\mathrm{eff}}(I),
\end{equation}
where
\begin{equation}
    c_{k,\mathrm{eff}}(I)
    =
    \frac{1}{4^N}
    \sum_{P\in\mathcal{P}_N}
    f_{k,P,\mathrm{eff}}.
\end{equation}
It follows that
\begin{equation}
    \label{eq: c_eff directionality}
    \begin{aligned}
        c_{\mathrm{eff}}(\vec{I})
        &\geq 1,
        &&\text{if }
        f_{k,P,\mathrm{eff}}\geq1
        \quad\forall\,k,P,\\
        c_{\mathrm{eff}}(\vec{I})
        &\leq 1,
        &&\text{if }
        0<f_{k,P,\mathrm{eff}}\leq1
        \quad\forall\,k,P.
    \end{aligned}
\end{equation}
Thus, the identity-trajectory coefficient exhibits the
expected asymmetric response. When several constituent
error sources are concatenated, the effective-fidelity
inequalities in Eq.~\eqref{eq: c_eff directionality} are
preserved when all constituent sources are consistently
miscalibrated in the same direction, as shown in
Sec.~\ref{sec: ptm general}. However,
Eq.~\eqref{eq: c_eff directionality} concerns only the
coefficient of the identity trajectory. It does not, by
itself, determine the sign of the complete DFE overlap in
Eq.~\eqref{eq: fidelity pauli coefficients}, since the
remaining trajectory coefficients can have either sign
and their contributions can accumulate. We therefore use
$c_{\mathrm{eff}}(\vec I)$ only as motivation and
establish the local response of the complete overlap
through the perturbative result below.

\begin{lemma}[Local overestimation response beyond Clifford circuits]
Consider a fixed arbitrary unitary circuit with a finite
number $s$ of layers acting on the arbitrary pure initial
state $\rho_0$. Let
\begin{equation}
    \rho_k
    =
    G_k\cdots G_1\rho_0
    G_1^\dagger\cdots G_k^\dagger
\end{equation}
denote the ideal state after layer $k$, with ideal final
state $\rho=\rho_s$.

Let $\lambda\geq0$ parameterize the magnitude of the
miscalibration along a fixed overestimation direction,
with $\lambda=0$ corresponding to perfect calibration.
Assume that the residual maps are trace preserving and
Pauli diagonal and that their effective Pauli fidelities
are right differentiable at $\lambda=0$, with
\begin{equation}
    f_{k,P,\mathrm{eff}}(\lambda)
    =
    1+\lambda r_{k,P}
    +o(\lambda),
    \label{eq: perturbative effective fidelities}
\end{equation}
where
\begin{equation}
    r_{k,P}
    =
    \left.
    \frac{\mathrm{d}}{\mathrm{d}\lambda}
    f_{k,P,\mathrm{eff}}(\lambda)
    \right|_{\lambda=0^+}.
\end{equation}
Assume further that
\begin{equation}
    f_{k,P,\mathrm{eff}}(\lambda)
    \geq
    1
    \qquad
    \forall\,k,P
\end{equation}
for sufficiently small positive $\lambda$.

If there exists at least one pair $(k,P)$ such that
\begin{equation}
    r_{k,P}>0
    \qquad\text{and}\qquad
    \operatorname{Tr}(P\rho_k)\neq0,
    \label{eq: strict perturbative condition}
\end{equation}
then there exists a $\lambda_0>0$ such that
\begin{equation}
    \operatorname{Tr}
    \!\left[
        \widetilde{\rho}(\lambda)\rho
    \right]
    >
    1
    \qquad
    \forall\,0<\lambda<\lambda_0.
\end{equation}
\end{lemma}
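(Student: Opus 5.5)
The plan is to treat $\operatorname{Tr}[\widetilde{\rho}(\lambda)\rho]$ as a function of $\lambda$ through the finitely many fidelities $f_{k,P,\mathrm{eff}}(\lambda)$. I will compute its right derivative at $\lambda=0$, identify it with the coefficient $A$ of Eq.~\eqref{eq: perturbative DFE coefficient}, and conclude from $A>0$.

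\textbf{Step 1 (PTM expansion).} In the PTM picture, each ideal layer is an orthogonal matrix $\mathcal{G}_k$ and each residual map is a diagonal matrix $F_k(\lambda)=\mathrm{diag}(f_{k,P,\mathrm{eff}}(\lambda))$. This gives
\begin{equation}
\operatorname{Tr}[\widetilde{\rho}(\lambda)\rho]
=2^N\langle\langle\rho|F_s\mathcal{G}_s\cdots F_1\mathcal{G}_1|\rho_0\rangle\rangle .
\end{equation}
The right-hand side is a multilinear polynomial in the finitely many fidelities. Since each factor obeys $F_k(\lambda)=\mathbb{1}+\lambda R_k+o(\lambda)$ with $R_k=\mathrm{diag}(r_{k,P})$, and the number of factors is finite, I expand the product to first order. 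The zeroth-order term is the ideal overlap $\operatorname{Tr}(\rho^2)=1$. The first-order term is a sum over $k$ with a single $R_k$ inserted:
\begin{equation}
2^N\sum_{k}\langle\langle\rho|\mathcal{G}_s\cdots\mathcal{G}_{k+1}R_k\,\mathcal{G}_k\cdots\mathcal{G}_1|\rho_0\rangle\rangle .
\end{equation}
All products of two or more $(F_k-\mathbb{1})$ factors, together with the $o(\lambda)$ remainders, are $o(\lambda)$ because every matrix entry involved is bounded.

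\textbf{Step 2 (identify $A$).} The ideal layers are unitary, so $\langle\langle\rho|\mathcal{G}_s\cdots\mathcal{G}_{k+1}=\langle\langle\rho_k|$. The $k$-th first-order term therefore equals
\begin{equation}
2^N\sum_P r_{k,P}\,|\rho_k\rangle\rangle_P^2=2^{-N}\sum_P r_{k,P}\operatorname{Tr}(P\rho_k)^2 .
\end{equation}
Summing over $k$ gives $\operatorname{Tr}[\widetilde{\rho}(\lambda)\rho]=1+\lambda A+o(\lambda)$, which is exactly the expansion quoted in the main text.

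\textbf{Step 3 (sign of $A$).} The hypothesis $f_{k,P,\mathrm{eff}}(\lambda)\ge 1$ for all small $\lambda>0$, combined with $f_{k,P,\mathrm{eff}}(0)=1$, gives $r_{k,P}\ge 0$ as the limit of the nonnegative difference quotients. Hence every term of $A$ is nonnegative. The pair in Eq.~\eqref{eq: strict perturbative condition} contributes the strictly positive term $2^{-N}r_{k,P}\operatorname{Tr}(P\rho_k)^2$, so $A>0$. By the definition of $o(\lambda)$, choose $\lambda_0$ such that the remainder is smaller than $\lambda A/2$ in absolute value for $0<\lambda<\lambda_0$. Then the overlap is at least $1+\lambda A/2>1$ on that interval.

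\textbf{Main obstacle.} The delicate step is controlling the remainder in Step 1 uniformly. We only assume right differentiability, not smoothness, so I will write $F_k=\mathbb{1}+\lambda R_k+E_k(\lambda)$ with $\|E_k(\lambda)\|=o(\lambda)$ entrywise. I will then bound every expanded term containing an $E_k$ or two $\lambda R$ factors using the operator norm: the orthogonal PTMs have norm $1$, and for small $\lambda$ the diagonal factors have norm at most $2$. This yields a remainder bounded by a constant depending on $s$ times $o(\lambda)$.
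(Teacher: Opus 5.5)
Your proposal is correct and follows essentially the paper's argument. You expand the product of layer maps to first order so that each linear term is a single-layer insertion of $\mathcal{L}_k$ (your $R_k$), reduce it via the suffix evolution $\rho=V_k\rho_kV_k^\dagger$ to $2^{-N}\sum_P r_{k,P}\operatorname{Tr}(P\rho_k)^2$, and conclude from $r_{k,P}\ge0$ with the $\lambda A/2$ remainder choice. Your PTM-matrix formulation with the explicit remainder bound only makes rigorous the $o(\lambda)$ control that the paper asserts without detail.
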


\begin{proof}
Since the Pauli basis is finite, the corresponding
effective map has the expansion
\begin{equation}
    \mathcal{D}_{k,\mathrm{eff}}^{(\lambda)}
    =
    \mathcal{I}
    +\lambda\mathcal{L}_k
    +o(\lambda),
\end{equation}
where
\begin{equation}
    \mathcal{L}_k(P)
    =
    r_{k,P}P.
\end{equation}
Trace preservation gives $r_{k,I}=0$. Along a
consistent-overestimation direction, the effective
fidelities satisfy
$f_{k,P,\mathrm{eff}}(\lambda)\geq1$ for sufficiently
small positive $\lambda$. Therefore, their right
derivatives obey
\begin{equation}
    r_{k,P}\geq0
    \qquad
    \forall\,k,P.
    \label{eq: perturbative overestimation condition}
\end{equation}

For each $k$, define the suffix unitary
\begin{equation}
    V_k
    =
    G_sG_{s-1}\cdots G_{k+1},
\end{equation}
with $V_s=I$. Applying the remaining ideal layers to
$\rho_k$ gives the final ideal state:
\begin{equation}
    \rho
    =
    V_k\rho_kV_k^\dagger
    \qquad
    \forall\,k\in\{1,\ldots,s\}.
    \label{eq: ideal suffix evolution}
\end{equation}
For example, $V_s=I$ gives $\rho=\rho_s$, while
$V_{s-1}=G_s$ gives
$\rho=G_s\rho_{s-1}G_s^\dagger$.

For a fixed circuit with a finite number of layers,
expanding the complete effective mitigated circuit to
first order in $\lambda$ gives
\begin{equation}
    \widetilde{\rho}(\lambda)
    =
    \rho
    +
    \lambda
    \sum_{k=1}^{s}
    V_k\mathcal{L}_k(\rho_k)V_k^\dagger
    +
    o(\lambda).
    \label{eq: perturbative mitigated state}
\end{equation}
Each term in the sum corresponds to taking the
first-order contribution from the residual map at layer
$k$ and the zeroth-order identity contribution from the
residual maps at all other layers. Contributions involving
first-order deviations at two or more layers are of higher
order and are included in the remainder.

Taking the overlap with the ideal state and using
Eq.~\eqref{eq: ideal suffix evolution}, cyclicity of the
trace, and $\operatorname{Tr}(\rho^2)=1$, yields
\begin{align}
    \operatorname{Tr}\!\left[
        \widetilde{\rho}(\lambda)\rho
    \right]
    &=
    1
    +
    \lambda
    \sum_{k=1}^{s}
    \operatorname{Tr}\!\left[
        \rho_k\mathcal{L}_k(\rho_k)
    \right]
    +
    o(\lambda).
    \label{eq: perturbative overlap intermediate}
\end{align}
Using the Pauli expansion
\begin{equation}
    \rho_k
    =
    \frac{1}{2^N}
    \sum_{P\in\mathcal{P}_N}
    \operatorname{Tr}(P\rho_k)\,P
\end{equation}
and the diagonal action
$\mathcal{L}_k(P)=r_{k,P}P$, we obtain
\begin{equation}
    \operatorname{Tr}\!\left[
        \rho_k\mathcal{L}_k(\rho_k)
    \right]
    =
    \frac{1}{2^N}
    \sum_{P\in\mathcal{P}_N}
    r_{k,P}\,
    \operatorname{Tr}(P\rho_k)^2.
\end{equation}
Therefore,
\begin{equation}
    \operatorname{Tr}\!\left[
        \widetilde{\rho}(\lambda)\rho
    \right]
    =
    1+\lambda A+o(\lambda),
    \label{eq: perturbative DFE response}
\end{equation}
with
\begin{equation}
    A
    =
    \frac{1}{2^N}
    \sum_{k=1}^{s}
    \sum_{P\in\mathcal{P}_N}
    r_{k,P}\,
    \operatorname{Tr}(P\rho_k)^2.
    \label{eq:perturbative-DFE-coefficient-appendix}
\end{equation}

In the consistent-overestimation case, Eq.~\eqref{eq: perturbative overestimation condition} implies $A\geq0$, since every term in Eq.~\eqref{eq:perturbative-DFE-coefficient-appendix} is nonnegative. Under the additional condition \eqref{eq: strict perturbative condition}, we have $A>0$. By the definition of $o(\lambda)$, there exists a $\lambda_0>0$ such that
\begin{equation}
    \left|o(\lambda)\right|
    \leq
    \frac{\lambda A}{2}
    \qquad
    \forall\,0<\lambda<\lambda_0.
\end{equation}
It follows that
\begin{equation}
    \operatorname{Tr}\!\left[
        \widetilde{\rho}(\lambda)\rho
    \right]
    \geq
    1+\frac{\lambda A}{2}
    >
    1
    \qquad
    \forall\,0<\lambda<\lambda_0.
\end{equation}
\end{proof}

If $A=0$, the first-order correction vanishes and the perturbative expansion does not determine the sign of the DFE overlap in the overestimation case. Higher-order terms must then be considered. \section{Lower bounding the DFE purity in the overestimation case}
\label{appendix: purity proof}

The effective-fidelity condition in the lemma below holds for an individually overestimated error source and is preserved when several constituent error sources are consistently overestimated.

\begin{lemma}[DFE-purity bound under overestimation]
Consider an arbitrary unitary circuit acting on a pure initial state. Assume that every residual map is Pauli diagonal and that its effective Pauli fidelities satisfy
\begin{equation}
    f_{k,P,\mathrm{eff}}
    \geq
    1
    \qquad
    \forall\,k,P.
\end{equation}
Then the DFE purity of the effective mitigated state cannot decrease between consecutive layers and the final effective mitigated state satisfies
\begin{equation}
    \mathrm{Tr}(\widetilde{\rho}^2)
    \geq
    1.
\end{equation}
\end{lemma}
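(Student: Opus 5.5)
The proof will mirror the layerwise argument of the underestimation lemma in Appendix~\ref{app:underestimation_overlap}, with the inequality reversed. As there, let $\widetilde{\rho}_{k-1}$ denote the effective mitigated operator after layer $k-1$, with $\widetilde{\rho}_0=\rho_0$ pure. Define the intermediate operator $\sigma_k=G_k\widetilde{\rho}_{k-1}G_k^\dagger$, so that $\widetilde{\rho}_k=\mathcal{D}_{k,\mathrm{eff}}(\sigma_k)$. Because the residual map is Pauli diagonal, its Pauli components scale as
\begin{equation}
    \operatorname{Tr}(P\widetilde{\rho}_k)=f_{k,P,\mathrm{eff}}\operatorname{Tr}(P\sigma_k)
\end{equation}
for every $P\in\mathcal{P}_N$.

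The key step is to expand the Hilbert--Schmidt norm in the orthogonal Pauli basis:
\begin{equation}
    \operatorname{Tr}(\widetilde{\rho}_k^2)=\frac{1}{2^N}\sum_{P\in\mathcal{P}_N}f_{k,P,\mathrm{eff}}^2\operatorname{Tr}(P\sigma_k)^2 .
\end{equation}
The assumption $f_{k,P,\mathrm{eff}}\geq1$ gives $f_{k,P,\mathrm{eff}}^2\geq1$. Each term $\operatorname{Tr}(P\sigma_k)^2$ is nonnegative because $\sigma_k$ is Hermitian: it is a real linear combination of the Hermitian noisy states with real quasiprobability coefficients, conjugated by a unitary. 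Removing the squared fidelities therefore only lowers the sum, which yields
\begin{equation}
    \operatorname{Tr}(\widetilde{\rho}_k^2)\geq\frac{1}{2^N}\sum_{P}\operatorname{Tr}(P\sigma_k)^2=\operatorname{Tr}(\sigma_k^2).
\end{equation}
Unitary invariance of the trace then gives $\operatorname{Tr}(\sigma_k^2)=\operatorname{Tr}(\widetilde{\rho}_{k-1}^2)$. This establishes the first claim of the statement: the DFE purity cannot decrease between consecutive layers.

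Chaining this monotonicity over $k=1,\ldots,s$ gives $\operatorname{Tr}(\widetilde{\rho}^2)\geq\operatorname{Tr}(\rho_0^2)=1$ for a pure input. Since the same argument holds for any product of fidelity factors that are each at least one, the bound extends to consistently overestimated concatenated sources.

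Unlike the Clifford lemma for the overlap, this argument needs no sign-alignment induction. The purity is a sum of squares, so no cross terms with indefinite sign appear, and that is why the result holds for arbitrary circuits. The only points needing care are the Hermiticity of $\sigma_k$, which makes the squared Pauli components real and nonnegative, and the fact that the Pauli orthogonality identity $\operatorname{Tr}(X^2)=2^{-N}\sum_P\operatorname{Tr}(PX)^2$ holds for any Hermitian operator, not only for states. Neither step is a real obstacle.
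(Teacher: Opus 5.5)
Your proof is correct and follows essentially the same route as the paper: you expand $\operatorname{Tr}(\widetilde{\rho}_k^2)$ in the Pauli basis through the intermediate operator $\sigma_k=G_k\widetilde{\rho}_{k-1}G_k^\dagger$, drop the factors $f_{k,P,\mathrm{eff}}^2\geq1$, and use unitary invariance to chain the layerwise monotonicity back to the pure input. The differences are cosmetic: the paper handles the first layer separately (using $\sigma_1=\rho_1$) instead of starting the chain at $\widetilde{\rho}_0=\rho_0$, and you state explicitly that $\sigma_k$ is Hermitian, which the paper leaves implicit but which is needed for the squared Pauli components to be nonnegative.
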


\begin{proof}
We begin by considering the first layer. The DFE purity of the effective mitigated state after the first mitigated layer is
\begin{align}
    \mathrm{Tr}(\widetilde{\rho}_1^2)
    &=
    \frac{1}{2^N}
    \sum_{P\in\mathcal{P}_N}
    f_{1,P,\mathrm{eff}}^2\,
    \mathrm{Tr}(P\rho_1)^2
    \notag\\
    &\geq
    \frac{1}{2^N}
    \sum_{P\in\mathcal{P}_N}
    \mathrm{Tr}(P\rho_1)^2
    \notag\\
    &=
    \mathrm{Tr}(\rho_1^2)
    =
    1.
\end{align}

Next, we show that the DFE purity cannot decrease after each subsequent layer satisfying the effective-fidelity condition. As in the Clifford proof, we define the fictitious state
\begin{equation}
    \sigma_{k+1}
    =
    G_{k+1}\widetilde{\rho}_kG_{k+1}^\dagger,
\end{equation}
which corresponds to the effective mitigated state after applying the $(k+1)$-th noiseless gate but before applying the $(k+1)$-th effective mitigated map. By cyclicity of the trace,
\begin{equation}
    \mathrm{Tr}(\sigma_{k+1}^2)
    =
    \mathrm{Tr}(\widetilde{\rho}_k^2).
\end{equation}
Then, using the diagonal action of the effective Pauli-diagonal map in the PTM representation,
\begin{align}
    \mathrm{Tr}(\widetilde{\rho}_{k+1}^2)
    &=
    \frac{1}{2^N}
    \sum_{P\in\mathcal{P}_N}
    f_{k+1,P,\mathrm{eff}}^2\,
    \mathrm{Tr}(P\sigma_{k+1})^2
    \notag\\
    &\geq
    \frac{1}{2^N}
    \sum_{P\in\mathcal{P}_N}
    \mathrm{Tr}(P\sigma_{k+1})^2
    \notag\\
    &=
    \mathrm{Tr}(\sigma_{k+1}^2)
    =
    \mathrm{Tr}(\widetilde{\rho}_k^2).
\end{align}
We have therefore proven that the DFE purity of the effective
mitigated state cannot decrease after any layer satisfying
the effective-fidelity condition. Since after the first
layer the DFE purity already satisfies
\begin{equation}
    \mathrm{Tr}(\widetilde{\rho}_1^2)
    \geq
    1,
\end{equation}
it follows that
\begin{equation}
    \mathrm{Tr}(\widetilde{\rho}^2)
    \geq
    1
\end{equation}
for arbitrary unitary circuits whenever $f_{k,P,\mathrm{eff}}\geq1$ for every $k$ and $P$. In particular, this includes concatenated noise models for which all constituent error sources are consistently overestimated.
\end{proof}
\section{Extended numerical results for random Matchgate circuits}
\label{appendix:extended_MG_numerics}

To assess the robustness of the numerical observations across different circuit structures and input states, we extend the analysis to an ensemble of ten independently generated Matchgate circuit instances. For each instance, the circuit $U$ contains eight two-qubit Matchgates of the form
\begin{equation}
    M(\beta_k,\tau_k)
    =
    e^{i\beta_k X_iX_j}e^{i\tau_k Y_iY_j},
\end{equation}
with the angles and the pairs of neighboring qubits chosen randomly, as in the main-text example. The protocol is applied to the corresponding $UU^\dagger$ circuit acting on $N=12$ qubits.

As in the main-text numerical study, all simulations use standard gate-local PEC. None of the sampling-reduction strategies discussed in Sec.~\ref{sec: reducing PEC overhead} are applied in producing the results below.

For the first circuit instance, the input is fixed to the all-zero state $|0\rangle^{\otimes N}$. For each of the remaining nine circuit instances, the input is chosen independently and uniformly at random from the computational-basis product states. Equivalently, each qubit is initialized independently in either $|0\rangle$ or $|1\rangle$ with equal probability. The sampled input state is then kept fixed for all parameter values and repetitions associated with that circuit instance. Since the ideal $UU^\dagger$ circuit returns its input state, every ideal output is a computational-basis product state and is therefore well conditioned with $\alpha=1$.

For every circuit, the true noise parameters are fixed to
\begin{equation}
    p_c=0.005,
    \qquad
    \gamma=0.05.
\end{equation}
The assumed parameters are varied over
\begin{equation}
    \widetilde p_c\in[0,0.015],
    \qquad
    \widetilde\gamma\in[0,0.15],
\end{equation}
using increments $\Delta p_c=0.001$ and
$\Delta\gamma=0.01$. At every point
$(\widetilde p_c,\widetilde\gamma)$, the DFE protocol is repeated ten
independent times using $\epsilon=0.05$ and $\delta=0.1$.

Figure~\ref{fig:extended_numerics_acceptance} shows the empirical acceptance frequency of the overlap test in Eq.~\eqref{eq: protocol}. Each panel corresponds to a different random MG circuit and its associated computational-basis input state. In every instance, the acceptance region is concentrated around the correctly calibrated parameters, marked by the white cross. The curved shape of this region reflects the partial compensation between an overestimation of one noise parameter and an underestimation of the other. Although the precise boundary depends on the circuit instance, the same qualitative acceptance structure is observed for all ten circuits.

\begin{figure*}[t]
    \centering
    \includegraphics[width=\textwidth]
    {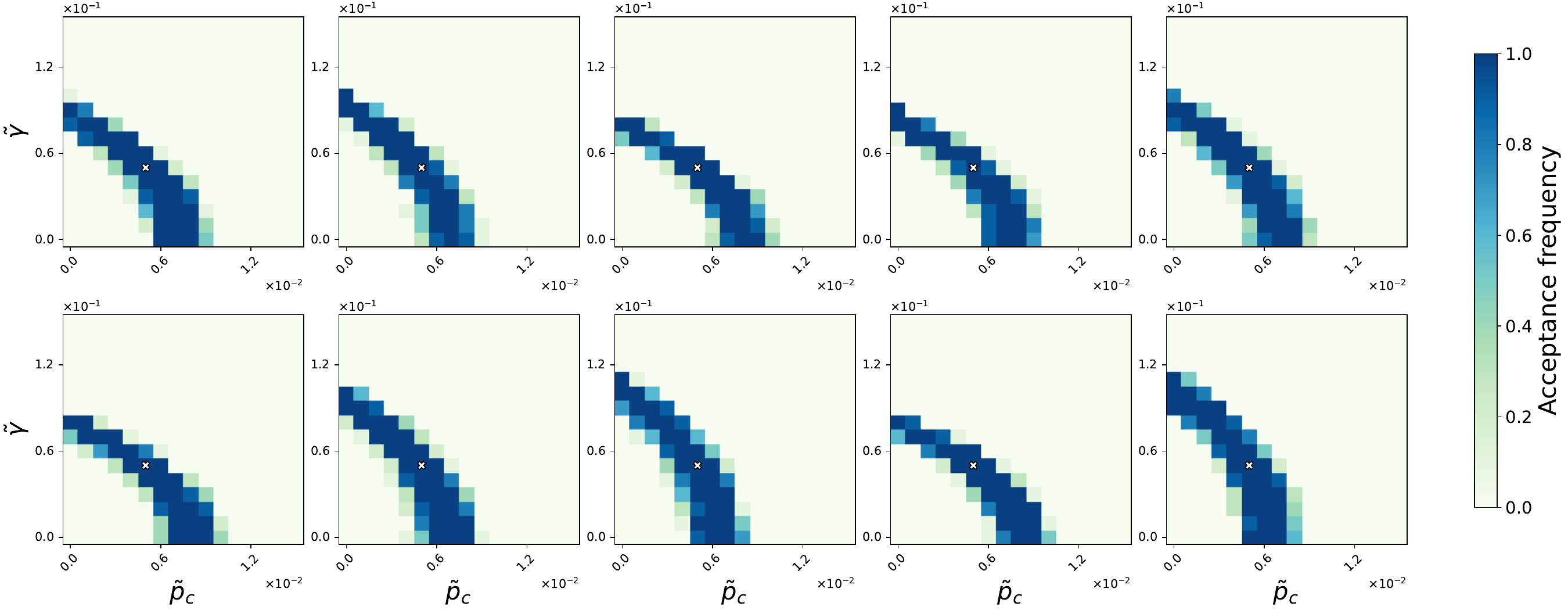}
    \caption{
    Acceptance frequency of the DFE-overlap test for ten independently generated MG circuits. The panels, ordered from left to right in the top row and then in the bottom row, correspond to the ten circuit instances. Each circuit acts on $N=12$ qubits and $U$ contains eight randomly chosen two-qubit MGs. The first circuit uses the all-zero input state, while each of the remaining nine circuits uses an independently sampled random computational-basis product state that is kept fixed throughout the corresponding numerical experiment. For each pair of assumed parameters $(\widetilde p_c,\widetilde\gamma)$, the acceptance frequency is the fraction of ten independent repetitions satisfying $|\widehat{Y}-1|\leq2\epsilon$, with $\epsilon=0.05$ and $\delta=0.1$. The white cross marks the true parameters, $p_c=0.005$ and $\gamma=0.05$.}
    \label{fig:extended_numerics_acceptance}
\end{figure*}

We also examine whether the direction of the DFE-overlap deviation is reproduced across the circuit ensemble. For each repetition, we assign the indicator
\begin{equation}
    d(\widehat{Y})=
    \begin{cases}
        1,   & \widehat{Y}>1+2\epsilon,\\
        1/2, & |\widehat{Y}-1|\leq2\epsilon,\\
        0,   & \widehat{Y}<1-2\epsilon.
    \end{cases}
\end{equation}
The value plotted in Fig.~\ref{fig:extended_numerics_direction} is the average of this indicator over the ten independent repetitions.

Across all circuit instances, consistent underestimation places the DFE overlap below one, whereas consistent overestimation places it above one, outside the statistically inconclusive region. Hybrid miscalibration produces a circuit-dependent transition band in which the two parameter deviations can partially compensate.

\begin{figure*}[t]
    \centering
    \includegraphics[width=\textwidth]
    {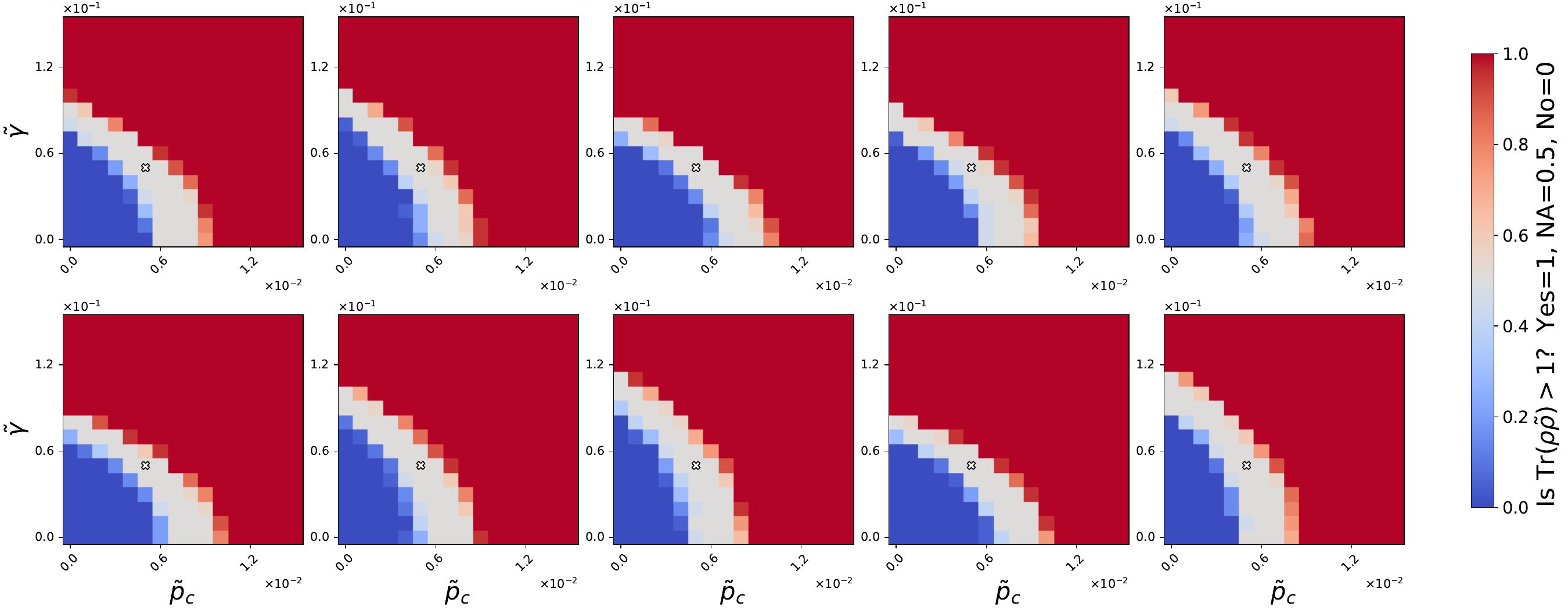}
    \caption{
    Direction of the DFE-overlap deviation for the same ten random MG circuits and computational-basis input states as in Fig.~\ref{fig:extended_numerics_acceptance}. For each pair $(\widetilde p_c,\widetilde\gamma)$, the color represents the indicator $d(\widehat{Y})$, averaged over ten independent repetitions: $d=1$ indicates an overlap statistically above one, $d=0$ indicates an overlap statistically below one, and $d=1/2$ indicates an inconclusive result within $1\pm2\epsilon$. We use $\epsilon=0.05$ and $\delta=0.1$. The white cross marks the true parameters, $p_c=0.005$ and $\gamma=0.05$.}
    \label{fig:extended_numerics_direction}
\end{figure*}

\end{document}